\DeclarePairedDelimiter \norm{\lVert}{\rVert}%
\newtheorem{theorem}{Theorem}
\newtheorem{lemma}{Lemma}
\newtheorem{proposition}{Proposition}
\theoremstyle{remark}
\newtheorem*{remark}{Remark}
\def\@xfootnote[#1]{%
	\protected@xdef\@thefnmark{#1}%
	\@footnotemark\@footnotetext}
\begin{document}

\title{Strong Coordination over Noisy Channels}

\author{ \thanks{This work was supported by NSF grants CCF-1440014, CCF-1439465.}
	\thanks{This work was presented in part at the IEEE International Symposium on Information Theory (ISIT 2017), Aachen, Germany \cite{OVK17} and at the 55th Annual Allerton Conference, Monticello, IL, USA \cite{OKV17}.}
\IEEEauthorblockN{Sarah A. Obead, J\"{o}rg Kliewer\\}
\IEEEauthorblockA{Department of Electrical and Computer Engineering\\
New Jersey Institute of Technology\\
Newark, New Jersey 07102\\
Email:sao23@njit.edu, jkliewer@njit.edu\\}
\and \vspace{3 ex}
\IEEEauthorblockN{Badri N. Vellambi\\}
\IEEEauthorblockA{Research School of Computer Science\\
	Australian National University\\
	Acton, Australia 2601\\
	Email: badri.n.vellambi@ieee.org}
}

\maketitle

\begin{abstract}
	
We study the problem of strong coordination of the actions of two nodes $\mathsf X$ and $\mathsf Y$ that communicate over a discrete memoryless channel (DMC) such that the actions follow a prescribed joint probability distribution. We propose two novel random coding schemes and a polar coding scheme for this noisy strong coordination problem, and derive inner bounds for the respective strong coordination capacity region. The first scheme is a joint coordination-channel coding scheme that utilizes the randomness provided by the communication channel to reduce the amount of local randomness required to generate the sequence of actions at Node $\mathsf Y$. Based on this random coding scheme, we provide a characterization of the capacity region for two special cases of the noisy strong coordination setup, namely, when the actions at Node $\mathsf Y$ are determined by Node $\mathsf X$ and when the DMC is a deterministic channel.
The second scheme exploits separate coordination and channel coding where local randomness is extracted from the channel after decoding. The third scheme is a joint coordination-channel polar coding scheme for strong coordination. We show that polar codes are able to achieve the established inner bound to the strong noisy coordination capacity region and thus provide a constructive alternative to a random coding proof. Our polar coding scheme also offers a constructive solution to a channel simulation problem where a DMC and shared randomness are employed together to simulate another DMC. Finally, by leveraging the random coding results for this problem, we present an example in which the proposed joint scheme is able to strictly outperform the separate scheme in terms of achievable communication rate for the same amount of injected randomness into both systems. Thus, we establish the sub-optimality of the separation of strong coordination and channel coding with respect to the communication rate over the DMC in this problem.

\end{abstract}

\begin{IEEEkeywords}
Strong coordination, joint source-channel coding, channel resolvability, superposition coding, polar codes.
\end{IEEEkeywords}

\section{Introduction}

A fundamental problem in decentralized networks is to coordinate
activities of different nodes with the goal of reaching a state of
agreement. The problem of communication-based coordination of multi-node systems
arises in numerous applications including autonomous robots, smart
traffic control, and distributed computing such as distributed games and
grid computing \cite{cuff2010coordination}. Coordination is understood to be the ability to arrive at a prescribed joint distribution of actions at all nodes in the network.
Several theoretical and applied studies on
multi-node coordination have targeted  questions on  how nodes
exchange information and how their actions can be correlated to achieve a
desired overall behavior. Two types of coordination have been addressed in
the literature -- \emph{empirical} coordination where the normalized histogram of induced
 joint actions is required to be close to a prescribed target distribution, and
\emph{strong} coordination, where the induced sequence of joint actions of all the
nodes is required to be statistically close (i.e., nearly indistinguishable) from a chosen target
probability mass function (pmf).

Recently, a significant amount of work has been devoted to finding the
capacity regions of various coordination network problems based on both
empirical and strong coordination \cite{soljanin2002compressing,cuff2010coordination,cuff2013distributed, BK14, VKB16, bereyhi2013empirical}. Bounds on the capacity region for the point-to-point case  were obtained in \cite{gohari2011generating} under the assumption that the nodes communicate in a bidirectional fashion in order to achieve coordination. A similar framework was adopted and improved in \cite{yassaee2015channel}. In
\cite{BK14,haddadpour2012coordination,bereyhi2013empirical}, the authors addressed inner and outer bounds for the capacity region of a three-terminal network in the presence of a relay. The work of \cite{BK14} was later extended in \cite{bloch2013strong,VKB16} to derive a  precise characterization of the strong coordination region for multi-hop networks. 

While the majority of recent works on coordination have considered noise-free communication channels, coordination over noisy channels has received only little attention in 
the literature so far. However, notable exceptions are \cite{CS11,HYBGA17,CLTB18}. In \cite{CS11}, joint empirical coordination of the channel inputs/outputs of a noisy communication channel with source and reproduction sequences is considered, and in \cite{HYBGA17}, the notion of strong coordination is used to simulate a discrete memoryless channel (DMC) via another channel. Recently, the authors of \cite{CLTB18}
explored the strong coordination variant of the problem investigated in \cite{CS11} when two-sided channel state information is present and side information is available at the decoder.

As an alternative to the impracticalities of random coding, solutions for empirical and strong coordination problems have been proposed based on low-complexity polar-codes introduced by Arikan \cite{ChPolarztion2009Arikan,SPolarztion2010Arikan}. For example, polar coding for
	strong point-to-point coordination is addressed in
	\cite{bloch2012strong,chou2016empirical}, and for empirical coordination
	in cascade networks in \cite{blasco2012polar}, respectively. The
	only existing design of polar codes for the
	noisy empirical coordination case 
	\cite{PCempirical2016} is based on the joint source-channel
	coordination approach in \cite{CS11}. A construction based on polar
        codes for the noisy \emph{strong} coordination problem has been
        first presented in our previous conference work \cite{OKV17}, which is
        part of this  paper.

In this work, we consider the point-to-point coordination setup illustrated in Fig.~\ref{fig:P2PCoordination}, where in contrast to \cite{CS11} and \cite{CLTB18} only source and reproduction sequences at two different nodes ($\mathsf X$ and $\mathsf Y$) are coordinated by means of a suitable communication scheme over a DMC. Specifically, we propose two novel achievable coding schemes for
this noisy coordination scenario, derive inner bounds to the underlying strong coordination
capacity region, and provide the capacity region for
  two special cases of the noisy strong coordination setup. In particular, we characterize the capacity region for the cases when the actions at Node $\mathsf Y$ are determined by Node $\mathsf X$ and when the DMC is deterministic. Finally, we design an explicit low-complexity nested polar coding scheme that achieves the inner bound of the point-to-point noisy coordination capacity region.

 The first scheme is a joint coordination channel coding
scheme that utilizes randomness provided by the DMC to reduce the local
randomness required in generating the action sequence at Node $\mathsf Y$ (see
Fig.~\ref{fig:P2PCoordination}). Even though the proposed joint scheme is related to the
scheme in \cite{HYBGA17}, the presented scheme
exhibits a significantly different codebook construction adapted to our coordination
framework. Our scheme requires the quantification of the amount of common
randomness shared by the two nodes as well as the local randomness at each
of the two nodes. To this end, we propose a solution that achieves
  strong coordination over noisy channels via the soft covering principle
  \cite{cuff2013distributed}. Unlike to solutions inspired by
  random-binning techniques, a soft covering based solution is able to quantify the local randomness
  required at the encoder and decoder to generate the correlated action
  sequences. Note that quantifying the amount of local randomness is absent
from the analyses in both \cite{HYBGA17} and \cite{CLTB18}. 
Our second achievable scheme exploits separate coordination and channel coding where local randomness is extracted from the channel after decoding. The third scheme is a joint coordination-channel polar coding scheme that employs nested
codebooks similar to polar codes for the broadcast
channel \cite{PC_BCGoela2015}. We show that our proposed construction provides an equivalent constructive alternative for strong coordination over noisy channels. Here, by equivalent we mean that for every rate point for which one can devise a random joint coordination-channel code, one can also devise a polar coding scheme with significantly lower encoding and decoding complexity. Also, our proposed polar coding scheme employs the soft covering principle \cite{chou2016empirical} and offers a constructive solution to a channel simulation problem, where a DMC is employed to simulate another DMC in the presence of shared randomness~\cite{HYBGA17}.

Lastly, when the noisy channel and the correlation between $X$ to $Y$ are both given by binary symmetric channels (BSCs), we study the effect of the capacity of the noisy channel on
the sum rate of common and local randomness. We conclude this work by showing that a joint coordination-channel coding scheme is able to strictly
outperform a separation-based scheme\footnote{Note that when defining
  separation we also consider the number of channel uses, i.e.,~the
  communication rate, as a quantity of interest besides the communication
  reliability, i.e.,~the probability of decoding error.} in terms of achievable
communication rate if the same amount of randomness is injected into the system in the high-capacity regime for the BSC, i.e.,~$C\rightarrow 1$. This example reveals that separate coordination and channel coding is indeed sub-optimal in the context of strong coordination under the additional constraint of minimizing the communication rate.

The remainder of the paper is organized as follows:
Section~\ref{sec:notation} outlines the notation. The problem of strong
coordination over a noisy communication link is presented in
Section~\ref{sec:problemdef}. We then derive achievability results for the
noisy point-to-point coordination in Section~\ref{sec:JointScheme} for the
joint random-coding scheme and discuss the characterization of the capacity region for two special cases of the noisy strong coordination setup.
Section~\ref{sec:SepCoorRE} presents the separation-based scheme with randomness extraction, and in Section~\ref{sec:PolarCode} we propose a joint
coordination-channel polar code construction and a proof that this
construction achieves the random coding inner bound.
In Section~\ref{sec:example}, we present numerical results for the
  proposed joint and separate coordination and channel coding schemes,
establishing the sub-optimality of the separation-based scheme when the
target joint distribution is described by a doubly binary symmetric source and
the noisy channel by a BSC, respectively.

\section{Notation}\label{sec:notation}
Throughout the paper, we denote a discrete random variable with upper-case
letters (e.g.,~$X$) and its realization with lower case letters
(e.g.,~$x$). The alphabet size of the random variable $X$ is denoted as
$|\mathcal{X}|$. We use {$ \llbracket 1,n \rrbracket$} to denote the set $\{1,\dots,n\}$ for $n\in \mathbb{N}$. Similarly, we use $X_k^{n}$ to denote the finite sequence
$\{X_{k,1},X_{k,2},\ldots,X_{k,n}\}$ and $X_k^{i:j}$ to denote $\{X_{k,i},X_{k,i+1},\ldots,X_{k,j}\}$ such that $1\leq i\leq j \leq n.$ Given ${\cal A}\subset  \llbracket 1,n \rrbracket$, we let $X^n[{\cal A}]$ denote the components $X_i$ such that $i \in {\cal A}.$ 
We use  boldface upper-case letters (e.g.,~${\bf X}$) to denote matrices. We denote the source
polarization transform as  ${\bf G}_n=R{\bf F}^{\otimes n},$ where $R$ is the bit-reversal mapping defined in \cite{ChPolarztion2009Arikan}, 
${\bf F} =\left[\begin{smallmatrix}
1 &  0\\
0 &  0
\end{smallmatrix}\right],$
and ${\bf F}^{\otimes n}$ denotes the $n$-{th} Kronecker product of ${\bf F}.$
The binary entropy function is denoted as $h_2(\cdot)$, and the indicator function by $\mathds{1}(\cdot)$. 
$\mathbb P[A]$ is the
probability that the event $A$ occurs. 
The pmf of the discrete random variable $X$ is denoted as
$P_X(x)$. However, we sometime use the lower case notation (e.g.,~$p_X(x)$)
to distinguish target pmfs or alternative definitions. We let
$\mathbb{D}(P_X(x)||Q_X(x))$ and $\norm{P_X(x)-Q_X(x)}_{{\scriptscriptstyle TV}}$ denote the Kullback-Leibler (KL) divergence and the total variation, respectively, between
two distributions $P_X(x)$ and $Q_X(x)$ defined over an alphabet
$\cal{X}$. Given a pmf $P_X(x)$ we let $\min^{*}_x(P_X) = \min_{x\in {\cal X}} \, \{P_X(x): P_X(x)>0\}$. ${\cal T}_\epsilon^n(P_{X})$ denotes the set of
$\epsilon$-strongly letter-typical sequences of length $n$.
Finally, $P^{n}_{X_1X_2\dots X_k}$ denotes the pmf of $n$ i.i.d.~random variables $X_1,X_2,\dots, X_k$, associated with the pmf $P_{X_1X_2\dots X_k}$.

\section{Problem Definition} \label{sec:problemdef}

\begin{figure}[t!]
	\centering
	{\includegraphics[scale=0.625]{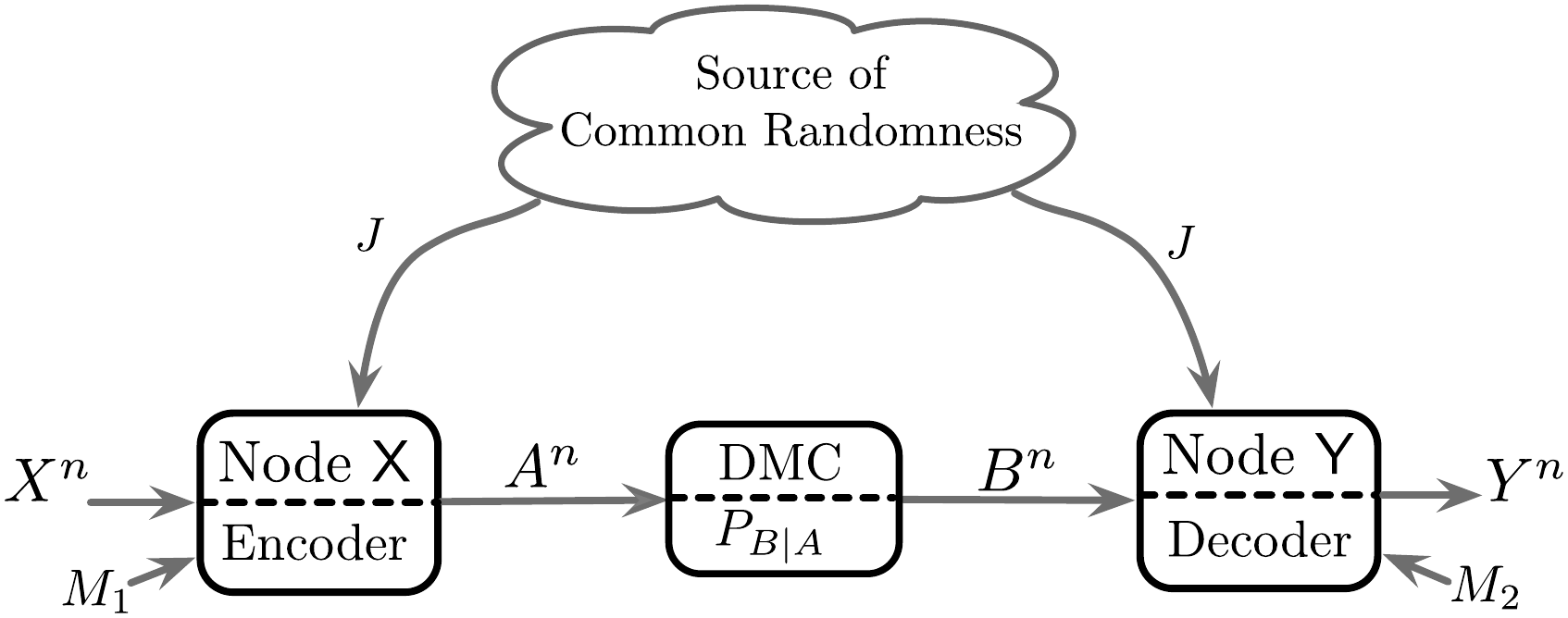}}
	\caption{Point-to-point strong coordination over a DMC.}
	\label{fig:P2PCoordination}
\end{figure}

The point-to-point coordination setup we consider in this work is depicted
in Fig.~\ref{fig:P2PCoordination}. Node $\mathsf X$ receives a sequence of actions
$X^{n}\in \mathcal{X}^{n}$ specified by nature where $X^{n}$ is i.i.d.~according
to a pmf $p_X$. Both nodes have access to shared randomness $J$ at rate
$R_o$ bits/action from a common source, and each node possesses local
randomness $M_\ell$ at rate $\rho_\ell$, $\ell=1,2$. Thus, in designing a \emph{block} scheme to coordinate $n$ actions of the nodes, we assume $J\in  \llbracket 1, 2^{nR_o}\rrbracket $, and $M_\ell\in  \llbracket 1, 2^{n\rho_\ell} \rrbracket $, $k=1,2$, and we wish to communicate a
codeword $A^{n}(I)$ over the DMC $P_{B|A}$ to Node $\mathsf Y$,
where $I$ denotes the (appropriately selected) coordination message. The \emph{codeword} $A^n(I)$ is
constructed based on the input action sequence $X^{n}$, the local randomness
$M_1$ at Node $\mathsf X$, and the common randomness $J$. Node $\mathsf Y$ generates a
sequence of actions $Y^{n}\in \mathcal{Y}^{n}$ based on the received channel output
$B^{n}$, common randomness $J$, and local randomness $M_2$. We assume that the common randomness is independent
of the action specified at Node $\mathsf X$. A tuple $(R_o, \rho_1,\rho_2)$ is deemed \emph{achievable} if for each $\epsilon>0$, there exists $n\in\mathbb{N}$ and a (strong coordination) coding scheme such that the joint
pmf of actions $\hat{P}_{X^{n}Y^{n}}$ induced by this scheme and the $n$-fold product\footnote{This is the joint pmf of $n$ i.i.d. copies of $(X,Y)\sim Q_{XY}$.} of the desired joint pmf  
$Q^{n}_{XY}$ are $\epsilon$\emph{-close} in total variation, i.e.,
\begin{equation}\label{eq:StrngCoorCondtion}
\norm{\hat{P}_{X^nY^n}-Q^{n}_{XY}}_{{\scriptscriptstyle TV}} 
<\epsilon.
\end{equation}

We now present the two achievable coordination schemes.

\section{Joint Coordination Channel Coding}\label{sec:JointScheme}
\subsection{Inner Bound: Achievability}\label{sec:JointSchemeAchievability}
This scheme follows an approach similar to those in \cite{cuff2010coordination,bloch2013strong,BK14,VKB16} where
coordination codes are designed based on allied channel resolvability
problems~\cite{han1993approximation}. The structure of the allied problem pertinent to the coordination problem at hand is given in
Fig.~\ref{fig:StrongCoordinationAllied}. 
The aim of the allied problem is to generate $n$ symbols for two correlated sources
$X^n$ and $Y^n$ whose joint statistics is close to $Q^n_{XY}$ as defined by \eqref{eq:StrngCoorCondtion}. To do so, we employ three independent
and uniformly distributed messages $I$, $K$, and $J$ and two codebooks $\mathscr{A}$
and $\mathscr{C}$ as shown in Fig.~\ref{fig:StrongCoordinationAllied}. To define the two
codebooks, consider auxiliary random variables $A\in\mathcal{A}$ and $C\in\mathcal{C}$ jointly correlated with $(X,Y)$ as  $P_{XYABC}=P_{AC}P_{X|AC}P_{B|A}P_{Y|BC}$ and with marginal distribution $P_{XY} = Q_{XY}.$

\begin{figure}[ht!]
	\centering
	{\includegraphics[scale=0.625]{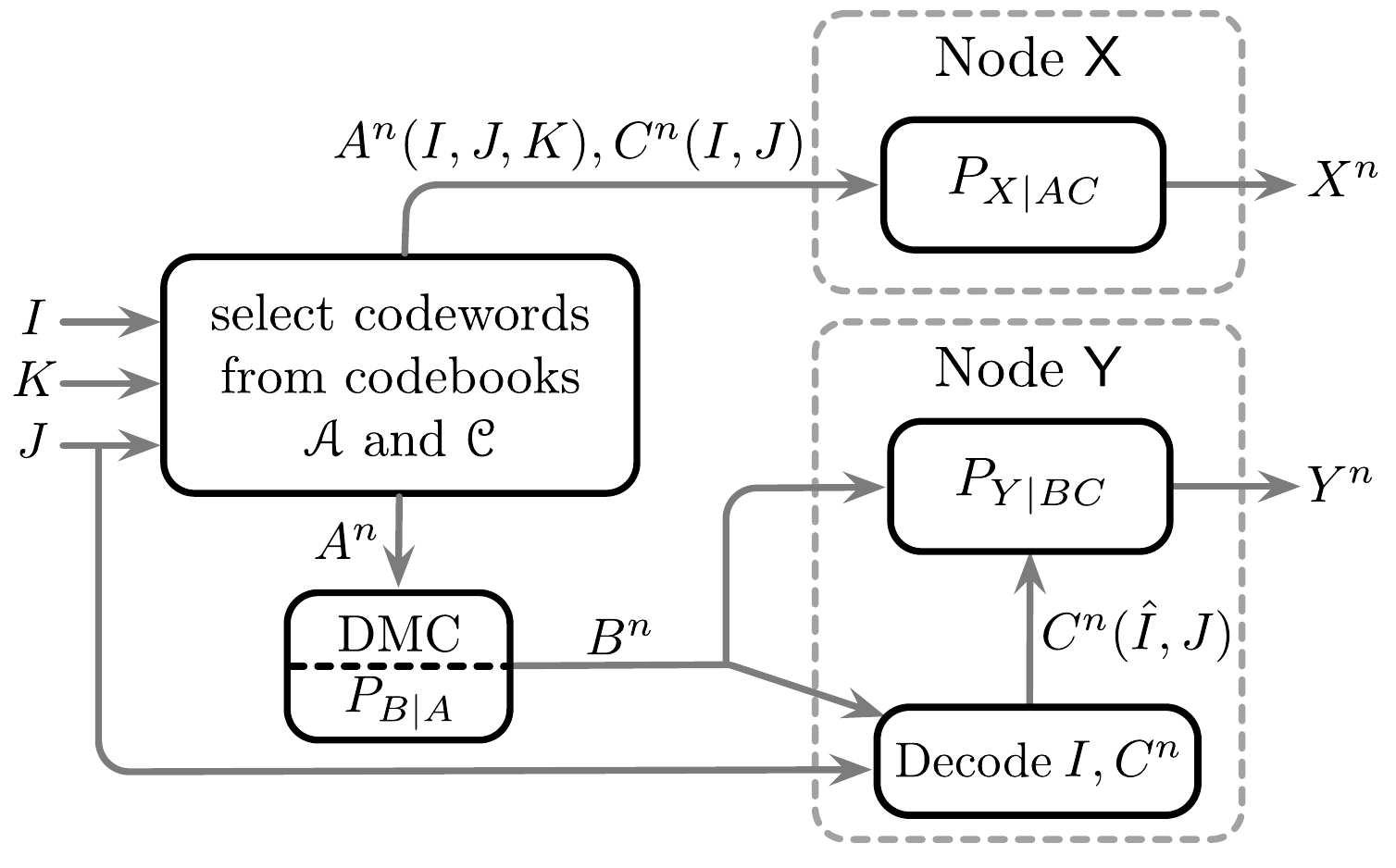}}
	\caption{A joint scheme for the allied problem.} 
	\label{fig:StrongCoordinationAllied}
\end{figure}

From this factorization it can be seen that the scheme consists of two 
\emph{reverse} test channels  $P_{X|AC}$ and $P_{Y|AC}$ used to generate the sources from the
codebooks.
In particular, $P_{Y|AC}=P_{B|A}P_{Y|BC}$, i.e., the randomness of the DMC contributes to the randomized generation of $Y^n$.

Generating $X^n$ and $Y^n$ from $I$, $K$, $J$ represents a complex channel
resolvability problem with the following ingredients:

\begin{enumerate}[i)]
	\item Nested codebooks: Codebook $\mathscr{C}$ of size $2^{n(R_o+R_c)}$ is generated
		i.i.d.~according to pmf $P_{C}$, i.e.,
		$C^n_{ij}\sim  \prod_{l=1}^{n} P_{C}(\cdot)$ for all $(i,j) \in \cal{I}\times \cal{J}$.	Codebook $\mathscr{A}$ is generated by randomly selecting $A^n_{ijk}\sim  \prod_{l=1}^{n}  P_{A|C}(\cdot|{C_{ij}^n})$ for all $(i,j,k) \in \cal{I}\times\cal{J}\times\cal{K}$,  where ${\cal I} \triangleq \llbracket 1,2^{nR_c} \rrbracket$, ${\cal J} \triangleq\llbracket 1,2^{nR_o} \rrbracket$, and ${\cal K} \triangleq \llbracket 1,2^{nR_a} \rrbracket$.
	\item Encoding functions:\\
		$C^n: \llbracket 1, 2^{nR_c} \rrbracket \times \llbracket 1, 2^{nR_o} \rrbracket  \rightarrow \mathcal{C}^n$,\\
		$A^n: \llbracket 1, 2^{nR_c} \rrbracket \times \llbracket 1, 2^{nR_o} \rrbracket \times \llbracket 1, 2^{nR_a} \rrbracket  \rightarrow \mathcal{A}^n$.
	\item Indices: $I,J,K$ are independent and uniformly distributed over ${\cal I}$, ${\cal J}$, and ${\cal K}$ respectively. 
		These indices select the pair of codewords $C^n_{IJ}$ and $A^n_{IJK}$ from codebooks $\mathscr{C}$ and $\mathscr{A}$.
	\item The selected codewords $C^n_{IJ}$ and $A^n_{IJK}$ are then
		passed through DMC $ P_{X|AC}$ at Node
		$\mathsf X$, while at Node $\mathsf Y$, codeword $A^n_{IJK}$ is sent
		through DMC $P_{B|A}$ whose output $B^n$ is used to decode codeword $C^n_{\hat{I}J}$ and both are then passed
		through DMC $P_{Y|BC}$ to obtain $Y^n$. 
\end{enumerate} 

Since the codewords are randomly chosen, the induced joint~pmf of the generated actions and codeword indices in the allied problem is itself a random variable and depends on the random codebook. Given a realization of the codebooks 
\begin{align}
\mathsf C \triangleq (\mathscr{A},\mathscr{C})=\left\{ a_{ijk}^n, c_{ij}^n: \substack{i\in\llbracket 1, 2^{nR_c} \rrbracket\\j\in\llbracket 1, 2^{nR_o} \rrbracket\\ k\in\llbracket 1, 2^{nR_a} \rrbracket}\right\}, \label{eqn-codedefn}
\end{align}
 the code-induced joint~pmf of the actions and codeword indices in the allied problem is given by

\begin{equation}\label{eqn-codeinducedAct}
\mathring{P}_{X^nY^nIJK}(x^n,y^n,i,j,k) \triangleq \frac{P^n_{X|AC}(x^n|a_{ijk}^nc^n_{ij})}{2^{n(R_c+R_o+R_a)}} \Big(\sum_{b^n,\hat i} P^n_{B|A}(b^n|a_{ijk}^n) \mathsf P_{\hat I | B^nJ}(\hat i|b^n,j) P^n_{Y|BC}(y^n|b^nc^n_{\hat ij}) \Big),
\end{equation}	

\noindent where $\mathsf P_{\hat I|B^nJ}$ denotes the operation of decoding the index $I$ using the common randomness and the channel output at Node $\mathsf Y$. Note that the indices for the $C$-codeword that generate $X$ and $Y$ sequences  in \eqref{eqn-codeinducedAct} can be different since the decoding of the index $I$ at Node $\mathsf Y$ may fail. We are done if we accomplish the following tasks: (1) identify conditions on $R_o, R_c,R_a$ under which the code-induced pmf $\mathring P_{X^nY^n}$ is \emph{close} to the design pmf $Q^n_{XY}$ with respect to total variation; and (2) devise a 
strong coordination scheme by inverting the operation at  Node $\mathsf X$ in Fig.~\ref{fig:StrongCoordinationAllied}. This will be done in following sections by subdividing the analysis of the allied problem.\\

\subsubsection{Resolvability constraints} \label{sec:ChResCons}
Assuming that the decoding of $I$ and the codeword $C^n_{IJ}$ occurs perfectly at Node $\mathsf Y$, we see that the code-induced joint pmf induced by the allied scheme for the realization of the codebook $\mathsf C$ in \eqref{eqn-codedefn} is

\begin{equation}\label{eqn-codeinduced}
\check{P}_{X^nY^nIJK}(x^n,y^n,i,j,k)=\frac{P^n_{X|AC}(x^n|a_{ijk}^nc^n_{ij})}{2^{n(R_c+R_o+R_a)}}
\Big( \sum_{b^n} P^n_{B|A}(b^n|a_{ijk}^n)  P^n_{Y|BC}(y^n|b^nc^n_{ij}) \Big).
\end{equation}

The following result quantifies when the above induced distribution is close to the $n$-fold product of the design pmf $Q_{XY}$.  

\begin{lemma}[Resolvability constraints]
	\label{Lma:Reslv}
	The total variation between the code-induced pmf $\check{P}_{X^nY^n}$ in \eqref{eqn-codeinduced} and the desired pmf $Q^n_{XY}$ asymptotically vanishes, i.e., 
	${\mathbb E_\mathsf{C}}\big[ \norm*{\check{P}_{X^nY^n}-Q^n_{XY}}_{{\scriptscriptstyle TV}}\big] \rightarrow 0$ as $n\rightarrow \infty$, if 
	\begin{align}
	R_a+R_o+R_c& > I(XY;AC)\label{eqn-resolve1}\\
	 R_o+R_c &> I(XY;C).\label{eqn-resolve2}
	\end{align}
	 Note that in the above, we let $\mathbb E_\mathsf{C}$ to denote the expectation over the random realization of the codebooks.
\end{lemma}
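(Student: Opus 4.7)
The plan is to reduce the lemma to a nested soft-covering problem and then invoke standard channel-resolvability estimates. First, introduce the ``super-channel''
\begin{equation*}
W_{XY|AC}(x,y|a,c) \triangleq P_{X|AC}(x|a,c)\sum_{b \in \mathcal{B}} P_{B|A}(b|a)\,P_{Y|BC}(y|b,c),
\end{equation*}
so that $\check P_{X^nY^n}(x^n,y^n) = 2^{-n(R_a+R_o+R_c)} \sum_{i,j,k} W^n_{XY|AC}(x^n,y^n|a^n_{ijk},c^n_{ij})$. From the given factorization $P_{XYABC}=P_{AC}P_{X|AC}P_{B|A}P_{Y|BC}$ with marginal $Q_{XY}$, the output of $W$ under input $P_{AC}$ is exactly $Q_{XY}$. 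Hence Lemma~\ref{Lma:Reslv} amounts to a superposition channel-resolvability statement: the outer codebook $\mathscr{C}$ of size $2^{n(R_o+R_c)}$ is i.i.d.\ $P_C$, and the inner codebook $\mathscr{A}$ of size $2^{nR_a}$ per outer codeword is i.i.d.\ $P_{A|C}(\cdot|c^n_{ij})$; we must resolve $Q^n_{XY}$ through $W^n$.

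Next, I would split the error via the triangle inequality against the intermediate pmf
\begin{equation*}
\widetilde P_{X^nY^n}(x^n,y^n) \triangleq \frac{1}{2^{n(R_o+R_c)}}\sum_{i,j} P^n_{XY|C}(x^n,y^n|c^n_{ij}), \quad P_{XY|C}(x,y|c)=\sum_{a\in\mathcal A} P_{A|C}(a|c)\,W_{XY|AC}(x,y|a,c),
\end{equation*}
so that
\begin{equation*}
\mathbb E_{\mathsf C}\bigl[\norm{\check P_{X^nY^n}-Q^n_{XY}}_{{\scriptscriptstyle TV}}\bigr] \leq \mathbb E_{\mathsf C}\bigl[\norm{\check P_{X^nY^n}-\widetilde P_{X^nY^n}}_{{\scriptscriptstyle TV}}\bigr] + \mathbb E_{\mathsf C}\bigl[\norm{\widetilde P_{X^nY^n}-Q^n_{XY}}_{{\scriptscriptstyle TV}}\bigr].
\end{equation*}
The second (outer) term vanishes by the standard soft-covering lemma \cite{cuff2013distributed,han1993approximation} applied to the effective channel $P_{XY|C}$ with i.i.d.\ $P_C$ inputs, under condition \eqref{eqn-resolve2}. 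The first (inner) term is handled by conditioning on $\mathscr{C}$ and applying soft covering to the super-channel $W$ with i.i.d.\ $P_{A|C}(\cdot|c^n_{ij})$ inputs, then averaging the resulting exponential bound over the random $c^n_{ij}$'s to obtain the combined condition \eqref{eqn-resolve1}.

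The main obstacle is precisely this averaging step for the inner layer: a purely codeword-by-codeword application of soft covering inside the triangle inequality yields only the stronger sufficient condition $R_a>I(XY;A|C)$, which is more restrictive than the claimed \eqref{eqn-resolve1}. To recover the sharper sum condition, one should carry out the inner analysis in divergence (for example via a $\chi^2$ or KL bound on $D(\check P_{X^nY^n}\|Q^n_{XY}\,|\,\mathsf{C})$) and exploit the joint expectation over both codebook layers simultaneously, as in the unified channel-resolvability treatments of \cite{cuff2013distributed,VKB16}. Once the divergence bound is in place, Pinsker's inequality together with Jensen's inequality converts it to vanishing expected total variation whenever both \eqref{eqn-resolve1} and \eqref{eqn-resolve2} hold, completing the proof.
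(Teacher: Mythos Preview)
Your proposal is partially on track but takes a detour that you yourself identify as problematic, and your remedy remains a sketch rather than a proof. The paper does \emph{not} split via the triangle inequality through the intermediate pmf $\widetilde P_{X^nY^n}$; instead it bounds $\mathbb E_{\mathsf C}\big[\mathbb{D}(\check P_{X^nY^n}\|Q^n_{XY})\big]$ in one shot. The key mechanism is: after applying iterated expectation and Jensen's inequality to push the log outside, the inner sum over $(i',j',k')$ is partitioned into three cases according to how $(i',j')$ and $k'$ relate to the outer indices $(i,j,k)$. The case $(i',j')\neq(i,j)$ averages to $Q^n_{XY}$ and contributes the harmless ``$+1$'' inside the log; the case $(i',j')=(i,j),\,k'\neq k$ averages (over the random $A$-codeword) to $P^n_{XY|C}(\cdot|c^n_{ij})$ and produces the $I(XY;C)/(R_o+R_c)$ term; the case $(i',j',k')=(i,j,k)$ produces the $I(XY;AC)/(R_a+R_o+R_c)$ term. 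A typicality split then yields exactly \eqref{eqn-resolve1}--\eqref{eqn-resolve2}, and Pinsker plus Jensen finish.

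Your two-layer decomposition, as you note, naturally yields $R_a>I(XY;A|C)$ for the inner layer, which together with \eqref{eqn-resolve2} is strictly more restrictive than \eqref{eqn-resolve1}. Your suggested fix---carry out the inner analysis in divergence and ``exploit the joint expectation over both codebook layers simultaneously''---is correct in spirit and is precisely what the paper does, but you have not supplied the actual device (the three-way index split) that makes the joint analysis work. So the gap in your write-up is not conceptual but operational: the triangle-inequality route should be abandoned, and the single unified KL computation with the index partition should be written out explicitly.
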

\begin{proof}
In the following, we drop the subscripts from the pmfs for simplicity, e.g., $P_{X|AC}^n(x^n|A_{ijk}^n, C_{ij}^n)$ will be denoted by $P(x^n|A_{ijk}^n, C_{ij}^n),$ and $Q^n_{XY}(x^n,y^n)$ will be denoted by $Q(x^n,y^n)$, respectively. Let $R\triangleq R_a+R_c+R_o$,  and choose $\epsilon>0$. Consider the derivation for $\mathbb E_\mathsf{C}\big[\mathbb{D}(\check{P}_{X^nY^n}||Q^n_{XY})\big]$ as follows:
 
\fontsize{10}{1}{
	\begin{equation*}\label{eqn_dbl_y}
	\begin{split} 
	&\mathbb E_\mathsf{C}\big[\mathbb{D}(\check{P}_{X^nY^n}||Q^n_{XY})\big]\\
	&=\mathbb E_\mathsf{C}\Bigg[\sum_{x^n,y^n} \Big( \sum_{i,j,k} \dfrac{P(x^n|A^{n}_{ijk},C^{n}_{ij})P(y^n|A^{n}_{ijk},C^{n}_{ij})}{2^{nR}}\Big)
		\log \Bigg( \sum_{i',j',k'} \dfrac{P(x^n|A^{n}_{i'j'k'},C^{n}_{i'j'})P(y^n|A^{n}_{i'j'k'},C^{n}_{i'j'})}{2^{nR}Q(x^n,y^n)}\Bigg)\Bigg] \\
	&\stackrel{(a)}{=}\!\!\sum_{x^n,y^n} \!\!\mathbb E_{A^{n}_{ijk}\!C^{n}_{ij}} \Bigg[\! \Big( \sum_{i,j,k} \!\dfrac{P(x^n|A^{n}_{ijk},C^{n}_{ij})P(y^n\!|A^{n}_{ijk},C^{n}_{ij})}{2^{nR}}\Big) 
		 \mathbb E_{\mathrm{rest}} \Big[\! \log \Big(\!\! \sum_{i',j',k'} \!\!\dfrac{P(x^n|A^{n}_{i'j'k'},\!C^{n}_{i'j'})P(y^n|A^{n}_{i'j'k'},\!C^{n}_{i'j'})}{2^{nR}Q(x^n,y^n)}\Big) \Big| A^{n}_{ijk}C^{n}_{ij}\Big]\!\Bigg] \\ 
	&\stackrel{(b)}{\leq}\!\! \sum_{x^n,y^n} \!\! \mathbb E_{A^{n}_{ijk}\!C^{n}_{ij}}\Bigg[\! \Big( \sum_{i,j,k} \dfrac{P(x^n|A^{n}_{ijk},C^{n}_{ij})P(y^n|A^{n}_{ijk},C^{n}_{ij})}{2^{nR}}\Big)\!
	\log \Big(\mathbb E_{\mathrm{rest}} \Big[\!\!\sum_{i',j',k'}\! \!\dfrac{P(x^n|A^{n}_{i'j'k'},\!C^{n}_{i'j'})P(y^n|A^{n}_{i'j'k'},\!C^{n}_{i'j'})}{2^{nR}Q(x^n,y^n)} \Big| A^{n}_{ijk}C^{n}_{ij}\Big]\! \Big)\!\Bigg]
	\end{split}
	\end{equation*}
    \begin{equation*} \hspace{-5ex}
	\begin{split}  
	& \stackrel{(c)}{=}\! \sum_{x^n,y^n} \sum_{a^{n}_{ijk},c^{n}_{ij}} \sum_{i,j,k} \dfrac{P(x^n,y^n,a^{n}_{ijk},c^{n}_{ij})}{2^{nR}}\log \Bigg( \sum_{\substack{i',j',k':\\(i',j',k')=(i,j,k)}} \mathbb E_{A^{n}_{ijk}\!C^{n}_{ij}} \Big[ \dfrac{P(x^n|A^{n}_{i'j'k'},C^{n}_{i'j'})P(y^n|A^{n}_{i'j'k'},C^{n}_{i'j'})}{2^{nR}Q(x^n,y^n)}\Big|A^{n}_{ijk}C^{n}_{ij}\Big]\\
	&\hspace{2.6in} +\sum_{\substack{i',j',k':\\(i',j')=(i,j),(k'\neq k)}} \mathbb E_{A^{n}_{ijk}\!C^{n}_{ij}} \Big[ \dfrac{P(x^n|A^{n}_{i'j'k'},C^{n}_{i'j'})P(y^n|A^{n}_{i'j'k'},C^{n}_{i'j'})}{2^{nR}Q(x^n,y^n)}\Big|A^{n}_{ijk}C^{n}_{ij}\Big]\\ 
	&\hspace{2.6in} +\sum_{\substack{i',j',k':\\(i',j')\neq(i,j)}} \mathbb E_{A^{n}_{ijk}\!C^{n}_{ij}} \Big[ \dfrac{P(x^n|A^{n}_{i'j'k'},C^{n}_{i'j'})P(y^n|A^{n}_{i'j'k'},C^{n}_{i'j'})}{2^{nR}Q(x^n,y^n)}\Big|A^{n}_{ijk}C^{n}_{ij}\Big] \Bigg) \\
	&\stackrel{(d)}{=} \sum_{x^n,y^n} \sum_{a^{n}_{ijk},c^{n}_{ij}} \sum_{i,j,k} \dfrac{P(x^n,y^n,a^{n}_{ijk},c^{n}_{ij})}{2^{nR}}\log \Bigg(\dfrac{P(x^n,y^n|a^{n}_{ijk},c^{n}_{ij})}{2^{nR}Q(x^n,y^n)} +\sum_{\substack{i',j',k':\\(i',j')=(i,j),(k'\neq k)}} \dfrac{P(x^n,y^n|c^{n}_{ij})}{2^{nR}Q(x^n,y^n)} \\
	&\hspace{2.6in} +\sum_{\substack{i',j',k':\\(i',j')\neq(i,j)}} \dfrac{Q(x^n,y^n)}{2^{nR}Q(x^n,y^n)} \Bigg) \\ 
	&\stackrel{(e)}{\leq}\sum_{x^n,y^n,a^{n}_{ijk},c^{n}_{ij}} P(x^n,y^n,a^{n}_{ijk},c^{n}_{ij}) \log \Bigg (\dfrac {P(x^n,y^n|a^{n}_{ijk},c^{n}_{ij})}{2^{nR}Q(x^n,y^n)} +(2^{nR_a}) \dfrac{P(x^n,y^n|c^{n}_{ij})}{2^{nR}Q(x^n,y^n)} +1 \Bigg) \\ 
	&\stackrel{(f)}{\leq} \Bigg[ \sum_{(x^n,y^n,a^n,c^n)\in {\cal T}_\epsilon^n( P_{XYAC})}\!\! P(x^n,y^n,a^{n},c^{n}) \log \Bigg (\dfrac{2^{-nH(XY|AC)(1-\epsilon)}} {2^{nR}2^{-nH(XY)(1+\epsilon)}} + \dfrac{2^{-nH(XY|C)(1-\epsilon)}}{2^{n(R_o+R_c)}2^{-nH(XY)(1+\epsilon)}} + 1 \Bigg)\Bigg]\\ 
	&\hspace{3.8in} + \mathbb P\big((x^n,y^n,a^{n},c^{n}) \notin {\cal T}_\epsilon^n( P_{XYAC}) \big)\log(2\mu_{XY}^{-n}+1)\\
	&\stackrel{(g)}{\leq}\Bigg[ \sum_{(x^n,y^n,a^n,c^n)\in {\cal T}_\epsilon^n( P_{XYAC})} P(x^n,y^n,a^{n},c^{n}) \log \Bigg(\dfrac{2^{n(I(XY;AC)+\delta(\epsilon))}}{2^{nR}}+ \dfrac{2^{n(I(XY;C)+\delta(\epsilon))}}{2^{n(R_o+R_c)}}+1 \Bigg) \Bigg]\\ 
	&\hspace{3.8in} + \big(2{\cal |X||Y||A||C|}e^{-n\epsilon^2\mu_{XYAC}}\big)\log(2\mu_{XY}^{-n}+1)\\ 
	&\stackrel{(h)}{\leq} \epsilon'.
	\end{split}
	\end{equation*} 
	\vspace*{2pt}
}
\noindent In this argument:
\begin{itemize} 
\item[($a$)] follows from the law of iterated expectations. Note that we have used $(a^{n}_{ijk},c^{n}_{ij})$ to denote the codewords corresponding to the indices $(i,j,k)$, and $(a^{n}_{i'j'k'},c^{n}_{i'j'})$ to denote the codewords corresponding to the indices $(i',j',k')$.
\item[($b$)] follows from Jensen's inequality \cite{EoIT:2006}.
\item[($c$)] follows from dividing the inner summation over the indices $(i',j',k')$ into three subsets based on the indices $(i,j,k)$ from the outer summation.
\item[($d$)] follows from taking the expectation within the subsets in (c) such that when
\begin{itemize}
	\item $(i',j')=(i,j),(k'\neq k)$: $a^{n}_{i'j'k'}$ is conditionally independent of $a^{n}_{ijk}$ following the nature of the codebook construction (i.e., i.i.d.~at random);
	\item $(i',j')\neq(i,j)$: both codewords ($a^{n}_{ijk},c^{n}_{ij}$) are independent of $(a^{n}_{i'j'k'},c^{n}_{i'j'})$ regardless of the value of $k$. As a result, the expected value of the induced distribution with respect to the input codebooks is the desired distribution $Q^n_{XY}$ \cite{cuff2010coordination}.
\end{itemize} 
\item[($e$)] follows from
\begin{itemize}
	\item $(i',j',k')=(i,j,k)$: there is only one pair of codewords $(a^{n}_{ijk},c^{n}_{ij})$;
	\item when $(k'\neq k)$ while $(i',j')=(i,j)$ there are $(2^{nR_a}-1)$ indices in the sum; 
	\item $(i',j')\neq(i,j)$: the number of the indices is at most $2^{nR}.$	
\end{itemize} 
\item[($f$)] results from splitting the outer summation: The first summation contains typical sequences and is bounded by using the probabilities of the typical set. The second summation contains the tuple of sequences when the pair of actions sequences $x^n, y^n$ and codewords $c^n,a^n$ are not $\epsilon$-jointly typical (i.e., $(x^n,y^n,a^n,c^n)\notin {\cal T}_\epsilon^n( P_{XYAC})$). This sum is upper bounded following \cite{BK14} with $\mu_{XY} =\min^{*}_{x,y} \big(P_{XY}(x,y)\big)$.
\item[($g$)] follows from Chernoff bound on the probability that a sequence is not strongly typical \cite{kramer2008TinMUIT} where $\mu_{XYAC} =  \\\min^{*}_{x,y,a,c} ( P_{XYAC}(x,y,a,c) ) $ and $\delta(\epsilon)$ denotes a
	positive function of $\epsilon$ that vanishes as $n$ goes to infinity, i.e.,~$\delta(\epsilon) \rightarrow 0$ as $n \rightarrow \infty$.
\item[($h$)] Consequently, the contribution of typical sequences can be made asymptotically smaller than some $\epsilon'> 0$ if
\[
R_a+R_o+R_c >  I(XY;AC),\quad
R_o+R_c  >  I(XY;C),
\]
while the second term converges to zero exponentially fast with $n$, i.e.,~$\big(2{\cal |X||Y||A||C|}e^{-n\epsilon^2\mu_{XYAC}}\big)\log(2\mu_{XY}^{-n}+1) 
	\xrightarrow{n\rightarrow\infty} 0$ and $\epsilon' \rightarrow 0$ as $n \rightarrow \infty$.
\end{itemize}
Finally, if \eqref{eqn-resolve1} and \eqref{eqn-resolve2} are satisfied, by applying Pinsker's inequality \cite{pinsker1964information} we have  
\begin{align} \label{eq:resolvProof}
	\mathbb E_\mathsf{C}\big[||\check{P}_{X^nY^n}-Q^n_{XY}||_{{\scriptscriptstyle TV}}\big] 
	 & \leq  \mathbb E_\mathsf{C}\Big[\sqrt{2\mathbb{D}(\check{P}_{X^nY^n}||Q^n_{XY})} \;\Big] \notag\\
	& \leq \sqrt{2\mathbb E_\mathsf{C} \big[\mathbb{D}(\check{P}_{X^nY^n}||Q^n_{XY})\big]}\mathop{\longrightarrow}^{n\rightarrow \infty} 0.
\end{align}
\end{proof}

\begin{remark}
Given $\epsilon>0$, $R_a$, $R_o$, $R_c$ satisfying \eqref{eqn-resolve1} and \eqref{eqn-resolve2}, it follows from \eqref{eq:resolvProof} that there exist an $n\in\mathbb N$ and a random codebook realization for which the code-induced pmf between the indices and the pair of actions satisfies
\begin{align}
||\check{P}_{X^nY^n}-Q^n_{XY}||_{{\scriptscriptstyle TV}}  <\epsilon.
\end{align}
\end{remark}

\subsubsection{Decodability constraint} \label{sec:DecoCons}
Since the operation at Node $\mathsf Y$ in Fig.~\ref{fig:StrongCoordinationAllied} involves the decoding of $I$ and thereby the codeword $C^n(I,J)$ using $B^n$ and $J$, the induced distribution of the scheme for the allied problem that is given in \eqref{eqn-codeinducedAct} will not match that of \eqref{eqn-codeinduced} unless and until we ensure that the decoding succeeds with high probability as $n\rightarrow \infty$. The following lemma quantifies the necessary rate for this decoding to succeed asymptotically almost always.

\begin{lemma}[Decodability constraint]
\label{Lma:Decod}
Let $\hat I, C^n_{\hat IJ}$ be the output of a typicality-based decoder that uses common randomness $J$ to decode the index $I$ and the sequence $C^n_{{I}J}$ from $B^n$. Let $\mathbb P[\hat{I}\neq I]$ be the probability that the decoding fails for a realization of the random codebook.
If the rate for the index $I$ satisfies $R_c<I(B;C)$ then, { 
\renewcommand{\theenumi}{\roman{enumi}}
\begin{enumerate}
\item $\mathbb E_\mathsf{C}\big[\mathbb P[\hat{I}\neq I]\big]\rightarrow 0$ as $n\rightarrow \infty$, and \label{itm:proof2Itm1}
\item $\lim\limits_{n\rightarrow \infty} \mathbb E_\mathsf{C}\big[ \norm{\check{P}_{X^nY^nIJK}-\mathring{P}_{X^nY^nIJK}}_{{\scriptscriptstyle TV}}\big] =0.$ \label{itm:proof2Itm2}
\end{enumerate}}
\end{lemma}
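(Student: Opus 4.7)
For part (\ref{itm:proof2Itm1}), the plan is to reduce the decoding of $I$ (given $J$) from $B^n$ to a standard random channel-coding problem with input distribution $P_C$ and effective channel
\[
P_{B|C}(b|c) \;=\; \sum_{a\in\mathcal A} P_{A|C}(a|c)\,P_{B|A}(b|a),
\]
and then invoke a joint-typicality decoder. Fixing the realization of $J=j$, the receiver has access to the sub-codebook $\{C^n_{ij}\}_{i\in\mathcal I}$ of size $2^{nR_c}$, while the transmitted channel input is $A^n_{IJK}$, drawn (over the randomness of the codebook and of $K$) according to $\prod_{l=1}^n P_{A|C}(\cdot|C_{IJ,l})$. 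Hence the induced conditional law of $B^n$ given $C^n_{IJ}$ is precisely $P_{B|C}^n$. The typicality decoder I would use declares $\hat I$ to be the unique index such that $(B^n, C^n_{\hat I J}) \in \mathcal T_\epsilon^n(P_{BC})$, and $\hat I=1$ (say) otherwise.

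The error analysis is then standard. I would bound the expected error probability by the sum of two terms: (a) the probability that the true pair $(B^n, C^n_{IJ})$ fails to be jointly typical, which vanishes by the joint typicality lemma; and (b) for each $i'\ne I$, the probability that the competing pair $(B^n, C^n_{i'J})$ is jointly typical. Crucially, for $i'\ne I$ the codeword $C^n_{i'J}$ is generated i.i.d.\ according to $P_C^n$ independently of everything that influenced $B^n$ (namely $C^n_{IJ}$, and $A^n_{IJK}$, $K$), so by the standard bound the probability of that joint typicality event is at most $2^{-n(I(B;C)-\delta(\epsilon))}$. A union bound over the at most $2^{nR_c}-1$ competing indices then gives
\[
\mathbb E_{\mathsf C}\big[\mathbb P[\hat I\neq I]\big] \;\leq\; \delta_1(n) + 2^{nR_c}\,2^{-n(I(B;C)-\delta(\epsilon))},
\]
which vanishes provided $R_c < I(B;C)$.

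For part (\ref{itm:proof2Itm2}), my plan is a coupling argument that converts decoding errors directly into total variation. Comparing the expressions \eqref{eqn-codeinducedAct} and \eqref{eqn-codeinduced}, the only difference between $\mathring P$ and $\check P$ is the index of the $C$-codeword used to generate $Y^n$: the decoder uses $c^n_{\hat I J}$ in $\mathring P$ versus the true $c^n_{I J}$ in $\check P$. Therefore I would couple the two processes on the same sample space of $(\mathsf C, X^n, I, J, K, A^n_{IJK}, B^n)$ and the subsequent $Y^n$-generation, noting that on the event $\{\hat I = I\}$ the two induced conditional laws of $Y^n$ coincide pointwise. By the coupling characterization of total variation,
\[
\big\|\check P_{X^nY^nIJK}-\mathring P_{X^nY^nIJK}\big\|_{{\scriptscriptstyle TV}} \;\leq\; \mathbb P\big[\hat I\neq I\mid \mathsf C\big].
\]
Taking expectation over the random codebook and applying part (\ref{itm:proof2Itm1}) then yields the claim.

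The routine but slightly delicate step is verifying that the conditional law of $B^n$ given $C^n_{IJ}$, after averaging over the local randomness $K$ and the conditionally i.i.d.\ draw of $A^n_{IJK}$ from $P_{A|C}^n(\cdot|C^n_{IJ})$, is exactly $P_{B|C}^n$ so that the classical random-coding error exponent applies verbatim; everything else is a direct application of joint typicality and the coupling inequality.
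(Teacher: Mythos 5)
Your proposal is correct and follows essentially the same route as the paper: part (i) is the same joint-typicality decoding argument (true-pair typicality plus a packing-lemma/union bound over the $2^{nR_c}-1$ competing $C$-codewords, after noting that averaging over the codebook and $K$ makes $B^n$ conditionally $P^n_{B|C}$-distributed given $C^n_{IJ}$), yielding the condition $R_c<I(B;C)$. For part (ii) the paper merely asserts the conclusion via $\mathbb E_\mathsf{C}[\mathsf P_{\hat I|B^nJ}]\rightarrow\delta_{I\hat I}$ and ``algebraic manipulation,'' and your coupling inequality bounding the total variation by $\mathbb P[\hat I\neq I\mid\mathsf C]$ is a clean, valid way of making that step explicit.
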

\begin{proof}
We start the proof of \textit{\ref{itm:proof2Itm1}{)}} by calculating the average probability of error, averaged over all codewords in the codebook and averaged over all random codebook realizations as follows:
 \begin{align}
 \mathbb E_\mathsf{C}\big[\mathbb P[\hat{I}\neq I]\big]&= \sum_{\mathsf{C}} P_{\mathsf C}(\mathsf c) \mathbb P[\hat{I}\neq I] \notag\\
 &= \sum_{\mathsf{C}} P_{\mathsf C}(\mathsf c) \sum_{i,j,k} \frac{1}{2^{nR}}\mathbb P\Big[\hat{I}\neq I \Big| \substack{I=i\\ J=j\\K=k}\Big] \notag\\
 &=\sum_{i,j,k} \frac{1}{2^{nR}} \sum_{\mathsf{C}} P_{\mathsf C}(\mathsf c) \mathbb P\Big[\hat{I}\neq I \Big| \substack{I=i\\ J=j\\K=k}\Big] \notag\\
 &\stackrel{(a)}{=}\mathbb P\Big[\hat{I}\neq I \Big| \substack{I=1\\ J=1\\K=1}\Big] , \label{eq:DecodProof}
\end{align}
where in ($a$) we have used the fact that the conditional probability of error is independent of the triple of indices due to the i.i.d nature of the codebook construction. Also, due to the random construction and the properties of jointly typical set, we have 
\begin{equation*}
{\mathbb P} \big((A_{111}^n,B^n,C^n_{11})\in {\cal T}_\epsilon^n(P_{ABC})\big) \xrightarrow{n\rightarrow\infty} 1.
\end{equation*}

We now continue the proof by constructing the sets for each $j$ and $ b^n\in\mathcal B^n$ that Node $\mathsf Y$ will construct to identify the transmitted index as 
\begin{equation*}
	\hat{S}_{j,b^n,\mathsf c}\triangleq  \{i: (b^n,c^n_{ij}) \in {\cal T}_\epsilon^n(P_{BC})\}.
\end{equation*}

\noindent The set $\hat{S}_{j,b^n\!,\mathsf c}$ consists of indices $i\in I$ such that for a given common randomness index $J=j$ and channel realization $B^n=b^n$, the sequences $(b^n,c_{ij}^n)$ are jointly-typical. Assuming $(i,j,k)=(1,1,1)$ was realized, and if $\hat{S}_{1,b^n,\mathsf c}=\{1\}$, then the decoding will be successful. The probability of this event occurring is divided into two steps as follows.

\begin{itemize}
	\item First, assuming $(i,j,k)=(1,1,1)$ was realized, for successful decoding, $1$ must be an element of $\hat{S}_{J,B^n,\mathsf c}$. The probability of this event can be bounded as follows:
	\begin{align*} 
	\mathbb E_\mathsf C \Big[\mathbb P\Big[I \in \hat{S}_{J,B^n,\mathsf C}\Big|\substack{I=1\\J=1\\{K=1}}\Big]\Big]
	&=\sum_{a^n,b^n,c^n} \Big(P^n_{C}(c^n) P^n_{A|C}(a^n|c^n) P^n_{B|A}(b^n|a^n) \mathds{1}\big((c^n,b^n)\in {\cal T}_\epsilon^n(P_{BC})\big)\Big)\\
	&\stackrel{}{=}\sum_{b^n,c^n} P^n_{BC}(b^n,c^n)\mathds{1}\big((b^n,c^n)\in {\cal T}_\epsilon^n(P_{BC})\big)\\
	&\stackrel{(a)}{\geq} 1-\delta(\epsilon) \xrightarrow{n\rightarrow\infty} 1,
	\end{align*} 
	
	where (a) follows from the properties of jointly typical sets and $\delta(\epsilon) \rightarrow 0$ as $n \rightarrow \infty$.
	
	\item Next, assuming again that $(i,j,k)=(1,1,1)$ was realized, for successful decoding, no index greater than or equal to $2$ must be an element of $\hat{S}_{J,B^n,\mathsf C}$. The probability of this event can be bounded as follows:
	\begin{align*} 
	\mathbb E_\mathsf C \mathbb P\Big [\hat{S}_{J,B^n,\mathsf C} \cap \{2,\dots,2^{nR_c}\} = \emptyset\Big|\substack{I=1\\J=1\\K=1}\Big]
	&=1- \sum_{i' \neq 1} \mathbb{E}_\mathsf C\mathbb P\Big [i' \in \hat{S}_{J,B^n,\mathsf C}\Big|\substack{I=1\\J=1\\K=1}\Big] \\ 
	&=1- \sum_{i' \neq 1} \mathbb P[(C^n_{i'1},B^n)\in {\cal T}_\epsilon^n(P_{BC})]\\
	&\stackrel{(a)}{\geq}1- \sum_{i' \neq 1} 2^{-n(I(B;C)-\delta(\epsilon))}\\
	&=1-(2^{nR_c}-1) 2^{-n(I(B;C)-\delta(\epsilon))}\\
	&=1- 2^{-n(I(B;C)-R_c-\delta(\epsilon))} + 2^{-nI(B;C)}\\
	&\stackrel{(b)}{\geq} 1-\delta(\epsilon) \xrightarrow{n\rightarrow\infty} 1,
	\end{align*}
	where
	($a$) follows from the packing lemma \cite{elGamal2011NITetwork}, and
	($b$) results if $R_c <I(B;C)-\delta(\epsilon)$ and sufficiently large $n$ yield $\delta(\epsilon)\rightarrow 0$. 		
	\noindent Then from \eqref{eq:DecodProof}, the claim in \textit{\ref{itm:proof2Itm1}{)}} follows as given by		
	\begin{align*}
	\mathbb E_\mathsf{C}\big[\mathbb P[\hat{I}\neq I]\big] & = \mathbb E_\mathsf C\mathbb P\Big[\hat{I}\neq I\Big|\substack{I=1\\J=1\\K=1}\Big] \\
	&\leq \left(\mathbb E_\mathsf C \mathbb P\Big[I \notin \hat{S}_{J,B^n,\mathsf C}\Big|\substack{I=1\\J=1\\{K=1}}\Big]
	+ \mathbb E_\mathsf C \mathbb P\Big [\hat{S}_{J,B^n,\mathsf C} \cap \{2,\dots,2^{nR_c}\}\!\neq\! \emptyset\Big|\substack{I=1\\J=1\\K=1}\Big]\right) 
	\xrightarrow{n\rightarrow\infty} 0.
	\end{align*}
\end{itemize} 
		
Finally, the proof of \textit{\ref{itm:proof2Itm2}{)}} follows in a straightforward manner. If the previous two conditions are met, then $\mathbb E_\mathsf{C}[\mathbb P[\hat{I}\neq I]]\rightarrow 0$ and  $\mathbb E_\mathsf{C} [{P_{\hat{I}|B^nJ}(\hat{i}|b^n,j)]\rightarrow\delta_{I\hat{I}}}$, 
where $\delta_{I\hat{I}}$ denotes the Kronecker delta. Consequently, the claim then follows by simple algebraic manipulation of  \eqref{eqn-codeinducedAct} and \eqref{eqn-codeinduced} as
 \begin{align}\lim\limits_{n\rightarrow \infty}  \mathbb E_\mathsf{C}\big[ \norm{\check{P}_{X^nY^nIJK}-\mathring{P}_{X^nY^nIJK}}_{{\scriptscriptstyle TV}}\big]  =0.\end{align}
\end{proof}

\vspace*{-0.5ex}
\subsubsection{Independence constraint}\label{sec:independence}
We complete modifying the allied structure in Fig.~\ref{fig:StrongCoordinationAllied} to mimic to the original problem
with a final step. By assumption, we have a natural independence between the
action sequence $X^n$ and the common randomness $J$. As a result, the joint
distribution over $X^n$ and $J$ in the original problem is a product of the
marginal distributions $Q^n_{X}$ and $P_J$. To mimic this
behavior in the scheme for the allied problem, we artificially enforce independence by ensuring that the mutual information between $X^n$ and $J$ vanishes. This process is outlined in Lemma~\ref{Lma:Secrecy}.

\begin{lemma}[Independence constraint]
	\label{Lma:Secrecy}
	Consider the scheme for the allied problem given in Fig.~\ref{fig:StrongCoordinationAllied}. Both $	I(J;X^n)\rightarrow 0$ and $\mathbb E_\mathsf{C} \big[||\check{P}_{X^nJ}-Q^n_{X}P_J||_{{\scriptscriptstyle TV}} \big] \rightarrow 0$ as $n\rightarrow \infty$ 
	if the code rates satisfy 
	\begin{align}
    R_a+R_c &> I(X;AC),\label{eqn-Indep1}\\
    R_c &> I(X;C).\label{eqn-Indep2}
    \end{align}
\end{lemma}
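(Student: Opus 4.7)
The plan is to reduce both claims to a single KL-divergence calculation against the product target $Q^n_X P_J$, and then deduce the two statements simultaneously via the chain-rule decomposition
\begin{align*}
\mathbb{D}(\check{P}_{X^nJ} \, \| \, Q^n_X P_J) = I(X^n;J) + \mathbb{D}(\check{P}_{X^n} \, \| \, Q^n_X),
\end{align*}
where the mutual information on the right is computed under $\check{P}_{X^nJ}$. Since $J$ is uniform and independent of the codebook $\mathsf{C}$, the induced marginal $\check{P}_J$ equals $P_J$, so both summands are non-negative and each is dominated by the left-hand side.

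Conditioning on $J=j$, the induced pmf
\begin{align*}
\check{P}_{X^n|J=j}(x^n) = \sum_{i,k} \frac{P^n_{X|AC}(x^n|A^n_{ijk},C^n_{ij})}{2^{n(R_c+R_a)}}
\end{align*}
is exactly the output of a two-layer soft-covering scheme driven by $2^{n(R_c+R_a)}$ codeword-pairs from the nested ensemble. I would therefore replicate the computation in the proof of Lemma~\ref{Lma:Reslv} verbatim, with $(X^n,Y^n)$ replaced by $X^n$, $Q^n_{XY}$ replaced by $Q^n_X$, and the outer summation restricted to pairs $(i,k)$ at a fixed $j$. After Jensen's inequality moves $\mathbb{E}_{\mathsf{C}}$ inside the logarithm, the inner sum over $(i',k')$ splits into three regimes: $(i',k')=(i,k)$; $i'=i$ with $k'\ne k$; and $i'\ne i$. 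The key subtlety is that in the middle regime the two codeword-pairs share the same $C^n_{ij}$ while $A^n_{ijk'}$ is conditionally i.i.d.\ given $C^n_{ij}$, so the relevant expectation collapses to $P^n_{X|C}(x^n|C^n_{ij})$ rather than to $Q^n_X(x^n)$. This is precisely what forces a \emph{second} resolvability bottleneck in addition to the aggregate constraint $R_a+R_c>I(X;AC)$.

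Restricting to the strongly typical set $\mathcal{T}_\epsilon^n(P_{XAC})$ and invoking the standard typical-set estimates, the three regimes contribute inside the logarithm, respectively, $2^{n(I(X;AC)-R_a-R_c+\delta(\epsilon))}$, at most $2^{n(I(X;C)-R_c+\delta(\epsilon))}$, and at most $1$, while the atypical portion is absorbed by a Chernoff bound exactly as in steps (f)--(g) of Lemma~\ref{Lma:Reslv}. Under the hypotheses \eqref{eqn-Indep1}--\eqref{eqn-Indep2} each typical contribution is $o(1)$ and the atypical tail decays exponentially in $n$, so $\mathbb{E}_{\mathsf{C}}[\mathbb{D}(\check{P}_{X^n|J=j}\|Q^n_X)]\to 0$ uniformly in $j$. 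Averaging over $j$ yields $\mathbb{E}_{\mathsf{C}}[\mathbb{D}(\check{P}_{X^nJ}\|Q^n_X P_J)]\to 0$; the chain-rule decomposition above then delivers $\mathbb{E}_{\mathsf{C}}[I(X^n;J)]\to 0$, while Jensen's and Pinsker's inequalities applied as in \eqref{eq:resolvProof} give $\mathbb{E}_{\mathsf{C}}[\norm{\check{P}_{X^nJ}-Q^n_X P_J}_{{\scriptscriptstyle TV}}]\to 0$.

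The main obstacle is purely the combinatorial bookkeeping in the three-regime index split, and in particular keeping track of the fact that the $k'\ne k$, $i'=i$ regime is governed by $P^n_{X|C}$ rather than by $Q^n_X$; overlooking this collapse would erroneously yield only the single condition $R_a+R_c>I(X;AC)$. All other technical machinery---typical-set estimates, Chernoff tail bounds, and the Pinsker step---is already in place from Lemma~\ref{Lma:Reslv}, and nothing further about the channel $P_{B|A}$ or the decoder at Node~$\mathsf{Y}$ is required, since the independence constraint depends only on the joint law of $X^n$ and $J$ before transmission.
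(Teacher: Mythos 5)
Your proposal is correct and takes essentially the same route as the paper's Appendix~\ref{Appnx:ProofSecrecyLemma} proof: the computation of Lemma~\ref{Lma:Reslv} is repeated with $(X^n,Y^n)$ replaced by $X^n$ at fixed $j$, the inner sum is split into the same three regimes (the shared-$C^n_{ij}$ regime collapsing to $P^n_{X|C}$ and producing the separate constraint $R_c>I(X;C)$ alongside $R_a+R_c>I(X;AC)$), followed by the typicality/Chernoff bounds and Pinsker's inequality. Your upfront chain-rule decomposition $\mathbb{D}(\check{P}_{X^nJ}\,\|\,Q^n_XP_J)=I(X^n;J)+\mathbb{D}(\check{P}_{X^n}\,\|\,Q^n_X)$ (using $\check{P}_J=P_J$) is a small addition that makes the $I(J;X^n)\rightarrow 0$ claim explicit, which the paper's appendix leaves implicit.
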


The proof of Lemma~\ref{Lma:Secrecy}, shown in Appendix~\ref{Appnx:ProofSecrecyLemma}, builds on the results of
Section~\ref{sec:DecoCons} and the proof of
Lemma~\ref{Lma:Reslv} in Section \ref{sec:ChResCons}, resulting in
\begin{align} \label{eq:indepProof} 
\mathbb E_\mathsf{C} \big[||\check{P}_{X^nJ}-Q^n_{X}P_J||_{\scriptscriptstyle TV}\big]  
&\leq \mathbb E_\mathsf{C} \Big[\sqrt{2\mathbb{D}(\check{P}_{X^nJ}||Q^n_{X}P_J)}\;\Big] \notag\\
&\leq \sqrt{2\mathbb E_\mathsf{C} \big[\mathbb{D}(\check{P}_{X^nJ}||Q^n_{X}P_J)\big]}\mathop{\longrightarrow}^{n\rightarrow \infty} 0.
\end{align}

\begin{remark}
	Given $\epsilon>0$, $R_a$, $R_c$ meeting \eqref{eqn-Indep1} and \eqref{eqn-Indep2}, it follows from \eqref{eq:indepProof} that there exists an $n\in\mathbb N$ and a random codebook realization for which the code-induced pmf between the common randomness $J$ and the actions of Node $\mathsf X$ satisfies
	\begin{align} 
	||\check{P}_{X^nJ}-Q^n_{X}P_J||_{{\scriptscriptstyle TV}}  <\epsilon.
	\end{align}
\end{remark}

In the original problem of Fig.~\ref{fig:P2PCoordination}, the input action sequence $X^n$ and the index $J$ from the common randomness source are available and the $A$- and $C$-codewords are to be selected. Now, to devise a scheme for the strong coordination problem, we proceed as follows. We let Node $\mathsf X$ choose indices $I$ and $K$ (and, consequently, the $A$- and $C$-codewords) from the realized $X^n$ and $J$ using the conditional distribution $\mathring P_{I,K|X^n,J}$. The joint pmf of the actions and the indices is then given by
\begin{align}
\hat{P}_{X^nY^nIJK} \triangleq Q^n_{X} P_J \mathring P_{I,K|X^n,J} \mathring P_{Y^n| I,J,K}. \label{eqn-randomcoordscheme}
\end{align}
As a result, from the allied scheme of Fig.~\ref{fig:StrongCoordinationAllied} we obtain the joint scheme illustrated in Fig.~\ref{fig:StrongCoordination}.
\begin{figure}[h!]
	\centering
	{\includegraphics[scale=0.625]{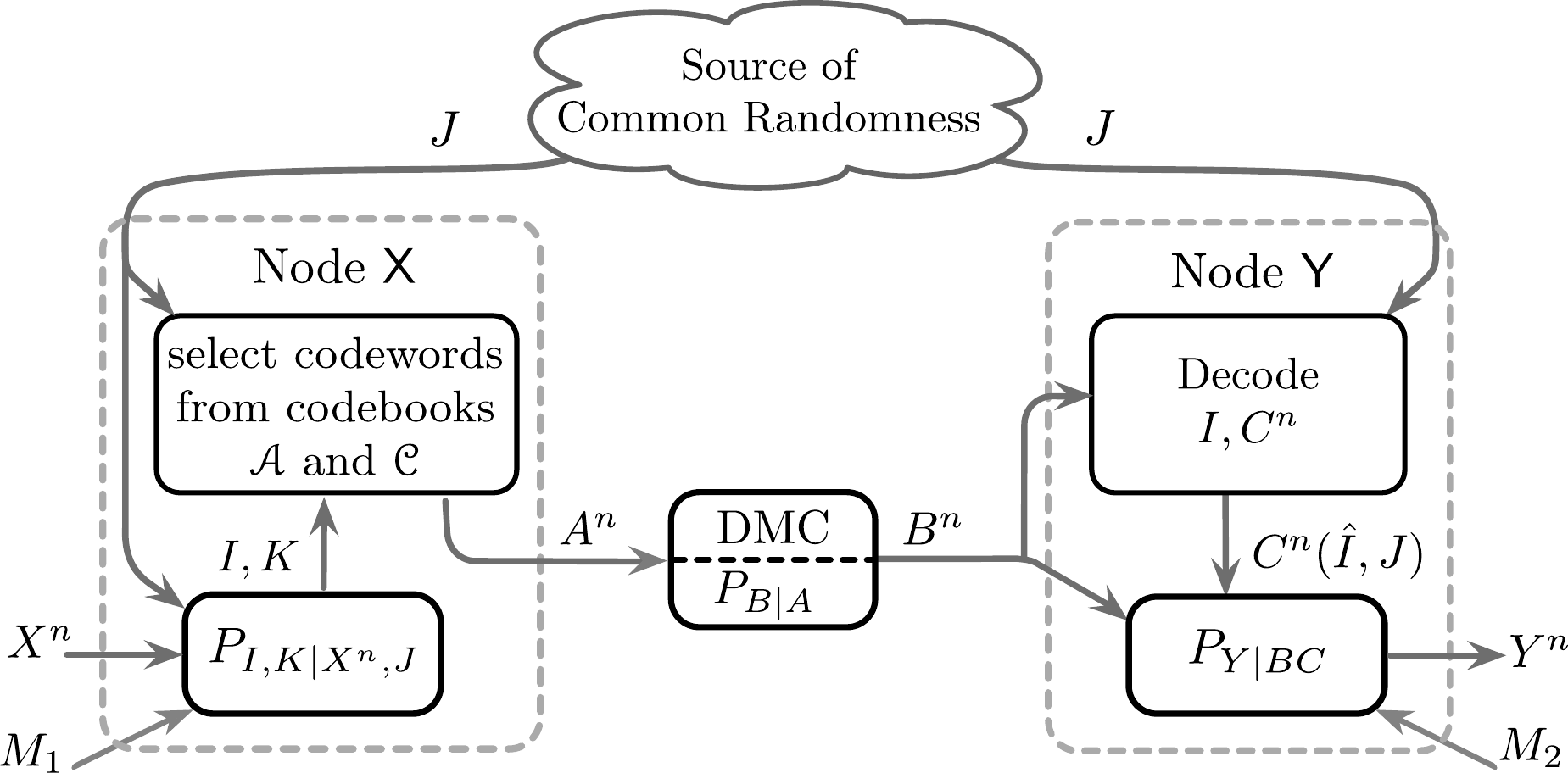}}
	\caption{The joint strong coordination coding scheme.} 
	\label{fig:StrongCoordination}
\end{figure}
\FloatBarrier

\noindent Finally, we can argue that
\begin{align}
\lim_{n\rightarrow \infty} \mathbb E_\mathsf{C} [\norm{\hat{P}_{X^n,Y^n}- Q^n_{XY}}_{{\scriptscriptstyle TV}} ] =0, 
\end{align} 
since the total variation between the marginal pmf $\hat{P}_{X^n,Y^n}$ and the design pmf $Q^n_{XY}$ can be bounded as 
\begin{equation*}
\begin{split} 
\norm{\hat{P}_{X^nY^n}-Q^n_{XY}}_{\scriptscriptstyle TV}
&\stackrel{(a)}{\leq} \norm{\hat{P}_{X^nY^n} - \check{P}_{X^nY^n}}_{\scriptscriptstyle TV}+ \norm{\check{P}_{X^nY^n} -Q^n_{XY}}_{\scriptscriptstyle TV} \notag\\  
&\stackrel{(b)}{\leq} \norm{\hat{P}_{X^nY^nIJK}-\check{P}_{X^nJ}\mathring P_{IKY^n|X^n,J}}_{\scriptscriptstyle TV} +\norm{\check{P}_{X^nY^n}-Q^n_{XY}}_{\scriptscriptstyle TV}\notag\\
&\stackrel{(c)}{=}\norm{Q^n_{X} P_J -\check{P}_{X^nJ}}_{\scriptscriptstyle TV}+ \norm{\check{P}_{X^nY^n} -Q^n_{XY}}_{\scriptscriptstyle TV} 
\end{split}
\end{equation*}

\noindent where\\
(a) follows from the triangle inequality;\\
(b) follows from \eqref{eqn-randomcoordscheme} and \cite[Lemma V.1]{cuff2013distributed};\\ (c) follows from \cite[Lemma V.2]{cuff2013distributed}.\\
Note that the terms on the RHS of the above equation can be made
vanishingly small provided the resolvability, decodability, and independence
conditions are met. Thus, by satisfying the conditions stated in
Lemmas~\ref{Lma:Reslv}-\ref{Lma:Secrecy}, the coordination scheme
defined by \eqref{eqn-randomcoordscheme} achieves strong coordination
asymptotically between Nodes $\mathsf X$ and $\mathsf Y$ by communicating over the DMC $P_{B|A}$. Note that since the operation at Nodes $\mathsf X$ and $\mathsf Y$ amount to index selection according to $\mathring P_{I,K|X^n,J}$, and generation of $Y^n$ using the DMC $P_{Y|BC}$, both operations are randomized. The last step consists in viewing the local randomness as the source of randomness in the operations at Nodes $\mathsf X$ and $\mathsf Y$. This is detailed in the following paragraph.\\

\subsubsection{Local randomness rates}
As seen from Fig.~\ref{fig:StrongCoordination}, at Node $\mathsf X$, local randomness $M_1$ is employed to randomize the selection of indices $(I,K)$ by synthesizing the channel $\mathring P_{IK|X^nJ}$ whereas Node $\mathsf Y$ utilizes its local randomness $M_2$ to generate the action sequence $Y^n$ by simulating the channel $P_{Y|BC}$. Using the list decoding and likelihood arguments of~\cite[Section~IV.B]{VKB16}, we can argue that for any given realizations of $J$, the minimum rate of local randomness required for the probabilistic selection of indices $I,K$ can be derived by quantifying the number of $A$ and $C$ codewords (equally identifying a list of index tuples $(I,K)$) jointly typical with $X^n=x^n.$ Quantifying the list size as in \cite{VKB16} yields $\rho_1 \geq R_a+R_c-I(X;AC)$. At Node $\mathsf Y$,
the necessary local randomness for the generation of the action sequence is bounded by the channel simulation rate of DMC $P_{Y|BC}$~\cite{Steinberg-Verdu-IT-1994}. Thus,  
$\rho_2 \geq H(Y|BC)$. Combining the local randomness rates constraints with the constraints in Lemmas \ref{Lma:Reslv}-\ref{Lma:Secrecy}, we obtain
the following inner bound to the strong coordination capacity region.

\begin{theorem}\label{Thm:JointCRR}
	A tuple $(R_o, \rho_1,\rho_2)$ is achievable 
	for the strong noisy communication setup in
	Fig.~\ref{fig:P2PCoordination} if for some $R_a,R_c\geq0$, there exist
	auxiliary random variables $(C, A)$ jointly correlated with the actions $(X,Y)$
	according to pmf $P_{XYABC}$ such that $P_{XYABC}=P_{AC}P_{X|AC}P_{B|A}P_{Y|BC}$, the marginal distribution $P_{XY} = Q_{XY}$, and 		
	\begin{subequations}\label{equ:JSRR}
		\begin{align} 
		R_a+R_o+R_c & {\;>\;} I(XY;AC), \label{equ:JSRR1}\\ 
		R_o+R_c &{\;>\;} I(XY;C),\label{equ:JSRR2}\\ 
		R_a+R_c &{\;>\;} I(X;AC),\label{equ:JSRR3}\\
		R_c &{\;>\;} I(X;C),\label{equ:JSRR4}\\
		R_c &<I(B;C),\label{equ:JSRR5}\\
		\rho_1 &{\;>\;} R_a+R_c-I(X;AC),\label{equ:JSRR6}\\ 
		\rho_2 &{\;>\;} H(Y|BC).\label{equ:JSRR7}  
		\end{align}
	\end{subequations}
\end{theorem}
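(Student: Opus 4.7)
The plan is to assemble the three lemmas already established with two further ingredients (a codebook-selection argument and a quantification of the local randomness required by each node), and then read off \eqref{equ:JSRR1}--\eqref{equ:JSRR7} directly. First I would fix auxiliary variables $(A,C)$ with the required Markov factorization $P_{XYABC}=P_{AC}P_{X|AC}P_{B|A}P_{Y|BC}$ and marginal $P_{XY}=Q_{XY}$, together with nonnegative $R_a,R_c$ satisfying \eqref{equ:JSRR1}--\eqref{equ:JSRR5}. Lemmas~\ref{Lma:Reslv}, \ref{Lma:Decod}, and \ref{Lma:Secrecy} then imply that the expectations over the random codebook $\mathsf C$ of $\|\check P_{X^nY^n}-Q^n_{XY}\|_{\scriptscriptstyle TV}$, $\|\check P_{X^nY^nIJK}-\mathring P_{X^nY^nIJK}\|_{\scriptscriptstyle TV}$, and $\|\check P_{X^nJ}-Q^n_X P_J\|_{\scriptscriptstyle TV}$ all vanish as $n\to\infty$. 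A Markov-inequality selection argument then extracts a deterministic codebook realization $\mathsf c^\star$ simultaneously attaining all three closeness properties up to some $\epsilon'\to 0$.

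Next, I would convert the allied construction of Fig.~\ref{fig:StrongCoordinationAllied} into the actual coordination scheme. Because $\check P_{X^nJ}\approx Q^n_X P_J$, I can legitimately use the conditional pmf $\mathring P_{IK\mid X^nJ}$ derived from $\mathsf c^\star$ as a stochastic encoder at Node $\mathsf X$ driven by the \emph{given} $(X^n,J)$. This yields the induced joint pmf $\hat P_{X^nY^nIJK}$ in \eqref{eqn-randomcoordscheme}, and the triangle-inequality chain immediately preceding the theorem statement (combined with \cite[Lemmas V.1, V.2]{cuff2013distributed}) bounds $\|\hat P_{X^nY^n}-Q^n_{XY}\|_{\scriptscriptstyle TV}$ by the sum of the allied-problem terms, hence $\hat P_{X^nY^n}\to Q^n_{XY}$ in total variation and \eqref{eq:StrngCoorCondtion} is satisfied.

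The last, and most subtle, step is to quantify how much local randomness is needed at each node to simulate the two stochastic operations $\mathring P_{IK\mid X^nJ}$ and $P^n_{Y\mid BC}$. At Node $\mathsf X$, I would invoke a likelihood/list-encoder argument as in~\cite[Sec.~IV-B]{VKB16}: for typical $(X^n,J)$ the conditional pmf $\mathring P_{IK\mid X^nJ}$ is essentially uniform on the list of pairs $(i,k)$ whose codewords $(A^n_{ijk},C^n_{ij})$ are jointly typical with $X^n$, and the expected list size concentrates around $2^{n(R_a+R_c-I(X;AC))}$; hence any $\rho_1>R_a+R_c-I(X;AC)$ suffices to realize this sampling via a uniform $M_1\in\llbracket 1,2^{n\rho_1}\rrbracket$ while preserving total-variation closeness (by data processing). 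At Node $\mathsf Y$, given $(B^n,C^n)$, producing $Y^n\sim P^n_{Y|BC}$ is exactly the Steinberg--Verd\'u channel simulation task~\cite{Steinberg-Verdu-IT-1994}, whose minimal rate is $H(Y|BC)$; thus $\rho_2>H(Y|BC)$ is enough.

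The main obstacle I anticipate is the Node~$\mathsf X$ step: one must verify that replacing the (ideal) conditional sampling $\mathring P_{IK\mid X^nJ}$ by a finite-randomness approximation driven by $M_1$ does not inflate the total variation beyond $o(1)$. This is handled by controlling the expected list size with a Chernoff/typicality bound to ensure uniformity of the likelihood encoder, then propagating closeness through the remaining deterministic channel operations. Once this is in place, eliminating the auxiliary $R_a,R_c\ge 0$ between the resolvability/decodability/independence inequalities and the two randomness bounds yields precisely the region \eqref{equ:JSRR1}--\eqref{equ:JSRR7}, completing the achievability proof.
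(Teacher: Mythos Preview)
Your proposal is correct and follows essentially the same route as the paper: assemble Lemmas~\ref{Lma:Reslv}--\ref{Lma:Secrecy} for the allied scheme, pass to a fixed codebook, invert the Node~$\mathsf X$ operation via the conditional $\mathring P_{IK\mid X^nJ}$, bound $\|\hat P_{X^nY^n}-Q^n_{XY}\|_{\scriptscriptstyle TV}$ with the triangle-inequality chain and \cite[Lemmas~V.1,~V.2]{cuff2013distributed}, and finally quantify $\rho_1$ via the list/likelihood-encoder argument of~\cite[Sec.~IV-B]{VKB16} and $\rho_2$ via channel simulation~\cite{Steinberg-Verdu-IT-1994}. The only cosmetic difference is that you make the codebook-selection step explicit via Markov's inequality, whereas the paper leaves it implicit in the expectation statements.
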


\subsection{Capacity Region for Special Cases}
Here we characterize two special cases of the proposed joint coordination-channel scheme in which the scheme is optimal. However, the optimality of the joint scheme in general is still open.
	
\subsubsection{ Node $\mathsf Y$ actions are determined by Node $\mathsf X$}
One straightforward observation is when the action at Node $\mathsf Y$ is a (deterministic) function of the action supplied to Node $\mathsf X$ by nature, i.e.,~$H(Y|X)=0$. For this case, the setup of strong coordination over a noisy channel reduces to a pure lossless compression and communication problem where both local and common randomness are not required. We now verify that this is a special case of Theorem~\ref{Thm:JointCRR} for which equality in the rate conditions is actually obtained.

\begin{proposition} There exists an achievable joint coordination-channel coding scheme for the strong coordination setup in Fig.~\ref{fig:P2PCoordination}, when the action of Node $\mathsf Y$ is a deterministic function of the action supplied to Node ${\mathsf X}$, such that (\ref{eq:StrngCoorCondtion}) is satisfied if
		\begin{align}
		H(Y) & <  \mathbb{C}_{P_{B|A}}.\label{equ:JSSC1}
		\end{align}
 Here, $\mathbb{C}_{P_{B|A}}$ is the channel capacity for the channel $P_{B|A}$ defined as $\mathbb{C}_{P_{B|A}}\triangleq\max_{P_A}I(A;B).$
 
 Conversely, for a given $\epsilon>0$, any $\epsilon$-code achieving the
 strong coordination rate tuple ($R_o,\rho_1,\rho_2$), i.e., a (strong coordination) code that induces the joint distribution $\hat{P}_{X^nY^n}$ such that $\norm{\hat{P}_{X^nY^n}-Q^{n}_{XY}}_{{\scriptscriptstyle TV}} <\epsilon$, must satisfy \eqref{equ:JSSC1}.

\end{proposition}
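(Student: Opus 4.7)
The plan is to leverage that $H(Y|X)=0$ makes $Y=f(X)$ a deterministic function of $X$, so the target $Q_{XY}$ is concentrated on the graph of $f$ and the strong-coordination requirement in~\eqref{eq:StrngCoorCondtion} reduces to reliably delivering the i.i.d.\ source $Y^n=f^n(X^n)$ (entropy $H(Y)$) to Node~$\mathsf Y$ across the DMC $P_{B|A}$. The claim is thus morally the source-channel separation theorem recast in total variation.

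For the direct part, I would give a standard joint source-channel coding argument: source-encode $Y^n$ into $\lceil n(H(Y)+\delta)\rceil$ bits (any $\delta>0$, by the source coding theorem) and channel-encode using an asymptotically capacity-achieving channel code for $P_{B|A}$. Since $H(Y)<\mathbb C_{P_{B|A}}$, a sufficiently small $\delta$ yields $H(Y)+\delta<\mathbb C_{P_{B|A}}$, so Node~$\mathsf Y$ recovers $\hat Y^n$ with $\mathbb P(\hat Y^n\neq f^n(X^n))\to 0$. For this scheme one checks that
\begin{align*}
\norm{\hat P_{X^nY^n}-Q^n_{XY}}_{{\scriptscriptstyle TV}}=\mathbb P_{\hat P}(\hat Y^n\neq f^n(X^n))\longrightarrow 0,
\end{align*}
so~\eqref{eq:StrngCoorCondtion} holds with $R_o=\rho_1=\rho_2=0$. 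As an alternative, one may specialize Theorem~\ref{Thm:JointCRR} by setting $C=Y$, $P_{X|AC}=P_{X|Y}$, and $P_{Y|BC}=\mathds 1[Y=C]$; a direct calculation then gives $I(XY;AC)=I(XY;C)=I(X;AC)=I(X;C)=H(Y)$, $I(B;C)=I(B;Y)$, and $H(Y|BC)=0$, so~\eqref{equ:JSRR1}-\eqref{equ:JSRR7} collapse to $H(Y)<R_c<I(B;Y)$ with the remaining rates made negligible.

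For the converse, starting from $\norm{\hat P_{X^nY^n}-Q^n_{XY}}_{{\scriptscriptstyle TV}}<\epsilon$, I note that $\mathbb P_{Q^n_{XY}}(Y^n\neq f^n(X^n))=0$, hence $\mathbb P_{\hat P}(Y^n\neq f^n(X^n))<\epsilon$. Fano's inequality then yields $H(f^n(X^n)\mid Y^n)\leq h_2(\epsilon)+n\epsilon\log|\mathcal Y|\triangleq \delta_n$, and since $Y^n$ is produced at Node~$\mathsf Y$ from $(B^n,J,M_2)$, data processing gives $H(f^n(X^n)\mid B^n,J,M_2)\leq\delta_n$. Combining this with the independence $X^n\perp(J,M_2)$, the Markov chain $X^n\to A^n\to B^n$ conditional on $(J,M_1,M_2)$, and the conditional independence $B^n\perp (J,M_2)\mid A^n$, I bound
\begin{align*}
nH(Y)=H(f^n(X^n))&\leq\delta_n+I(f^n(X^n);B^n,J,M_2)\leq\delta_n+I(X^n;B^n\mid J,M_2)\\
&\leq\delta_n+I(A^n;B^n\mid J,M_2)\leq\delta_n+I(A^n;B^n)\leq\delta_n+n\,\mathbb C_{P_{B|A}}.
\end{align*}
Dividing by $n$ and letting $\epsilon\downarrow 0$, $n\to\infty$ (so $\delta_n/n\to 0$) yields $H(Y)\leq\mathbb C_{P_{B|A}}$, matching~\eqref{equ:JSSC1} in the asymptotic limit.

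The main obstacle I anticipate lies in the achievability via Theorem~\ref{Thm:JointCRR} alone: its single-letter auxiliary does not generally allow $I(B;Y)$ to reach $\mathbb C_{P_{B|A}}$ for arbitrary $p_Y$, so the cleanest achievability route is the direct JSCC argument sketched above. The converse, in contrast, is a routine Fano-plus-data-processing computation and presents no real difficulty.
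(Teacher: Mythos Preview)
Your proposal is correct, but both halves differ from the paper's argument in instructive ways.

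\textbf{Achievability.} The paper also specializes Theorem~\ref{Thm:JointCRR}, but it avoids your obstacle by taking $C=(A,Y)$ with $A$ chosen \emph{independent} of $(X,Y)$ and distributed according to the capacity-achieving input law. Then $I(B;C)=I(B;A,Y)=I(B;A)=\mathbb C_{P_{B|A}}$, while all the other terms still collapse to $H(Y)$ (and $H(Y|BC)=0$), so \eqref{equ:JSRR1}--\eqref{equ:JSRR7} reduce exactly to $H(Y)<R_c<\mathbb C_{P_{B|A}}$. This is the ``missing trick'' you anticipated: enlarge $C$ to carry the channel input alongside $Y$ so that the decodability constraint sees full capacity. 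Your direct JSCC argument is of course also valid and arguably simpler; the paper's route has the virtue of showing the result is already contained in the general inner bound.

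\textbf{Converse.} Your converse is cleaner and more general than the paper's. The paper argues by sandwiching an intermediate message rate $R_c$, showing $R_c\geq H(Y)$ via a single-letterization with auxiliary $C_t=(I,J,X^{t-1})$ and $R_c\leq \mathbb C_{P_{B|A}}$ via Fano on the decoded index $\hat I$; this implicitly presumes the particular message/index structure of the joint scheme. Your argument works directly on $f^n(X^n)$ and uses only the operational structure common to \emph{any} code (namely $X^n\perp(J,M_2)$, $B^n\perp(J,M_2)\mid A^n$, and the DMC bound $I(A^n;B^n)\leq n\mathbb C_{P_{B|A}}$), so it is the more robust proof.
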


\begin{proof}
We begin by verifying the achievability part with the following choice for auxiliary random variables: $A$ is independent of $Y$ and $X$, where $P_A$ is the capacity achieving input distribution for the channel $P_{B|A}$, and $C= (A,Y)$. This selection will result in the joint distribution of Theorem \ref{Thm:JointCRR} taking the form $P_{C|XY}P_{A}P_{B|A}P_{XY}$. Now we proceed with a direct application of the selected auxiliary random variables in Theorem~\ref{Thm:JointCRR} as follows

From \eqref{equ:JSRR1} and \eqref{equ:JSRR2} we have:
\begin{subequations}\label{equ:SC}
\begin{align} 
	R_a+R_o+R_c &> I(XY;AC) = I (XY;C)= H(Y) \label{equ:SC1} \\
	R_o+R_c  &>I(XY;C) =H(Y)  \label{equ:SC2}
\end{align}

Furthermore, from \eqref{equ:JSRR3}-\eqref{equ:JSRR7} we have: 
\begin{align}
R_a+R_c & > I(X;AC)\stackrel{(a)}{=}H(Y) \label{equ:SC3}\\
R_c & > I(X;C) \stackrel{(b)}{=} H(Y) \label{equ:SC4}\\
R_c & < I(B;C)\stackrel{(c)}{=}I(B;A)=\mathbb{C}_{P_{B|A}}\\
\rho_1 & > R_a+R_c-I(X;AC) \stackrel{(d)}{=} R_a+R_c-H(Y)\\ 
\rho_2 & > H(Y|BC) \stackrel{(e)}{=} 0\ 
\end{align}
\end{subequations}
\noindent where 
\begin{itemize} 
	\item[($a$)] follows as a result of $H(Y|X)=0$, the chain rule of mutual information, and \eqref{equ:SC1}, i.e.,~$I(XY;AC)= I(X;AC)$;
  	\item[($b$)] follows from $H(Y|X)=0$, the chain rule of mutual information and \eqref{equ:SC2}, i.e.,~$I(XY;C)= I(X;C)$;
  	\item[($c$)] is a result of selecting $C=(A,Y)$, $A$ being independent of $Y$, and $P_A$ to be the capacity achieving input distribution; 
  	\item[($d$)] follows from ($a$); and
  	\item[($e$)] follows from the fact that $H(Y|BC)=0$ as a direct result of selecting $C=(A,Y).$ 
\end{itemize}
Finally, we can see from \eqref{equ:SC4} that \eqref{equ:SC1}-\eqref{equ:SC3} are redundant and $R_o\geq 0$, $R_a\geq 0$. Similarly,  $\rho_2\geq0$ and by selecting $R_c$ to be arbitrary close to $H(Y)$, the local randomness rate $\rho_1$ is not required to achieve strong coordination in this case. As a result, the rate constraints of Theorem~\ref{Thm:JointCRR} are reduced to 
\begin{align*}
&H(Y)< R_c<\mathbb{C}_{P_{B|A}}. 
\end{align*}

Now for the converse part, assume that the rate tuple ($R_o,\rho_1,\rho_2$) is achievable by an $\epsilon$-code such that for a given $\epsilon>0$ $$\norm{\hat{P}_{X^nY^n}-Q^{n}_{XY}}_{{\scriptscriptstyle TV}} <\epsilon.$$

Let $T$ be is a time-sharing random variable uniformly distributed on the set $\{1,\dots, n\}$ and independent of the induced joint distribution. Then, it follows that 
\begin{align*}
nR_c &\geq  H(I) \geq H(I|J) \\
& \geq I(X^n;I|J)= I(X^n;IJ)\\
&=\sum_{t=1}^{n} I(X_t;IJ|X^{t-1})\\
&\stackrel{(a)}{=}\sum_{t=1}^{n} I(X_t;IJX^{t-1})\\
&\stackrel{(b)}{=}\sum_{t=1}^{n} I(X_t;C_t)\\
&= n I(X_T;C_T|T)\\
&\stackrel{(c)}{=}nI(X_T;C_T T)\\
&\stackrel{(d)}{=}nI(X;C)\\
&\stackrel{(e)}{=}n I(X;Y)\\
&\stackrel{(f)}{=} nH(Y)
\end{align*}
\noindent where
\begin{enumerate}
	\item[($a$)] because $X^n$ is i.i.d;
	\item[($b$)] follows by defining an auxiliary random variable $C_t\triangleq(I,J,X^{t-1})$;
	\item[($c$)] holds due to the fact that $X_T$ is independent of $T;$
	\item[($d$)] follows by defining $C\triangleq(C_T,T)$, and since $X^n$ is i.i.d.~we have $X_T=X;$ 
	\item[] \hspace{-3.5ex}($e$)-($f$) follows from the fact that $Y$ is a deterministic mapping of $C$ as a direct result of $C=(A,Y)$, that $A$ is independent of $Y,$ and $H(Y|X)=0$.
\end{enumerate}
Lastly, dividing by $n$, we obtain 
\begin{equation}\label{eq:conv1}
R_c \geq H(Y).
\end{equation} 

Now consider the argument
\begin{align*}
nR_c &= H(I)= H(I|\hat{I})+ I(I;\hat{I})\\
&\stackrel{(a)}{\leq}  1+ \mathbb P_e^{(n)}nR_c+ I(I;\hat{I})\\
&\stackrel{(b)} {\leq}  1+ \mathbb P_e^{(n)}nR_c+ I(IX^n;JB^n)\\
&= 1+ \mathbb P_e^{(n)}nR_c+ I(IJX^n;B^n)+H(J|B^n)-H(J|I)\\
&= 1+ \mathbb P_e^{(n)}nR_c+ I(C^n;B^n)+H(J|B^n)-H(J|I)\\
&\leq 1+ \mathbb P_e^{(n)}nR_c+H(J|B^n)-H(J|I)+\sum_{i=1}^{n}I(C_i;B_i)\\
&\stackrel{(c)}{=} 1+ \mathbb P_e^{(n)}nR_c+\sum_{i=1}^{n}I(A_i;B_i)\\
&\stackrel{(d)}{\leq} 1+ \mathbb P_e^{(n)}nR_c+n \mathbb{C}_{P_{B|A}}
\end{align*}
where
\begin{enumerate}
	\item[($a$)] holds by Fano's inequality where $\mathbb P_e^{(n)}\triangleq \mathbb P[\hat{I}\neq I];$
	\item[($b$)] at the decoder $H(\hat{I}|B^n,J)=0$;
	\item[($c$)] by letting $J\triangleq$ constant, i.e.,~$H(J|B^n)=H(J|I)=0$; and by selecting $C=(A,Y)$; and
	\item[($d$)] follows by the definition of the channel capacity for the channel $P_{B|A}$.
\end{enumerate}
Dividing by $n$, we obtain
\begin{align*}
R_c &\leq  \frac{1}{n}+ \mathbb P_e^{(n)}R_c+ \mathbb{C}_{P_{B|A}},
\end{align*}
Finally, by letting $n\rightarrow \infty$, we have $\mathbb P_e^{(n)}\rightarrow 0$ and hence
\begin{equation} \label{eq:conv2}
R_c \leq \mathbb{C}_{P_{B|A}}.
\end{equation}
By \eqref{eq:conv1} and \eqref{eq:conv2} the proof is complete.
\end{proof}

\subsubsection{Deterministic channel}
This is the case when the channel output $B$ is a deterministic function of the channel input $A$ i.e.,~$H(B|A)=0$. Although this special case is discussed in the context of simulating a DMC channel over a deterministic channel \cite{HYBGA17}, we present this case in the context of our achievable construction with rates as stated in Theorem~\ref{Thm:JointCRR}. 

In this case, we select the auxiliary random variables $A$ and $C$ as follows. Let $C=(U,A)$ and select $A$ independent of $X,Y,$ and $U$ with $P_A$ to be the capacity achieving input distribution of the channel $P_{B|A},$ i.e.,~$\mathbb{C}_{P_{B|A}}=\max_{P_A} H(B)$. Let $U$ be an auxiliary random variable related to $X$ and $Y$ via the Markov chain $X-U-Y$. As a result, the joint distribution of Theorem \ref{Thm:JointCRR} takes the form $P_{U}P_{A}P_{B|A}P_{X|U}P_{Y|U}$ and the problem reduces to a two-terminal strong coordination over a noiseless channel and a separate channel coding problem. Accordingly, from Theorem \ref{Thm:JointCRR}, the following rates are achievable. 
\begin{subequations}\label{equ:CRSC2}
\begin{align}
R_o+R_c \stackrel{(a)}{\geq} & I(XY;U) \label{equ:SC5}\\
R_c \stackrel{(b)}{\geq} & I(X;U) \label{equ:SC6}\\
R_c \stackrel{(c)}{\leq} & I(B;A)= H(B) =\mathbb{C}_{P_{B|A}} \label{equ:SC7}\\
\rho_1 \stackrel{(d)}{\geq}& R_c-I(X;U) \label{equ:SC8}\\ 
\rho_2 \stackrel{(e)}{\leq} & H(Y|U) \label{equ:SC9}
\end{align}
\end{subequations}

\noindent where ($a$)-($e$) follows from the choice of $A$ and $C=(U,A)$; (c) follows from the fact that the channel $P_{B|A}$ is deterministic and from the selection of $P_{A}$ to be the capacity achieving input distribution.

Now, the optimality of \eqref{equ:SC5}-\eqref{equ:SC9} follows in a straightforward way from separate channel coding for \eqref{equ:SC7} and strong coordination over noise-free channels \cite[Theorem~3]{VKB16} for the special case of a single hop.


\section{Separate Coordination-Channel Coding Scheme with Randomness Extraction}\label{sec:SepCoorRE}

As a basis for comparison, we will now introduce a separation-based scheme that involves randomness extraction. We first use a $(2^{nR_c},2^{nR_o},n)$ noiseless coordination code with the codebook $\mathscr{U}$ to a generate message $I$ of rate $R_c$. Such a code exists if and only if the rates $R_o, R_c$ satisfy \cite{cuff2010coordination}
\begin{equation}
\begin{aligned}
R_c+R_o &\geq I(XY;U),\\
R_c &\geq I(X;U).\notag
\end{aligned}
\end{equation}

This coordination message $I$ is then communicated over the noisy channel using
a rate-$R_a$ channel code over $m$ channel uses with codebook $\mathscr{A}$. Hence, $R_c= \lambda
R_a$, where $\lambda=m/n$. The probability of decoding error can be made vanishingly small if $R_a <I(A;B)$. Then,
from the decoder output $\hat{I}$ and the common randomness message $J$ we
reconstruct the coordination sequence $U^n$ and pass it though a test
channel $P(Y|U)$ to generate the action sequence at Node $\mathsf Y$. Note that this
separation scheme is constructed as a special case of the joint coordination-channel scheme in Fig.~\ref{fig:StrongCoordination} by choosing $C=U$
and $P_{AC} =P_A P_U$.

In the following, we restrict ourselves to additive-noise DMCs,
i.e.,~\begin{equation} \label{equ:additiveNoise}
B^m=A^m(I)+Z^m,
\end{equation} where $Z$ is the noise random variable drawn from
some finite field $\mathcal{Z}$, and ``$+$'' is the native addition operation in the field. To extract randomness, we exploit the additive nature of the channel to recover the realization of the channel noise from the decoded codeword. Thus, at the channel decoder output we obtain \begin{equation} \label{equ:NoiseSource}
\hat{Z}^m=B^m+A^m(\hat{I}),
\end{equation} where $B^m$ is the channel output and $A^m(\hat{I})$ the corresponding decoded channel codeword. We can then utilize a randomness extractor on $\hat{Z}^m$ to
supplement the local randomness available at Node $\mathsf Y$. The following lemma provides some guarantees with respect to the randomness extraction stage.  

\begin{lemma}
	\label{lem:rand}
	Consider the separation based scheme over a finite-field additive DMC. If $R_a<I(A;B)$ and we let ${m,n\rightarrow \infty}$ with $\frac{m}{n} = \lambda$, the following holds:
	\renewcommand{\theenumi}{\roman{enumi}} {
		\begin{enumerate}
			\item ${\mathbb P [ Z^m \neq \hat Z^m ] \rightarrow 0},$ \label{itm:proofItm1}
			\item ${\frac{1}{m}H(\hat Z^m) \rightarrow H(Z)},$ and \label{itm:proofItm2}
			\item ${I(\hat Z^m; I, \hat I) \rightarrow 0}$. \label{itm:proofItm3}
	\end{enumerate} }
\end{lemma}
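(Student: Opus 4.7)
My plan is to use throughout the observation that, since $R_a < I(A;B)$, one can choose a random channel codebook with a positive error exponent, so $p_e := \mathbb P[\hat I \neq I]$ decays \emph{exponentially} in $m$; in particular, $p_e \cdot m \to 0$. This stronger-than-generic form of the channel coding theorem will be essential for \textit{iii)}, and is really the only nontrivial channel-coding input the proof needs.

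For \textit{i)}, I would simply note from \eqref{equ:NoiseSource} that whenever $\hat I = I$ one has $\hat Z^m = Z^m$, so $\{Z^m \neq \hat Z^m\} \subseteq \{\hat I \neq I\}$ and $\mathbb P[Z^m \neq \hat Z^m] \leq p_e \to 0$. For \textit{ii)}, I would introduce $E := \mathds{1}(Z^m \neq \hat Z^m)$ and condition on it to obtain
\[
H(\hat Z^m \mid Z^m) \;\leq\; h_2(p_e) + p_e\, m \log |\mathcal Z|,
\]
with the symmetric bound for $H(Z^m \mid \hat Z^m)$. Combining with the identity $H(\hat Z^m) - H(Z^m) = H(\hat Z^m \mid Z^m) - H(Z^m \mid \hat Z^m)$ and $H(Z^m) = m H(Z)$ (memoryless noise), dividing by $m$ gives the claim.

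For \textit{iii)} the plan splits into two steps. First, I would establish $I(Z^m; I, \hat I) \to 0$. The chain rule together with $I(Z^m;I) = 0$ (the channel noise is independent of the transmitted codeword index) and $H(\hat I \mid I, Z^m) = 0$ (since $\hat I$ is a deterministic function of $B^m = A^m(I) + Z^m$) gives
\[
I(Z^m; I, \hat I) \;=\; H(\hat I \mid I) \;\leq\; h_2(p_e) + p_e\, m R_a,
\]
the last inequality by a conditional-Fano argument; this vanishes precisely because $p_e\cdot m \to 0$. Second, I would transfer from $Z^m$ to $\hat Z^m$: since $\mathbb P[\hat Z^m \neq Z^m] \leq p_e$, the same Fano-type bounds used in \textit{ii)} show that $|H(\hat Z^m) - H(Z^m)|$ and $|H(\hat Z^m \mid I, \hat I) - H(Z^m \mid I, \hat I)|$ are both upper bounded by $h_2(p_e) + p_e\, m \log |\mathcal Z| \to 0$, whence $I(\hat Z^m; I, \hat I) = I(Z^m; I, \hat I) + o(1) \to 0$.

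The hard part will be \textit{iii)}: the Fano bounds are useful only when $p_e \cdot m \to 0$, so a weak channel coding theorem yielding merely $p_e \to 0$ is insufficient. The proof must invoke the error-exponent form, i.e., the existence of channel codes achieving $p_e \leq e^{-m E(R_a)}$ with $E(R_a)>0$ for every $R_a < I(A;B)$, which is standard for random codes. Given this, items \textit{i)}--\textit{iii)} all follow from the Fano/coupling template above without further subtlety.
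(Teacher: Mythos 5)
Your proposal is correct and follows essentially the same route as the paper's proof: an exponentially small decoding error probability from the channel coding theorem (so that $p_e\, m \to 0$), the inclusion $\{Z^m\neq\hat Z^m\}\subseteq\{\hat I\neq I\}$ for part \textit{i)}, and Fano-type bounds combined with the independence of the additive noise from the message index for parts \textit{ii)} and \textit{iii)}. The only cosmetic difference is in \textit{iii)}, where the paper bounds $I(\hat Z^m;I,\hat I)$ directly through $I(Z^m,\hat Z^m;I,\hat I)\leq H(\hat Z^m|Z^m)+H(\hat I|I)$ instead of first evaluating $I(Z^m;I,\hat I)=H(\hat I|I)$ and then transferring to $\hat Z^m$, but both arguments rest on the same estimates.
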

\begin{proof}
	Let $P_e$ be the probability of decoding error (i.e.,~$P_{I_e}=\mathbb P[I\neq \hat{I}]$ and $P_{Z_e}=\mathbb P[Z^m \neq
	\hat{Z}^m]$).
	We first show the claim in~\ref{itm:proofItm1}{)}. From the channel coding theorem we obtain that
	$P_{I_e} \leq 2^{-n\epsilon'}$ for some $\epsilon'> 0$. Consequently, from~\eqref{equ:additiveNoise} and~\eqref{equ:NoiseSource} $\mathbb P[Z^m \neq
	\hat{Z}^m]$ will follow directly as $P_{Z_e} \leq
	2^{-m\epsilon'}$.
	
	\noindent Then, the claim in~\ref{itm:proofItm2}{)} is shown as follows: 
	\begin{equation*}
	\begin{split}
	H(\hat{Z}^m)& \stackrel{(a)}{\leq} H(Z^m)+ H(\hat{Z}^m|Z^m)\\
	&\stackrel{(b)}{\leq} mH(Z) + h_2(P_{Z_e})+ P_{Z_e}m\log|{\cal Z}|
	\end{split}
	\end{equation*}
	\begin{equation*}
	\begin{split}
	{\textstyle \frac{1}{m}}H(\hat{Z}^m)&\leq H(Z) + {\textstyle \frac{1}{m}} h_2(P_{Z_e})+ P_{Z_e}\log|{\cal Z}|\\
	{\textstyle \frac{1}{m}}H(\hat{Z}^m)& \xrightarrow{P_{Z_e}\rightarrow 0} H(Z) 				
	\end{split}
	\end{equation*}
	where\\
	($a$) follows from the chain rule of entropy;\\
	($b$) follows from Fano's inequality and the fact that $Z^m \sim  \prod_{l=1}^{m} P_{Z}(\cdot) $;\\ 
	\noindent Finally, the claim in~\ref{itm:proofItm3}{)} is shown by the following chain of inequalities:
	\begin{equation*}\label{equ:SepSchRE}
	\begin{split}
	I(\hat{Z}^m;I,\hat{I}) & \leq I(Z^m,\hat{Z}^m;I,\hat{I})\\
	& \leq I(Z^m, \hat{Z}^m;I)+H(\hat{I}|I)\\
	& = H(\hat{Z}^m|Z^m)-H(\hat{Z}^m|Z^m,I)+ H(\hat{I}|I)\\
	& \leq H(\hat{Z}^m|Z^m)+ H(\hat{I}|I)\\
	& \stackrel{(a)}{\leq} h_2(P_{Z_e})+ P_{Z_e}m\log|{\cal Z}| + h_2(P_{I_e})+P_{I_e}nR_c\\
	& \stackrel{(b)}{\leq}\epsilon 
	\end{split}
	\end{equation*}
	where
	($a$) follows from Fano's inequality; ($b$) follows from 
	$P_{I_e} \leq 2^{-n\epsilon'}$, $P_{Z_e} \leq 2^{-m\epsilon'}$ and $\epsilon,\epsilon'\rightarrow 0$ as $n,m\rightarrow \infty$ respectively.	
\end{proof}

Now, similar to the joint scheme, we can quantify the local randomness at both nodes \cite{VKB16}, and set $\lambda=1$ to facilitate a comparison with the joint scheme from Section \ref{sec:JointScheme}. The following theorem then describes an inner bound to the strong coordination region using the separate-based scheme with randomness extraction. 
\begin{theorem}\label{Thm:SepSchCRR}
	There exists an achievable separation based coordination-channel coding scheme for the strong setup in Fig~\ref{fig:P2PCoordination} such that (\ref{eq:StrngCoorCondtion}) is satisfied if 
	\begin{subequations}\label{equ:SSRR}
		\begin{align}
		R_c+R_o &\geq I(XY;U), \label{equ:SSRR1}\\ 
		R_c &\geq I(X;U),\label{equ:SSRR2}\\
		R_c &<I(A;B),\label{equ:SSRR3}\\ 
		\rho_1 &\geq R_c-I(X;U), \label{equ:SSRR4}\\ 
		\rho_2 &\geq \max \big(0,H(Y|U)-H(Z)\big). \label{equ:SSRR5} 
		\end{align}
	\end{subequations}	
\end{theorem}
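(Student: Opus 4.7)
The plan is to construct the separation-based scheme as a concatenation of three ingredients---a noiseless strong coordination code, a capacity-achieving channel code, and a test-channel simulator at Node $\mathsf Y$ that reuses the extracted noise---and then to verify the coordination condition via a triangle-inequality argument analogous to the one at the end of Section~\ref{sec:JointSchemeAchievability}.

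First I would invoke a point-to-point noiseless strong coordination code with rates $(R_c,R_o)$ satisfying \eqref{equ:SSRR1}--\eqref{equ:SSRR2} (from \cite{cuff2010coordination}, or equivalently by specializing Theorem~\ref{Thm:JointCRR} to $C=U$, $P_{AC}=P_AP_U$, and a noiseless link that makes \eqref{equ:JSRR5} moot). This code yields an $(I,J)\!\leftrightarrow\!U^n$ mapping whose induced pmf $\tilde P_{X^nU^n}$ is $\epsilon$-close in total variation to $Q^n_{XU}$, and by the list-decoding argument used in Section~\ref{sec:JointSchemeAchievability} and~\cite[Section~IV.B]{VKB16} it requires $\rho_1\geq R_c-I(X;U)$ bits of local randomness at Node $\mathsf X$, giving \eqref{equ:SSRR4}. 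Next, I would transmit the coordination index $I$ using a standard rate-$R_c$ channel code over the DMC $P_{B|A}$; any $R_c<I(A;B)$ yields $\mathbb P[\hat I\neq I]\leq 2^{-n\epsilon'}$ by the channel coding theorem, so on the decoding-success event Node $\mathsf Y$ recovers $U^n$ exactly, and by \cite[Lemma V.1]{cuff2013distributed} the induced $\hat P_{X^nU^n}$ remains within vanishing total variation of $Q^n_{XU}$.

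The main and most delicate step is the simulation of the test channel $P^n_{Y|U}$ at Node $\mathsf Y$ while minimizing the local randomness budget. A direct synthesis of $P^n_{Y|U}$ costs $nH(Y|U)$ near-uniform random bits~\cite{Steinberg-Verdu-IT-1994}; the idea is to supply most of these bits from the extracted noise sequence $\hat Z^m$ and to draw only the remainder from $M_2$. Lemma~\ref{lem:rand}\ref{itm:proofItm2}{)} shows that $\hat Z^m$ delivers rate $H(Z)$, while Lemma~\ref{lem:rand}\ref{itm:proofItm3}{)} shows that $\hat Z^m$ is asymptotically independent of $(I,\hat I)$, so feeding $\hat Z^m$ together with $n\rho_2$ bits of genuine local randomness through a near-uniform extractor produces the required $nH(Y|U)$ bits without biasing the induced joint distribution. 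Setting $\lambda=1$, this yields $\rho_2\geq\max\bigl(0,H(Y|U)-H(Z)\bigr)$, which is \eqref{equ:SSRR5}.

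Closing the argument uses the triangle inequality
\begin{align*}
\norm{\hat P_{X^nY^n}-Q^n_{XY}}_{{\scriptscriptstyle TV}}
&\leq \norm{\hat P_{X^nY^n}-\tilde P_{X^nU^n}P^n_{Y|U}}_{{\scriptscriptstyle TV}}+\norm{\tilde P_{X^nU^n}P^n_{Y|U}-Q^n_{XY}}_{{\scriptscriptstyle TV}},
\end{align*}
where the second term vanishes from the first step together with the Markov chain $X\!-\!U\!-\!Y$ built into the auxiliary variable, and the first term vanishes from the channel-decoding guarantee of the second step and from the accuracy of the test-channel simulator of the third step. The principal obstacle I expect is precisely the independence accounting of the third step: one must verify that reusing $\hat Z^m$ inside the simulator does not introduce spurious correlation between $Y^n$ and $(X^n,J)$ beyond what $Q^n_{XY}$ prescribes, and this is exactly what Lemma~\ref{lem:rand}\ref{itm:proofItm3}{)} is designed to rule out.
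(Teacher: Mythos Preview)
Your proposal is correct and follows essentially the same route as the paper: the paper states that the proof ``follows in a straightforward way from the proof of Theorem~\ref{Thm:JointCRR} and Lemma~\ref{lem:rand} and is omitted,'' and your three ingredients---specializing the joint scheme via $C=U$, $P_{AC}=P_AP_U$ for \eqref{equ:SSRR1}, \eqref{equ:SSRR2}, \eqref{equ:SSRR4}, channel coding for \eqref{equ:SSRR3}, and Lemma~\ref{lem:rand} for the randomness-extraction step yielding \eqref{equ:SSRR5}---are exactly the pieces the paper points to. Your closing triangle-inequality argument and your identification of Lemma~\ref{lem:rand}\ref{itm:proofItm3}{)} as the key independence guarantee are likewise in line with the paper's intended reasoning.
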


The proof follows in a straightforward way from the proof of
Theorem~\ref{Thm:JointCRR} and Lemma~\ref{lem:rand} and is omitted. 

\section{Nested Polar Code for Strong Coordination over Noisy Channels} \label{sec:PolarCode}

Since the proposed joint coordination-channel coding scheme, displayed in Fig.~\ref{fig:StrongCoordination}, is based
on a channel resolvability framework, we adopt a channel resolvability-based polar construction for noise-free strong coordination \cite{chou2016empirical} in combination with polar coding for the degraded broadcast channel \cite{PC_BCGoela2015}.
We now propose a scheme based on polar coding that achieves the inner bound stated in Theorem~\ref{Thm:JointCRR}.

\subsection{Coding Scheme}\label{sec:PolarCodingScheme}
Consider the random variables $X,Y,A,B,C,\widehat{C}$ distributed according to
$Q_{XYABC\widehat{C}}$ over
$\mathcal{X}\times\mathcal{Y}\times\mathcal{A}\times\mathcal{B}\times\mathcal{C}$
such that $X-(A,C)-(B,\widehat{C})-Y$ forms a Markov chain. Assume that $|A|=2$ and the
target joint distribution over the actions $X$ and $Y$, $Q_{XY}$, is achievable with $|C|=2$\footnote[$\dagger$]{For the
	sake of exposition, we only focus on the set of joint
	distributions over $\mathcal{X}\times\mathcal{Y}$ that are
	achievable with binary auxiliary random variables $C, A$, and over
	a binary-input DMC. The scheme can be generalized to non-binary $C, A$ with non-binary polar codes in a straightforward way~\cite{STA09}.}. Let $N\triangleq2^n, n\in \mathbb{N}$. We describe the polar coding scheme as follows.  

\begin{figure}[h]
	\centering
	{\includegraphics[scale=0.87]{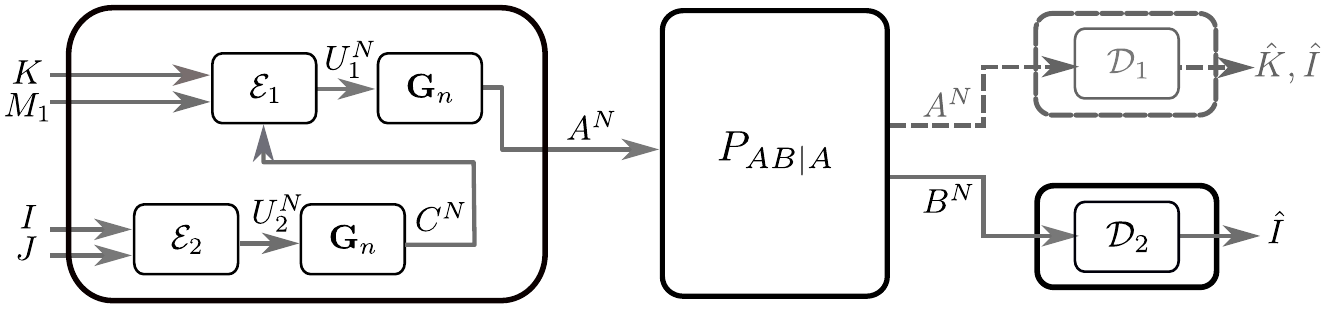}}
	\caption{Block diagram of the superposition polar code.}
	\label{fig:BroadcastChannelPC}
\end{figure}

Consider a 2-user \emph{physically} degraded discrete memoryless
broadcast channel (DM-BC) $P_{AB|A}$ in
Fig.~\ref{fig:BroadcastChannelPC} where $A$ denotes the channel
input and $A,B$ denote the output to the first and second receiver,
respectively. In particular, the channel DMC $P_{B|A}$ is physically
degraded with respect to the \emph{perfect} channel $P_{A|A}$
(we denote this as~$P_{A|A} \succ P_{B|A}$).
We construct the nested polar coding scheme in a similar fashion as
in \cite{PC_BCGoela2015} as this mimics the nesting of the codebooks
$\mathscr{C}$ and $\mathscr{A}$ in Step {i)} of the random coding
construction in~Section~\ref{sec:JointSchemeAchievability}. Here, the second
(weaker) user is able to recover an estimate $\hat{I}$ for its intended message $I$, while the
first (stronger) user is able to recover estimates $\hat{K}, \hat{I}$ for both messages $K$ and
$I$, respectively. Let $C$ be the auxiliary random variable (cloud center)
required for superposition coding over the DM-BC leading to the
Markov chain $C-A-(A,B)$. As a result, the channel $P_{B|C}$ is also degraded with respect to $P_{A|C}$ (i.e.,~$P_{A|C} \succ P_{B|C}$) \cite[Lemma 3]{PC_BCGoela2015}. Note that we let $\widehat{C}$ be the random variable resulting from recovering $C^N$ at Node $\mathsf Y$ from $\hat{I}$ and the shared randomness message $J$. Let ${\bf V}$ be a matrix of the selected codewords $A^N$ and $C^N$ as
\begin{equation}\label{eq:codewordsMtrx}
{\bf V}\triangleq \begin{bmatrix}
A^{N}  \\
C^{N}          
\end{bmatrix}. 
\end{equation}
Now, apply the polar linear transformation ${\bf G}_n,$ where ${\bf G}_n$ is defined in Section~\ref{sec:notation}, as 
\begin{equation}\label{eq:PolztnMtrx}
{\bf U} \triangleq \begin{bmatrix}
U_{1}^{N}  \\
U_{2}^{N}          
\end{bmatrix} = {\bf V} {\bf{G}}_{n}, 
\end{equation}

\noindent where the joint distribution of the random variables in ${\bf U}$ is given by $Q^N_{U_1 U_2}(u_1^N, u_2^N)= Q^N_{AC}(u_1^N{\bf G}_n, u_2^N{\bf G}_n)$. First, consider $C^{N} \triangleq U_{2}^{N} {\bf G}_n$ from \eqref{eq:codewordsMtrx} and \eqref{eq:PolztnMtrx} where $U_{2}^{N}$ is generated by the second encoder ${\cal E}_2$ in Fig.~\ref{fig:BroadcastChannelPC}. For $\beta<\frac{1}{2}$ and  ${\delta_N\triangleq 2^{-N^\beta}}$ we define the very high and high entropy sets
\begin{subequations}\label{eq:C_Sets}
\hspace*{-0.75em}\begin{align}	
{\cal V}_{C} &\triangleq \{i\in \llbracket 1,N \rrbracket    : H(U_{2,i}|U_{2}^{i-1})>1-\delta_{N}\}, \label{eq:C_Sets1}\\
{\cal V}_{C|X} &\triangleq \{i\in \llbracket 1,N \rrbracket  : H(U_{2,i}|U_{2}^{i-1}X^N)>1-\delta_{N}\} \subseteq {\cal V}_{C}, \label{eq:C_Sets2}\\
{\cal V}_{C|XY} &\triangleq \{i\in \llbracket 1,N \rrbracket : H(U_{2,i}|U_{2}^{i-1}X^NY^N)>1-\delta_{N}\}  \subseteq {\cal V}_{C|X}, \label{eq:C_Sets3}\\
{\cal H}_{C|B} &\triangleq \{i\in \llbracket 1,N \rrbracket  : H(U_{2,i}|U_{2}^{i-1}B^N)>\delta_{N}\},\label{eq:C_Sets4}\\
{\cal H}_{C|A} &\triangleq \{i\in \llbracket 1,N \rrbracket  : H(U_{2,i}|U_{2}^{i-1}A^N)>\delta_{N}\}, \label{eq:C_Sets5} 
\end{align}
\end{subequations}

\noindent which by \cite[Lemma 7]{PC_BCRameBloch16} satisfy
\begin{align*}
\lim_{N\rightarrow \infty} \frac{|{\cal V}_{C}|}{N} &= H(C),\; & \lim_{N\rightarrow \infty} \frac{|{\cal V}_{C|X}|}{N} &= H(C|X),\\
\lim_{N\rightarrow \infty} \frac{|{\cal V}_{C|XY}|}{N} &=H(C|XY),\; &\lim_{N\rightarrow \infty} \frac{|{\cal H}_{C|B}|}{N} &=H(C|B),\\
\lim_{N\rightarrow \infty} \frac{|{\cal H}_{C|A}|}{N} &=H(C|A).
\end{align*}

These sets are illustrated in Fig.~\ref{fig:CoordiationSets1}. Note
that the set ${\cal H}_{C|B}$ (exemplary denoted in red in Fig.~\ref{fig:CoordiationSets1}) indicates the noisy bits of the DMC
$P_{B|C}$ (i.e., the unrecoverable bits of the codeword $C^N$ intended
for the weaker user in the DM-BC setup in Fig.~3) and is in general not aligned with other sets. Let
\begin{align*}
{\cal L}_1& \triangleq {\cal V}_{C}\setminus {\cal H}_{C|A} , &&{\cal L}_2\triangleq{\cal V}_{C}\setminus {\cal H}_{C|B},\vspace*{-0.5ex}
\end{align*}

\begin{figure}[h]
	\centering
	{\includegraphics[scale=0.665]{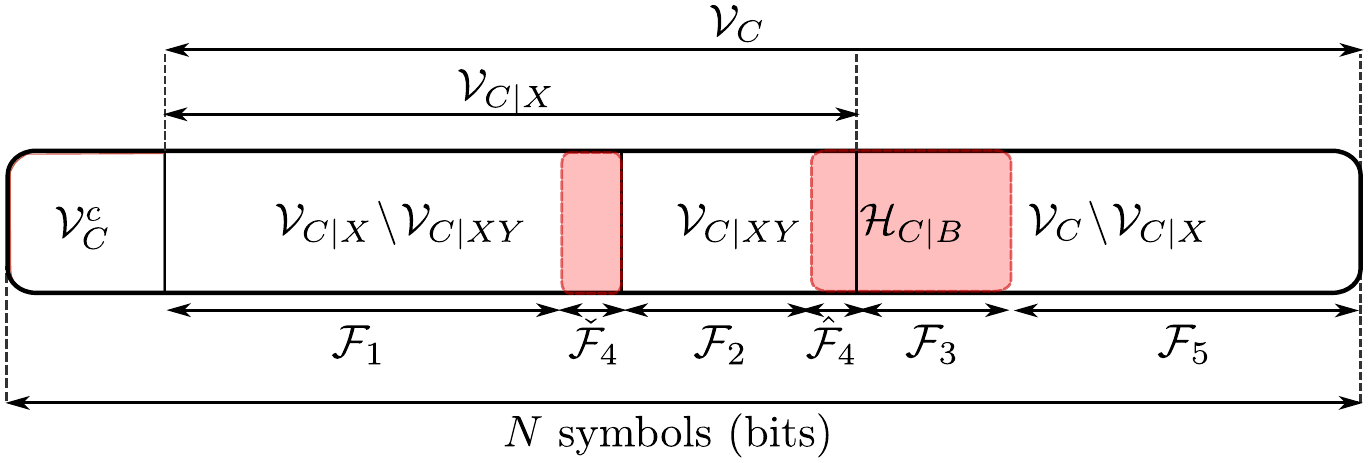}}
	\caption{Index sets for codeword $C$.}
	\label{fig:CoordiationSets1}
\end{figure}

\noindent where the set ${\cal H}_{C|A}$ indicates the noisy bits of the DMC
$P_{A|C}$ (i.e., the unrecoverable bits of the codeword $C^N$  intended
for the stronger user). From the relation $P_{A|C} \succ P_{B|C}$ we
obtain ${\cal H}_{C|A} \subseteq {\cal H}_{C|B}$ and ${\cal
  H}_{C|B}^{c} \subseteq {\cal H}_{C|A}^{c} $, respectively. This
ensures that the polarization indices are guaranteed to be aligned
(i.e.,~ $ {\cal L}_2 \subseteq {\cal L}_1$)
\cite{PC_BCMondelli2014},\cite[Lemma 4]{PC_BCGoela2015}. As a
consequence, the bits decodable by the weaker user are also decodable by the stronger user. 

Accordingly, in terms of the polarization sets in \eqref{eq:C_Sets1}-\eqref{eq:C_Sets4} we define the sets combining channel resolvability for strong coordination and broadcast channel construction as
\begin{align*}
{\cal F}_1&\triangleq ({\cal V}_{C|X}\setminus {\cal V}_{C|XY})\cap {\cal H}_{C|B}^{c},&\hat{\cal F}_4 &\triangleq {\cal H}_{C|BXY}, \\
{\cal F}_2 &\triangleq {\cal V}_{C|XY} \cap {\cal H}_{C|B}^{c},& \check{\cal F}_4 &\triangleq {\cal H}_{C|BX}\!\setminus\!{\cal H}_{C|BXY},\\
{\cal F}_3 &\triangleq  {\cal V}_{C|X}^c \cap {\cal H}_{C|B} ={\cal H}_{C|B}\setminus {\cal H}_{C|BX},& {\cal F}_5&\triangleq ({\cal V}_{C}\setminus {\cal V}_{C|X})\cap {\cal H}_{C|B}^{c} \\
{\cal F}_4&\triangleq {\cal V}_{C|X} \cap {\cal H}_{C|B} = {\cal H}_{C|BX}.
\end{align*}

Now, consider $A^{N}\triangleq U_{1}^{N} {\bf G}_n $ (see
\eqref{eq:codewordsMtrx} and \eqref{eq:PolztnMtrx}), where
$U_{1}^{N}$ is generated by the first encoder ${\cal E}_1$ with
$C^N$ as a side information as seen in
Fig.~\ref{fig:BroadcastChannelPC}. We define the very high entropy
sets illustrated in Fig.~\ref{fig:CoordiationSets2} as
\begin{subequations}\label{eq:A_Sets}
\begin{align}
{\cal V}_{A} &\triangleq \{i\in \llbracket 1,N \rrbracket : H(U_{1,i}|U^{i-1}_1)\!>\! 1\!-\!\delta_N\!\},\label{eq:A_Sets1}\\
{\cal V}_{A|C} &\triangleq \{i\in \llbracket 1,N \rrbracket : H(U_{1,i}|U_{1}^{i-1}C^N)\!>\! 1\!-\!\delta_N \!\}\subseteq {\cal V}_{A}, \label{eq:A_Sets2}\\
{\cal V}_{A|CX} &\triangleq \{i\in \llbracket 1,N \rrbracket : H(U_{1,i}|U_{1}^{i-1}C^NX^N)\!>\! 1\!-\!\delta_N\!\} \subseteq {\cal V}_{A|C},\label{eq:A_Sets3}\\
{\cal V}_{A|CXY} &\triangleq \{i\in \llbracket 1,N \rrbracket : H(U_{1,i}|U_{1}^{i-1}C^NX^NY^N)\!>\! 1\!-\!\delta_N\!\} \subseteq {\cal V}_{A|CX},\label{eq:A_Sets4}
\end{align}
\end{subequations}
\noindent	
satisfying
\begin{align*}
\lim_{N\rightarrow \infty} \frac{|{\cal V}_{A}|}{N} &= H(A),& \lim_{N\rightarrow \infty} \frac{|{\cal V}_{A|CX}|}{N} &=H(A|CX),\\
\lim_{N\rightarrow \infty} \frac{|{\cal V}_{A|C}|}{N} &= H(A|C), &\lim_{N\rightarrow \infty} \frac{|{\cal V}_{A|CXY}|}{N} &=H(A|CXY).\\
\end{align*}

\noindent Note that, in contrast to Fig.~\ref{fig:CoordiationSets1}, here
there is no channel dependent set overlapping with all other sets as
$P_{A|A}$ is a noiseless  channel with rate $H(A)$ and hence ${\cal H}_{A|A}=\emptyset$.

\begin{figure}[h]
	\centering
	{\includegraphics[scale=0.665]{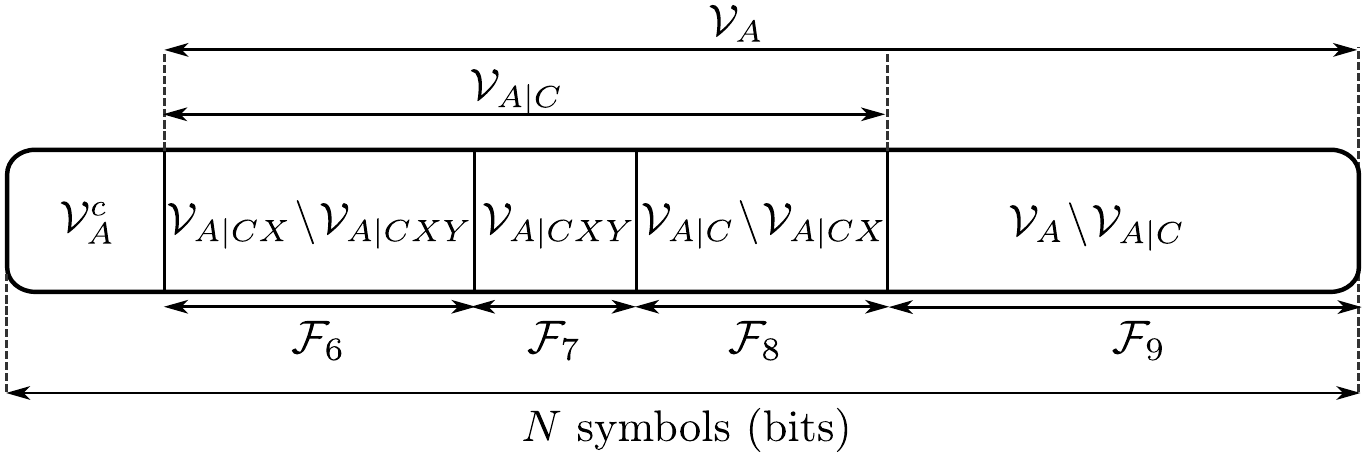}}
	\caption{Index sets for codeword $A$.}
	\label{fig:CoordiationSets2}
\end{figure}

Similarly, in terms of the polarization sets in \eqref{eq:A_Sets1}-\eqref{eq:A_Sets4} we define the sets combining channel resolvability for strong coordination and broadcast channel construction as shown in Fig.~\ref{fig:CoordiationSets2}
\begin{align*}
{\cal F}_6&\triangleq {\cal V}_{A|CX}\setminus {\cal V}_{A|CXY},& {\cal F}_8&\triangleq {\cal V}_{A|C}\setminus {\cal V}_{A|CX},\\
{\cal F}_7&\triangleq {\cal V}_{A|CXY},& {\cal F}_9&\triangleq {\cal V}_{A}\setminus {\cal V}_{A|C}.\\ 
\end{align*}

Finally, we define the sequence $T^N$ as the polar linear transformation of $Y^N$ i.e., $T^{N}\triangleq Y^{N}{\bf G}_n$. Now consider $Y^{N}= T^{N} {\bf G}_n$. By invertibility of ${\bf G}_n$ we define the very high entropy set:
\begin{equation}\label{eq:Y_Set}
{\cal V}_{Y|BC} \triangleq \{i\in \llbracket 1,N \rrbracket: H(T_i|T^{i-1}B^NC^N)\!>\! \log|\mathcal{Y}|-\delta_N\},
\end{equation}
satisfying
\begin{equation*} 
\lim_{N\rightarrow \infty} \frac{|{\cal V}_{Y|BC}|}{N} = H(Y|BC). 
\end{equation*}

This set is useful for expressing the randomized generation of $Y^N$ via simulating the channel $P_{Y|BC}$ in Fig.~\ref{fig:StrongCoordination} as a source polarization operation \cite{SPolarztion2010Arikan,chou2016empirical}. Note that here, we let ${\bf G}_n$ be a polar code generator matrix defined appropriately based on the alphabet of $Y$, e.g., if $|\cal Y|$ is the prime number $q\geq2$, ${\bf G}_n$ is as defined in
	Section~\ref{sec:notation}. However, the matrix operation is now
        carried out in the Galois field $GF(q)$ and the entropy terms of the polarization sets are calculated with respect to base-$q$ logarithms \cite[Theorem~4]{SPolarztion2010Arikan}. We now proceed to describe the encoding and decoding algorithms.

\subsubsection{Encoding}
The encoding protocol described in Algorithm~1 is performed over $k
\in \mathbb{N}$ blocks of length $N$ resulting in a storage complexity
  of ${\cal O}(kN)$ and a time complexity of ${\cal O}(kN\log N)$. In
  Algorithm~1 we use the \emph{tilde} notation (i.e.,~$\widetilde{U}_1^N,
  \widetilde{U}_2^N,$ $\widetilde{A}^N,$ and $\widetilde{C}^N$) to denote
  the change in the statistics of the length-$N$ random variables
  (i.e.,~$U_1^N, U_2^N,$${A}^N,$ and ${C}^N$) as a result of inserting
  uniformly distributed message and randomness bits at specific indices
  during encoding. Since for strong coordination
the goal is to approximate a target joint distribution with the minimum amount of randomness, the encoding scheme performs channel resolvability while reusing a fraction of the common randomness over several blocks (i.e.,  
randomness recycling) as in \cite{chou2016empirical}. The encoding scheme also leverages a
block chaining construction
\cite{PC_BCMondelli2014,PC_BCRameBloch16,PCempirical2016,GKLJB16} to
achieve the rates stated in Theorem~1.

\begin{figure}[h]
	\hspace*{-10ex}\centering
	{\includegraphics[scale=0.6]{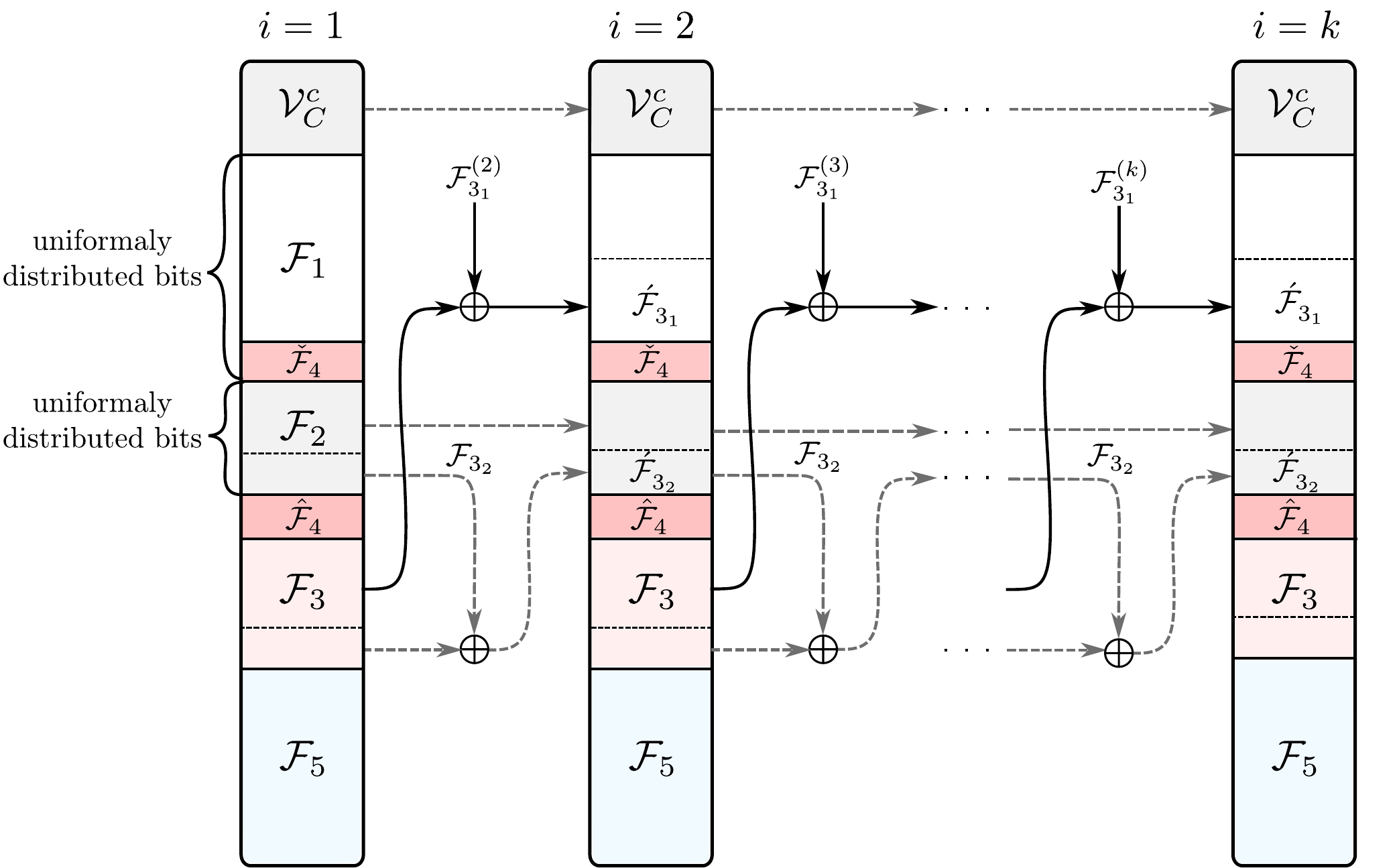}}
	\caption{Chaining construction for block encoding.}
	\label{fig:ChanningConstruction}
\end{figure}

\begin{table}[h!]
	\normalsize
	\begin{tabular}{p{0.95\linewidth}}
		\specialrule{.1em}{.05em}{.05em} 
		\rule{0pt}{2.5ex} 
		\hspace{-0.5em}\noindent\textbf{Algorithm 1:} Encoding algorithm at Node $\mathsf X$ for strong coordination\\
		\specialrule{.1em}{.05em}{.05em} \specialrule{.1em}{.05em}{.05em} 
		\rule{0pt}{2.5ex}  
		\hspace{-0.35em}\textbf{Input:} $X_{1:k}^N$, uniformly distributed local randomness bits $M_{1_{1:k}}$ of size $k|{\cal F}_6|$, common randomness bits reused over $k$ blocks $\bar{J}=(\bar{J}_1,\bar{J}_2)$ of sizes $|{\cal F}_2\cup \hat{\cal F}_4|$, and $|{\cal F}_7|$, respectively, and uniformly distributed common randomness bits for each block $J_{1:k}$, each of size $k|\check{\cal F}_4\cup {\cal F}_1|$, shared with Node $\mathsf Y$.
		
		\textbf{Output:} $\widetilde{A}_{1:k}^N$\\
		1.~\textbf{for} $i=2,\dots,k$ \textbf{do}\\
		2.~${\cal E}_2$ in Fig.~\ref{fig:BroadcastChannelPC} constructs $\widetilde{U}_{2_i}^N$ bit-by-bit as follows:\\
		~~\textbf{if} $i=1$ \textbf{then} 
		\begin{itemize}
			\item $\widetilde{U}_{2_i}^N[{\cal F}_1 \cup \check{\cal F}_4] \leftarrow {J}_{i}$ 
			\item $\widetilde{U}_{2_i}^N[{\cal F}_2 \cup \hat{\cal F}_4 ] \leftarrow \bar{J}_1$
		\end{itemize}
		~~\textbf{else}
		\begin{itemize}
			\item Let ${\cal F}_{3_1}^{(i)}$, ${\cal F}_{3_2}$ be sets of the size $(|{\cal F}_m|\times|{\cal F}_3|)/(|{\cal F}_1|+ |{\cal F}_2|)$ for $m\in\{ 1,2\}.$
			\item $\big(\widetilde{U}_{2_i}^N[({\cal F}_1\setminus\acute{\cal F}_{3_1}) \cup \check{\cal F}_4], {\cal F}_{3_1}^{(i)}\big) \leftarrow {J}_{i}$ 
			\item $\big(\widetilde{U}_{2_i}^N[({\cal F}_2\setminus\acute{\cal F}_{3_2}) \cup \hat{\cal F}_4 ],{\cal F}_{3_2} \big) \leftarrow \bar{J}_1$
			\item $\widetilde{U}_{2_i}^N[\acute{\cal F}_{3_1} ] \leftarrow \widetilde{U}_{2_{i-1}}^N[{\cal F}_3\setminus {\cal F}_{3_2}] \oplus {\cal F}_{3_1}^{(i)} $
			\item $\widetilde{U}_{2_i}^N[\acute{\cal F}_{3_2}] \leftarrow \widetilde{U}_{2_{i-1}}^N[{\cal F}_3\setminus {\cal F}_{3_1}] \oplus {\cal F}_{3_2} $
		\end{itemize}
		~~\textbf{end}
		\begin{itemize}
			\item Given $X_i^N$, successively draw the remaining components of $\widetilde{U}_{2_i}^N$  according to $\tilde{P}_{U_{2_i,j}|U_{2_i}^{j-1}X_i^N}$ defined by
			\begin{equation}\label{eq:MsgEncC}
			\hspace*{-1.7ex} \begin{aligned}
			\tilde{P}_{U_{2_i,j}|U_{2_i}^{j-1}X_i^N} \triangleq  \begin{cases}
			{Q}_{U_{2,j}|U_{2}^{j-1}} & \!j \in  {\cal V}_{C}^{c}, \\
			{Q}_{U_{2,j}|U_{2}^{j-1}X^N} & \!j \in {\cal F}_3 \cup {\cal F}_5.
			\end{cases}
			\end{aligned}
			\end{equation} 
		\end{itemize}
		3.~$\widetilde{C}_{i}^{N} \leftarrow \widetilde{U}_{2_i}^{N} {\bf G}_n $ \\
		4.~${\cal E}_1$ in Fig.~\ref{fig:BroadcastChannelPC} constructs $\widetilde{U}_{1_i}^N$ bit-by-bit as follows: 
		\begin{itemize}
			\item $\widetilde{U}_{1_i}^N[{\cal F}_6] \leftarrow {M}_{1_i}$
			\item $\widetilde{U}_{1_i}^N[{\cal F}_7] \leftarrow \bar{J}_2$
			\item Given $X_i^N$ and $ \widetilde{C}_{i}^{N}$, successively draw the remaining components of $\widetilde{U}_{1_i}^N$ according to $\tilde{P}_{U_{1_i,j}|U_{1_i}^{j-1}C_i^NX_i^N}$ defined by
			\vspace{-1.5ex}\begin{equation}\label{eq:MsgEncA}
			\hspace*{-2.52ex} \begin{aligned}
			\tilde{P}_{U_{1_i,j}|U_{1_i}^{j-1}C_i^NX_i^N} \triangleq  \begin{cases}
			{Q}_{U_{1,j}|U_{1}^{j-1}} & j \in  {\cal V}_{A}^{c}, \\
			{Q}_{U_{1,j}|U_{1}^{j-1}C^N} & j \in {\cal F}_9, \\
			{Q}_{U_{1,j}|U_{1}^{j-1}C^NX^N} & j \in {\cal F}_8.
			\end{cases}
			\end{aligned}
			\end{equation}
		\end{itemize}
		5.~$\widetilde{A}_i^{N} \leftarrow \widetilde{U}_{1_i}^{N} {\bf G}_n $\\
		6.~Transmit $\widetilde{A}_{i}^N$\\
		7.~\textbf{end for}\\
		\specialrule{.1em}{.05em}{.05em}
	\end{tabular}
\end{table}
\vspace{-0.5ex}

More precisely, as demonstrated in Fig.~\ref{fig:StrongCoordination}, we are interested
in successfully recovering the message ${I}$ that is intended for the channel of the weak user $P_{B|A}$ in Fig.~\ref{fig:BroadcastChannelPC}. However, the challenge is to communicate the set ${\cal F}_3$ that includes bits of the message $I$ that are corrupted by the channel
noise. This suggests that we apply a variation of block chaining
only at  encoder ${\cal E}_2$, generating the codeword $C^N$ as
follows (see  Fig.~\ref{fig:ChanningConstruction}).
At encoder ${\cal E}_2$, the set ${\cal F}_3$
of block $i\in \llbracket 1,k \rrbracket$ is embedded in the reliably decodable bits of ${\cal F}_1
\cup {\cal F}_2$ of the following block $i+1$. This is possible by
following the  decodability constraint (see \eqref{equ:JSRR4},
\eqref{equ:JSRR5} of Theorem 1) that ensures that the size of the set
${\cal F}_3$ is smaller than the combined size of the sets ${\cal F}_1$
and ${\cal F}_2$ \cite{PCempirical2016}. However, since
these sets originally contain uniformly distributed
common randomness  $J$ \cite{chou2016empirical}, the bits of ${\cal F}_3$ can be
embedded while maintaining the uniformity of the randomness  by taking advantage of the Crypto Lemma \cite[Lemma
2]{Forney2004role}, \cite[Lemma 3.1]{Bloch2011}. Then, to ensure that ${\cal F}_3$ is equally distributed over ${\cal F}_1 \cup {\cal F}_2$, ${\cal
	F}_3$ is partitioned according to the ratio between $|{\cal F}_1|$ and $|{\cal
	F}_2|$. To utilize the Crypto Lemma, we introduce $\mathcal{F}_{3_2}$ and
$\mathcal{F}_{3_1}^{(i)}$, which represent uniformly distributed common 
randomness used to randomize the information bits of
$\mathcal{F}_3$. The difference is that $\mathcal{F}_{3_2},$ as ${\cal F}_2,$ represents a fraction of common randomness that can
be reused over $k$ blocks, whereas a realization of the randomness
in $\mathcal{F}_{3_{1}}^{(i)}$ needs to be
provided in each new block. Note that, as visualized in Fig.~6, both the
subsets $\acute{\cal F}_{3_1}\subset {\cal F}_1$ and $\acute{\cal
	F}_{3_2} \subset {\cal F}_2$ represent the resulting uniformly distributed bits of ${\cal F}_{3}$ of the previous block, where $|\acute{\cal F}_{3_1}|=|{\cal F}_{3_1}^{}|$ and $|\acute{\cal F}_{3_2}|=|{\cal F}_{3_2}|.$
Finally, in an additional block $k+1$ we use a good channel  code to
reliably transmit the set ${\cal F}_3$ of the last block $k.$ Note that
  since uniformly random bits are reused  to convey information bits,
  chaining can be seen as a derandomization strategy.

\subsubsection{Decoding}
The decoder is described in Algorithm 2. In Algorithm~2, we use the
  \emph{hat} notation, i.e.,~ $\widehat{U}_2^N$ and $\widehat{C}^N$, to
  distinguish the reconstruction of the $N$-length random variables,
  i.e.,~$U_2^N$ and consequently $C_2^N$, from the corresponding quantities at the encoder.
Recall that we are only interested
in the message $\hat{I}$ intended for the weak user channel given by
$P_{B|A}$ in Fig.~\ref{fig:BroadcastChannelPC}. As a result, we
only state the decoding protocol at ${\cal D}_2$ that recovers the codeword $\widehat{C}^{N}.$ Note that the
decoding is done in reverse order after receiving the extra $k+1$
block containing the bits of set ${\cal F}_3$ of the last block
$k$. In particular, in each block $i\in[1,k-1]$ the bits in
$\mathcal{F}_3$ are obtained by successfully recovering the bits in both
$\mathcal{F}_1$ and $\mathcal{F}_2$ in block $i+1$.

\vspace{0.5ex}
\begin{table}[h!]
	\normalsize
	\begin{tabular}{p{0.95\linewidth}}
          \specialrule{.1em}{.05em}{.05em} 
          \rule{0pt}{2.5ex} 
          \hspace{-0.5em}\noindent\textbf{Algorithm 2:} Decoding algorithm at Node $\mathsf Y$ for strong coordination\\
          \specialrule{.1em}{.05em}{.05em}	\specialrule{.1em}{.05em}{.05em}
          \rule{0pt}{2.5ex}  
          \hspace{-0.35em}\textbf{Input:} $B_{1:k}^N,$ uniformly
          distributed common randomness $\bar{J}_1$ of sizes $|{\cal
          F}_2\cup \hat{\cal F}_4|$ reused over $k$ blocks, ``fresh'' uniformly distributed common randomness $J_{1:k}$ each of size $k|\check{\cal F}_4\cup {\cal F}_1|$  for all $k$ blocks and shared with Node $\mathsf X$.\\ 
		
		\textbf{Output:} $\widetilde{Y}_{1:k}^N$\\
		1.~\textbf{For} block $i=k,\dots,1$ \textbf{do}\\ 
		2.~${\cal D}_2$ in Fig.~\ref{fig:BroadcastChannelPC} constructs $\widehat{U}_{2_i}^N$ bit-by-bit as follows:
		\begin{itemize}
			\item $\big(\widehat{U}_{2_i}^N[({\cal F}_1\setminus\acute{\cal F}_{3_1}^{}) \cup \check{\cal F}_4], {\cal F}_{3_1}^{(i)}\big) \leftarrow {J}_{i}$ 
			\item $\big(\widehat{U}_{2_i}^N[({\cal F}_2\setminus\acute{\cal F}_{3_2}) \cup \hat{\cal F}_4 ],{\cal F}_{3_2}\big) \leftarrow \bar{J}_1$
			\item Given $B_i^N$ successively draw the components of  $\widehat{U}_{2_i}^N$ according to 
			$\tilde{P}_{U_{2_i,j}|U_{2_i}^{j-1},B_i^N}$ defined by
			\begin{equation}\label{eq:MsgDec}
			\hspace{-5ex}\begin{aligned}
			\tilde{P}_{U_{2_i,j}|U_{2_i}^{j-1}B_i^N}\triangleq\begin{cases}
			{Q}_{U_{2,j}|U_{2}^{j-1}}\! & \!j\! \in  {\cal V}_{C}^{c}, \\
			{Q}_{U_{2,j}|U_{2}^{j-1}B_i^N}\!\! & \!j\! \in \acute{\cal F}_{3_2}\!\cup\!\acute{\cal F}_{3_1}\!\cup\!{\cal F}_5.
			\end{cases}
			\end{aligned}\hspace{-1ex}
			\end{equation}
		\end{itemize}
		3.~\textbf{if} $i=k$ \textbf{then}
		\begin{itemize}
			\item $\widehat{U}_{2_i}^N[{\cal F}_3]\leftarrow B_{k+1}^N $
		\end{itemize}
		~~\textbf{else}
		\begin{itemize}
			\item $ \widehat{U}_{2_{i}}^N[{\cal F}_3\setminus {\cal F}_{3_2}]  \leftarrow \widehat{U}_{2_{i+1}}^N[\acute{\cal F}_{3_1} ]\oplus {\cal F}_{3_1}^{(i+1)} $
			\item $\widehat{U}_{2_{i}}^N[{\cal F}_3\setminus {\cal F}_{3_1}]   \leftarrow  \widehat{U}_{2_{i+1}}^N[\acute{\cal F}_{3_2}]\oplus {\cal F}_{3_2} $
		\end{itemize}
		4.~Let \begin{itemize}
			\item $\widehat{U}_{2_i}^N[\acute{\cal F}_{3_1}^{} ] \leftarrow {\cal F}_{3_1}^{(i)}$ 
			\item $\widehat{U}_{2_i}^N[\acute{\cal F}_{3_2}] \leftarrow {\cal F}_{3_2} $
		\end{itemize}	
		5.~$\widehat{C}_{i}^{N} \leftarrow \widehat{U}_{2_i}^{N} {\bf G}_n $ \\
		6.~Channel simulation: given $\widehat{C}_{i}^N$ and ${B_i}^N,$ successively draw the components of $\widetilde{T}_{i}^{N}$ according to 
		\begin{equation}\label{eq:seqGen}
		\hspace{-1.16ex}\begin{aligned}
		\tilde{P}_{T_{i,j}|T_{i}^{j-1}B_i^NC_i^N} \triangleq  \begin{cases}
		1/|{\cal Y}| & \!j \in {\cal V}_{Y|BC}, \\
		{Q}_{T_{j}|T_{}^{j-1}B^NC^N} & \!j \in {\cal V}_{Y|BC}^{c}.
		\end{cases}
		\end{aligned}
		\end{equation}
		5.~$\widetilde{Y}_i^N \leftarrow \widetilde{T}_i^N {\bf G}_n$\\
		6.~\textbf{end for}\\
		\specialrule{.1em}{.05em}{.05em} 
	\end{tabular}
\end{table}
\vspace{-0.5ex}

\subsection{Scheme Analysis}
We now provide an analysis of the coding scheme of Section~\ref{sec:PolarCodingScheme}. Let the joint pmf of actions induced by the polar coding scheme be $\tilde{P}_{X^NY^N}$. For strong coordination coding scheme $\tilde{P}_{X^NY^N}$ must be \emph{close} in total variation to the $N$ i.i.d. copies of desired joint pmf $(X,Y)\sim Q_{XY},$ 
$Q^N_{XY}$, i.e.,
\begin{equation}\label{eq:PolarStrngCoorCondtion}
\norm{\tilde{P}_{X^NY^N}-Q^N_{XY}}_{{\scriptscriptstyle TV}} 
<\epsilon.
\end{equation}
The analysis is based on the KL divergence which upper bounds the total variation in \eqref{eq:PolarStrngCoorCondtion} by Pinsker's inequality. We start the analysis with a set of sequential lemmas. In particular, Lemma~\ref{lemma1} is useful to show in Lemma~\ref{lemma2} that the strong coordination scheme based on channel resolvability holds for each block individually regardless of the randomness recycling. Note that, in the current Section~\ref{sec:PolarCode} and the associated Appendices~\ref{Appnx:ProofLemma3} and \ref{Appnx:ProofLemma4}, we refrain from using the $N$-fold product notation of joint, and conditional distribution, e.g., respectively, $Q_{X^NY^N}=Q^N_{XY}$ and ${Q}_{U_{2}^N|X^N}={Q}^N_{U_{2}|X}$ to unify the notation across conditional distributions.

\begin{lemma}\label{lemma1}
	For block $i \in \llbracket 1,k \rrbracket ,$ we have
	\begin{equation*}
	\mathbb{D}(Q_{A^NC^NX^N}||\tilde{P}_{A_i^NC_i^NX_i^N}) \leq 2N\delta_{N}.
	\end{equation*} 
\end{lemma}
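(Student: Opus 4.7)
The plan is the standard polar-coding chain-rule argument, mirroring the soft-covering analysis of \cite{chou2016empirical}. The first step is a change of variables: because the polar transform $\mathbf{G}_n$ is an $\mathbb{F}_2$-linear bijection acting only on $(A^N,C^N)$, invariance of KL under bijections gives
\begin{equation*}
\mathbb{D}\bigl(Q_{A^N C^N X^N}\,\|\,\tilde P_{A_i^N C_i^N X_i^N}\bigr)=\mathbb{D}\bigl(Q_{U_1^N U_2^N X^N}\,\|\,\tilde P_{U_{1_i}^N U_{2_i}^N X_i^N}\bigr).
\end{equation*}
Since Algorithm~1 never resamples $X^N$, both measures agree on the $X^N$-marginal $Q_X^N$, so factoring in the order $X^N, U_2^N, U_1^N$ and applying the chain rule for KL divergence reduces the right-hand side to
\begin{equation*}
\mathbb{D}\bigl(Q_{U_2^N|X^N}\,\|\,\tilde P_{U_{2_i}^N|X_i^N}\bigr) + \mathbb E_Q\!\left[\mathbb{D}\bigl(Q_{U_1^N|U_2^N X^N}\,\|\,\tilde P_{U_{1_i}^N|U_{2_i}^N X_i^N}\bigr)\right].
\end{equation*}
The claim will follow once I show that each of the two summands is at most $N\delta_N$.

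For the first summand I would apply the chain rule once more to obtain a per-coordinate sum $\sum_{j=1}^{N}\mathbb{E}_Q\bigl[\mathbb{D}\bigl(Q_{U_{2,j}|U_2^{j-1}X^N}\,\|\,\tilde P_{U_{2_i,j}|U_{2_i}^{j-1}X_i^N}\bigr)\bigr]$ and bound each term using \eqref{eq:MsgEncC} together with the polarization sets \eqref{eq:C_Sets}. On positions $j\in\mathcal{F}_3\cup\mathcal{F}_5$ the encoder samples from the exact target conditional and the contribution is zero; on the uniformly loaded positions in $\mathcal{V}_C\setminus(\mathcal{F}_3\cup\mathcal{F}_5)$, which lie in $\mathcal{V}_{C|X}$ by construction of $\mathcal{F}_1,\mathcal{F}_2,\check{\mathcal F}_4,\hat{\mathcal F}_4$, one has $H(U_{2,j}|U_2^{j-1}X^N)\geq 1-\delta_N$, so the identity $\mathbb{D}(P\,\|\,\mathrm{Unif})=\log 2 - H(P)$ caps the per-coordinate divergence by $\delta_N$; and on positions $j\in\mathcal{V}_C^c$ the encoder samples from $Q_{U_{2,j}|U_2^{j-1}}$, whence the per-coordinate divergence equals $I(U_{2,j};X^N|U_2^{j-1})$ which, because conditional polar entropies polarize to the near-$0$ band on $\mathcal{V}_C^c$, aggregates to $O(N\delta_N)$. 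Summing these three classes of coordinates yields the $N\delta_N$ bound on the first summand. A verbatim repetition of the argument, now using the sets \eqref{eq:A_Sets} and the encoder rule \eqref{eq:MsgEncA} for $U_1^N$ conditioned on $(C^N, X^N)$, bounds the second summand by $N\delta_N$; adding the two gives the claimed $2N\delta_N$.

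The main technical difficulty lies in the set-alignment book-keeping that underlies the per-coordinate bounds: one must verify that every coordinate in which Algorithm~1 injects a uniformly distributed message or randomness bit is indeed contained in the relevant very-high-entropy set ($\mathcal{V}_{C|X}$ for the $U_2^N$ stage and $\mathcal{V}_{A|CX}$ for the $U_1^N$ stage), so that the `target-within-$\delta_N$-of-uniform' step is valid, and symmetrically that the low-entropy coordinates in $\mathcal{V}_C^c$ (respectively $\mathcal{V}_{A|C}^c$) are small enough in aggregate entropy to absorb the mismatch from dropping the $X^N$-conditioning. Both checks follow directly from the definitions \eqref{eq:C_Sets}--\eqref{eq:A_Sets} and the polarization theorem for conditional entropies, after which the remaining estimates are a routine application of the Chou--Bloch polar soft-covering lemma coordinate by coordinate.
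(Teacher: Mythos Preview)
Your proposal is correct and follows essentially the same route as the paper: change variables via the invertibility of $\mathbf{G}_n$, split by the chain rule into the $U_2^N\mid X^N$ and $U_1^N\mid U_2^N X^N$ contributions, expand each coordinate-by-coordinate, and use the encoding rules \eqref{eq:MsgEncC}--\eqref{eq:MsgEncA} together with the definitions of the polarization sets \eqref{eq:C_Sets}--\eqref{eq:A_Sets} to bound each coordinate by $\delta_N$. The paper's step~(h) and your ``polarize to the near-$0$ band'' remark are the same handwave over the $\mathcal{V}_C^c$ (respectively $\mathcal{V}_{A|C}^c$) coordinates; the only detail you compress is the extra case $j\in\mathcal{F}_9=\mathcal{V}_A\setminus\mathcal{V}_{A|C}$ for $U_1^N$, where the encoder samples from $Q_{U_{1,j}|U_1^{j-1}C^N}$ rather than $Q_{U_{1,j}|U_1^{j-1}}$, but this is handled identically and is covered by your reference to \eqref{eq:MsgEncA}.
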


\begin{proof}
	We have
	\begin{equation*}
	\begin{split} 
	&\mathbb{D}({Q}_{A^NC^NX^N}||\tilde{P}_{A_i^NC_i^NX_i^N}) 
	\stackrel{(a)}{=} \mathbb{D}({Q}_{U_{1}^NU_{2}^NX^N}||\tilde{P}_{U_{1_i}^NU_{2_i}^NX_i^N}) \\
	&\stackrel{}{=} \mathbb E_{Q_{X^N}}\Big[ \mathbb{D}({Q}_{U_{1}^NU_{2}^N|X^N}||\tilde{P}_{U_{1_i}^NU_{2_i}^N|X_i^N}) \Big] \\
	&\stackrel{}{=} \mathbb E_{Q_{X^N}}\Big[ \mathbb{D} ({Q}_{U_{2}^N|X^N}{Q}_{U_{1}^N|U_{2}^NX^N}||\tilde{P}_{U_{2_i}^N|X_i^N}\tilde{P}_{U_{1_i}^N|U_{2_i}^NX_i^N}) \Big] \\
	&\stackrel{(b)}{=} \mathbb E_{Q_{X^N}}\Big[ \mathbb{D}({Q}_{U_{2}^N|X^N}||\tilde{P}_{U_{2_i}^N|X_i^N}) +\mathbb{D}({Q}_{U_{1}^N|U_{2}^NX^N}||\tilde{P}_{U_{1_i}^N|U_{2_i}^NX_i^N}) \Big] \\
	&\stackrel{(c)}{=} \sum_{j=1}^{N} \mathbb E_{Q_{U_{2}^{j-1}X^N}}\Big[ \mathbb{D} ({Q}_{U_{2,j}|U_{2}^{j-1}X^N}||\tilde{P}_{U_{2_i,j}|U_{2_i}^{j-1}X_i^N}) \Big]+\! \sum_{j=1}^{N} \mathbb E_{Q_{U_{1}^{j-1}\!U_{2}^{N}\!X^N}}\!\Big[ \mathbb{D} ({Q}_{U_{1,j}|U_{1}^{j-1}\!U_{2}^N\!X^N}\!||\tilde{P}_{U_{1_i,j}|U_{1_i}^{j-1}\!U_{2_i}^N\!X_i^N}) \Big]\\ 
	&\stackrel{(d)}{=}\sum_{j \notin {\cal F}_3 \cup {\cal
			F}_5}\!\! \mathbb E_{Q_{U_{2}^{j-1}X^N}}\Big[ \mathbb{D}
	({Q}_{U_{2,j}|U_{2}^{j-1}X^N}||\tilde{P}_{U_{2_i,j}|U_{2_i}^{j-1}X_i^N})
	\Big] +\sum_{j \notin {\cal F}_8}\mathbb E_{Q_{U_{1}^{j-1}\!U_{2}^{N}\!X^N}}\!\Big[ \mathbb{D} ({Q}_{U_{1,j}|U_{1}^{j-1}\!U_{2}^N\!X^N}||\tilde{P}_{U_{1_i,j}|U_{1_i}^{j-1}\!U_{2_i}^N\!X_i^N}) \Big]\\
	&\stackrel{(e)}{=} \sum_{j \in {\cal V}_{C}^{c}\cup{\cal V}_{C|X}} \!\!\mathbb E_{Q_{U_{2}^{j-1}X^N}} \Big[ \mathbb{D} ({Q}_{U_{2,j}|U_{2}^{j-1}X^N}||\tilde{P}_{U_{2_i,j}|U_{2_i}^{j-1}X_i^N}) \Big]\\
	&\hspace*{18ex} +\sum_{j \in {\cal V}_{A}^{c}\cup {\cal V}_{A|CX} \cup {\cal V}_{A}\!\setminus\!{\cal V}_{A|C} } \mathbb E_{Q_{U_{1}^{j-1}U_{2}^{N}X^N}} \Big[ \mathbb{D} ({Q}_{U_{1,j}|U_{1}^{j-1}U_{2}^NX^N}||\tilde{P}_{U_{1_i,j}|U_{1_i}^{j-1}U_{2_i}^NX_i^N}) \Big]\\
		&\stackrel{(f)}{=} \sum_{j \in {\cal V}_{C}^{c}} \Big( H(U_{2,j}|U_{2}^{j-1}) - H(U_{2,j}|U_{2}^{j-1}X^N) \Big)+\sum_{j \in {\cal V}_{C|X}} \Big( 1-H(U_{2,j}|U_{2}^{j-1}X^N) \Big)\\ 
	& \hspace*{18ex} +\sum_{j \in {\cal V}_{A}^{c}} \Big( H(U_{1,j}|U_{1}^{j-1}) -H(U_{1,j}|U_{1}^{j-1}U_2^NX^N) \Big)+\sum_{j \in {\cal V}_{A|CX}} \!\Big( 1- H(U_{1,j}|U_{1}^{j-1}U_2^NX^N) \Big)\\
	& \hspace*{18ex} + \sum_{j \in {\cal V}_{A|C}^{c} \!\setminus\!{\cal V}_{A}^{c}} \Big( H(U_{1,j}|U_{1}^{j-1}U_{2}^N) - H(U_{1,j}|U_{1}^{j-1}U_{2}^NX^N) \Big)\\
	&\stackrel{(g)}{=} \sum_{j \in {\cal V}_{C}^{c}} \Big( H(U_{2,j}|U_{2}^{j-1}) - H(U_{2,j}|U_{2}^{j-1}X^N) \Big)+\sum_{j \in {\cal V}_{C|X}} \Big( 1-H(U_{2,j}|U_{2}^{j-1}X^N) \Big)\\
	&\hspace*{18ex}+\sum_{j \in {\cal V}_{A}^{c}} \Big( H(U_{1,j}|U_{1}^{j-1}) -H(U_{1,j}|U_{1}^{j-1}C^NX^N) \Big)+\sum_{j \in {\cal V}_{A|CX}} \!\Big( 1- H(U_{1,j}|U_{1}^{j-1}C^NX^N) \Big)\\
	&\hspace*{18ex}+\sum_{j \in {\cal V}_{A|C}^{c} \!\setminus\!{\cal V}_{A}^{c}} \Big( H(U_{1,j}|U_{1}^{j-1}C^N) - H(U_{1,j}|U_{1}^{j-1}C^NX^N) \Big)\\
	&\stackrel{(h)}{\leq} (|{\cal V}_{C}^{c}|+|{\cal V}_{C|X}|+ |{\cal V}_{A|XC}|+|{\cal V}_{A|C}^{c}|)\delta_N \leq 2N\delta_N,\\
	\end{split}
	\end{equation*}
	
	\noindent where
	\begin{itemize} 
		\item[($a$)] holds by invertibility of ${\bf G}_n$;
		\item[($b$)]\hspace*{-0.5em}\;-\;($c$)~follows from the chain rule of the KL divergence \cite{EoIT:2006};
		\item[($d$)] results from the definitions of the conditional distributions in \eqref{eq:MsgEncC}, and \eqref{eq:MsgEncA};
		\item[($e$)] follows from the definitions of the index sets as shown in Figs.~\ref{fig:CoordiationSets1} and~\ref{fig:CoordiationSets2};
		\item[($f$)] results from the encoding of $\widetilde{U}_{1_i}^N$ and $\widetilde{U}_{2_i}^N$ bit-by-bit at ${\cal E}_1$ and ${\cal E}_2,$ respectively, with uniformly distributed randomness bits and message bits. These bits are generated by applying successive cancellation encoding using previous bits and side information with conditional distributions defined in \eqref{eq:MsgEncC} and \eqref{eq:MsgEncA};
		\item[($g$)] holds by the one-to-one relation between ${U}_{2}^N$ and $C^N$;
		\item[($h$)] follows from the sets defined in~\eqref{eq:C_Sets} and~\eqref{eq:A_Sets}.
	\end{itemize}
\end{proof}

\begin{lemma}\label{lemma2}
	For block $i \in \llbracket 1,k \rrbracket ,$ we have
	\begin{align*}
	\mathbb{D}(&\tilde{P}_{X_i^NY_i^N}||Q_{X^NY^N})\leq\mathbb{D}(\tilde{P}_{X_i^NA_i^NC_i^NB_i^N\widehat{C}_i^NY_i^N}||Q_{X^NA^NC^NB^N\widehat{C}^NY^N})\leq \delta_{N}^{(1)} 
	\end{align*}
	where $\delta_{N}^{(1)}\triangleq {\cal O}(\sqrt{N^3 \delta_N}).$ 
\end{lemma}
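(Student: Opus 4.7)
The first inequality is immediate from the data-processing inequality for KL divergence applied to the marginalization $(X_i^N,A_i^N,C_i^N,B_i^N,\widehat{C}_i^N,Y_i^N)\mapsto(X_i^N,Y_i^N)$: dropping coordinates of a joint distribution cannot increase the relative entropy to the corresponding marginal of the target.

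For the second inequality, the plan is to bootstrap from Lemma~\ref{lemma1}, which controls the KL divergence in the reverse direction and over the smaller tuple $(X,A,C)$. The argument proceeds in three steps. First, applying Pinsker's inequality to Lemma~\ref{lemma1} yields
\[
\norm{Q_{A^NC^NX^N}-\tilde{P}_{A_i^NC_i^NX_i^N}}_{{\scriptscriptstyle TV}}\leq\sqrt{N\delta_N}.
\]
Second, I would extend this total-variation bound to the full tuple $(X,A,C,B,\widehat{C},Y)$ by appending the remaining blocks one at a time and invoking the triangle inequality. Because $B^N$ is generated through the same DMC $P_{B|A}$ under both $\tilde{P}$ and $Q$, appending $B^N$ contributes nothing to the TV distance (by the coupling argument, see \cite[Lemma~V.2]{cuff2013distributed}). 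The polar decoder in Algorithm~2 is constructed so that $\widehat{C}^N$ coincides with the true $C^N$ except on the polar set ${\cal H}_{C|B}$; combined with the alignment ${\cal H}_{C|A}\subseteq{\cal H}_{C|B}$ and the chaining construction that reliably conveys ${\cal F}_3$, the resulting TV slack is ${\cal O}(\sqrt{N\delta_N})$. Finally, the channel-simulation block at Node~$\mathsf Y$ that draws $Y^N$ according to~\eqref{eq:seqGen} incurs a TV slack of the same order by the standard polar soft-covering bound of \cite{chou2016empirical}, since ${\cal V}_{Y|BC}$ exhausts the $H(Y|BC)$-rate of uniform bits. Chaining these contributions yields
\[
\norm{\tilde{P}_{X_i^NA_i^NC_i^NB_i^N\widehat{C}_i^NY_i^N}-Q_{X^NA^NC^NB^N\widehat{C}^NY^N}}_{{\scriptscriptstyle TV}}={\cal O}(\sqrt{N\delta_N}).
\]

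Third, I would convert this TV bound into a bound on the KL divergence in the desired direction using the standard continuity estimate: for distributions $P,Q$ on a finite alphabet $\mathcal{Z}^N$ with $\norm{P-Q}_{{\scriptscriptstyle TV}}\leq\eta$, one has $\mathbb{D}(P\|Q)={\cal O}\bigl(\eta\log(|\mathcal{Z}|^N/\eta)\bigr)$. Substituting $\eta={\cal O}(\sqrt{N\delta_N})$ produces $\mathbb{D}(\tilde{P}\|Q)={\cal O}(\sqrt{N^3\delta_N})$ once logarithmic factors are absorbed into the big-$O$, which matches $\delta_N^{(1)}$. The main obstacle is the second step: one must carefully track how the decoder's unreliable bits and the channel simulator's polarization slack combine across blocks, and also ensure absolute continuity $\tilde{P}\ll Q$ so that the TV-to-KL conversion in the third step actually gives a finite quantity. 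The alignment of the polar sets guaranteed by $P_{A|C}\succ P_{B|C}$, the block-chaining of Section~\ref{sec:PolarCodingScheme}, and the explicit choice of the frozen/information partitions in \eqref{eq:C_Sets}--\eqref{eq:A_Sets} are the key ingredients that make the second step go through.
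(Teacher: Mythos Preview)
Your first inequality (data processing for KL under marginalization) is correct and is also what the paper uses implicitly.

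The second inequality, however, has a genuine gap in the third step. The ``standard continuity estimate'' you invoke, $\mathbb{D}(P\|Q)={\cal O}\bigl(\eta\log(|\mathcal{Z}|^N/\eta)\bigr)$ for $\|P-Q\|_{\scriptscriptstyle TV}\leq\eta$, is \emph{not} a valid inequality in general: it conflates the Fannes--Audenaert continuity bound for Shannon entropy with a bound on relative entropy. Small total variation does not by itself control $\mathbb{D}(P\|Q)$; for instance, with $Q=(1-e^{-1/\epsilon},e^{-1/\epsilon})$ and $P=(1-\epsilon,\epsilon)$ one has $\|P-Q\|_{\scriptscriptstyle TV}\approx\epsilon$ while $\mathbb{D}(P\|Q)\approx 1$. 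What \emph{does} give a usable bound is a pointwise upper bound on $-\log Q$ of order $N$ (equivalently, a strictly positive single-letter minimum for a product target), and that is exactly the content of \cite[Lemma~14]{chou2016empirical}, which the paper invokes. You neither state this hypothesis nor verify it, and in fact the full target law over $(X^N,A^N,C^N,B^N,\widehat{C}^N,Y^N)$ is not a pure i.i.d.\ product because $\widehat{C}^N$ is produced by a block decoder, so the hypothesis cannot be applied to the full tuple in one shot.

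The paper avoids this issue by \emph{never passing through total variation}. It factorizes the joint KL along the Markov chain $X^N-(A^N,C^N)-(B^N,\widehat{C}^N)-Y^N$ into three conditional pieces, bounds each conditional divergence in the ``easy'' direction $\mathbb{D}(Q\|\tilde{P})\leq N\delta_N$ directly from the polarization-set definitions \eqref{eq:C_Sets}, \eqref{eq:A_Sets}, \eqref{eq:Y_Set} (this is where the decoder and channel-simulator slack enter, bit by bit, rather than via the informal ``coincides except on ${\cal H}_{C|B}$'' picture you sketch), and only then reverses each piece using \cite[Lemmas~14 and 16]{chou2016empirical}, which supply the $-N\log(\mu)\sqrt{2\ln 2}\sqrt{\cdot}$ penalty. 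This factorize-then-reverse strategy is what makes the single-letter minima $\mu_{XAC}$, $\mu_{ACB\widehat{C}}$, $\mu_{YB\widehat{C}}$ enter linearly in $N$ rather than exponentially, and it is the missing structural ingredient in your proposal.
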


\begin{proof}
	Consider the following argument. 
		\begin{equation}
		\begin{split} \notag
		\mathbb{D}&(\tilde{P}_{X_i^NA_i^NC_i^NB_i^N\widehat{C}_i^NY_i^N}||Q_{X^NA^NC^NB^N\widehat{C}^NY^N})\\ 
		&\qquad\stackrel{}{=} \mathbb{D}(\tilde{P}_{Y_i^N|X_i^NA_i^NC_i^NB_i^N\widehat{C}_i^N}\tilde{P}_{X_i^NA_i^NC_i^NB_i^N\widehat{C}_i^N}||Q_{Y^N|X^NA^NC^NB^N\widehat{C}^N}Q_{X^NA^NC^NB^N\widehat{C}^N}) \\
		&\qquad\stackrel{(a)}{=} \mathbb{D}(\tilde{P}_{Y_i^N|B_i^N\widehat{C}_i^N}\tilde{P}_{X_i^NA_i^NC_i^NB_i^N\widehat{C}_i^N}||Q_{Y^N|B^N\widehat{C}^N}Q_{X^NA^NC^NB^N\widehat{C}^N}) \\
		&\qquad\stackrel{}{=} \mathbb{D}(\tilde{P}_{Y_i^N|B_i^N\widehat{C}_i^N}\tilde{P}_{B_i^N\widehat{C}_i^N|X_i^NA_i^NC_i^N}\tilde{P}_{X_i^NA_i^NC_i^N}||Q_{Y^N|B^N\widehat{C}^N}Q_{B^N\widehat{C}^N|X^NA^NC^N}Q_{X^NA^NC^N}) \\
		&\qquad\stackrel{(b)}{=} \mathbb{D}(\tilde{P}_{Y_i^N|B_i^N\widehat{C}_i^N}\tilde{P}_{B_i^N\widehat{C}_i^N|A_i^NC_i^N} \tilde{P}_{X_i^NA_i^NC_i^N}||Q_{Y^N|B^N\widehat{C}^N}Q_{B^N\widehat{C}^N|A^NC^N}Q_{X^NA^NC^N}) \\
		&\qquad\stackrel{(c)}{\leq} {\delta}_N^{(2)} + \mathbb{D}(\tilde{P}_{Y_i^N|B_i^N\widehat{C}_i^N}\tilde{P}_{B_i^N\widehat{C}_i^N|A_i^NC_i^N}\tilde{P}_{X_i^NA_i^NC_i^N}||\tilde{P}_{Y_i^N|B_i^N\widehat{C}_i^N}\tilde{P}_{B_i^N\widehat{C}_i^N|A_i^NC_i^N}Q_{X^NA^NC^N}) \\
		&\qquad\hspace*{18ex} + \mathbb{D}(\tilde{P}_{Y_i^N|B_i^N\widehat{C}_i^N}\tilde{P}_{B_i^N\widehat{C}_i^N|A_i^NC_i^N}Q_{X^NA^NC^N}||Q_{Y^N|B^N\widehat{C}^N}Q_{B^N\widehat{C}^N|A^NC^N}Q_{X^NA^NC^N})\\
		&\qquad\stackrel{(d)}{=} {\delta}_N^{(2)} + \mathbb{D}(\tilde{P}_{X_i^NA_i^NC_i^N}||Q_{X^NA^NC^N}) 
		+\mathbb{D}(\tilde{P}_{Y_i^N|B_i^N\widehat{C}_i^N}\tilde{P}_{B_i^N\widehat{C}_i^N|A_i^NC_i^N}||Q_{Y^N|B^N\widehat{C}^N}Q_{B^N\widehat{C}^N|A^NC^N})\\
		&\qquad\stackrel{(e)}{\leq} {\delta}_N^{(2)} + \hat{\delta}_N^{(2)}
		+\mathbb{D}(\tilde{P}_{Y_i^N|B_i^N\widehat{C}_i^N}\tilde{P}_{B_i^N\widehat{C}_i^N|A_i^NC_i^N}||Q_{Y^N|B^N\widehat{C}^N}Q_{B^N\widehat{C}^N|A^NC^N})\\
		&\qquad\stackrel{(f)}{=} {\delta}_N^{(2)} + \hat{\delta}_N^{(2)}
		+\mathbb{D}(\tilde{P}_{Y_i^N|B_i^N\widehat{C}_i^N}||Q_{Y^N|B^N\widehat{C}^N})+\mathbb{D}(\tilde{P}_{B_i^N\widehat{C}_i^N|A_i^NC_i^N}||Q_{B^N\widehat{C}^N|A^NC^N})\\
		&\qquad\stackrel{(g)}{\leq} {\delta}_N^{(2)} +  \hat{\delta}_N^{(2)}
		-N\log(\mu_{YB\widehat{C}})\sqrt{2\ln 2} \sqrt{\mathbb{D}(Q_{Y^N|B^N\widehat{C}^N}||\tilde{P}_{Y_i^N|B_i^N\widehat{C}_i^N})}\\
		&\qquad\hspace*{18ex}-N\log(\mu_{ACB\widehat{C}})\sqrt{2\ln 2} \sqrt{\mathbb{D}(Q_{B^N\widehat{C}^N|A^NC^N}||\tilde{P}_{B_i^N\widehat{C}_i^N|A_i^NC_i^N})} \\
		&\qquad\stackrel{(h)}{\leq} {\delta}_N^{(2)} + \hat{\delta}_N^{(2)} -N\log(\mu_{YB\widehat{C}})\sqrt{2\ln 2} \sqrt{N \delta_N} -N\log(\mu_{ACB\widehat{C}})\sqrt{2\ln 2} \sqrt{N \delta_N} \\
		\end{split}
		\end{equation}
	
	\vspace{1.5ex}
	\noindent In this argument: 
	\begin{itemize} 
		\item[($a$)]\hspace*{-0.5em}\;-\;($b$) results from the Markov chain $X^N\!-\!A^NC^N\!-\!B^N\widehat{C}^N\!-\!Y^N$; 
		\item[($c$)] follows from \cite[Lemma 16]{chou2016empirical} where 
		\begin{align*} {\delta}_N^{(2)} &\triangleq -N\log(\mu_{XACB\widehat{C}Y})\sqrt{2\ln 2} \sqrt{2N\delta_{N}},\\
		\mu_{XACB\widehat{C}Y}&\triangleq {\textstyle\min^{*}_{x,y,a,c,b,\hat{c}}} \big(Q_{XACB\widehat{C}Y}\big);
		\end{align*}
		\item[($d$)] follows from the chain rule of KL divergence \cite{EoIT:2006};
		\item[($e$)] holds by Lemma~\ref{lemma1} and \cite[Lemma 14]{chou2016empirical} where 
		\begin{align*} \hat{\delta}_N^{(2)} &\triangleq -N\log(\mu_{XAC})\sqrt{2\ln 2} \sqrt{2N\delta_{N}},\\
		\mu_{XAC} &\triangleq {\textstyle\min^{*}_{x,a,c}} \big(Q_{XAC}\big);\\
		\end{align*}
		
		\item[($f$)] follows from the chain rule of KL divergence \cite{EoIT:2006};
		\item[($g$)] holds by \cite[Lemma 14]{chou2016empirical}, where
		\begin{align*}\mu_{ACB\widehat{C}} &\triangleq{\textstyle\min^{*}_{a,c,b,\hat{c}}} \big(Q_{ACB\widehat{C}}\big),\\
		\mu_{YB\widehat{C}} &\triangleq{\textstyle\min^{*}_{y,b,\hat{c}}} \big(Q_{YB\widehat{C}}\big);\end{align*}
		\item[($h$)] holds by bounding the terms $\mathbb{D}(Q_{B^N\widehat{C}^N|A^NC^N}||\tilde{P}_{B_i^N\widehat{C}_i^N|A_i^NC_i^N}),$ and $\mathbb{D}(Q_{Y^N|B^N\widehat{C}^N}||\tilde{P}_{Y_i^N|B_i^N\widehat{C}_i^N})$, as follows: 
		\begin{itemize}
			\item First, we show that $\mathbb{D}(Q_{B^N\widehat{C}^N|A^NC^N}||\tilde{P}_{B_i^N\widehat{C}_i^N|A_i^NC_i^N})\leq N\delta_N$ by the following argument:
		\begin{equation*}
		\begin{split} 
		\mathbb{D}&(Q_{B^N\widehat{C}^N|A^NC^N}||\tilde{P}_{B_i^N\widehat{C}_i^N|A_i^NC_i^N})\\
		&\stackrel{(a)}{=} \mathbb{D}(Q_{B^N|A^N}Q_{\widehat{C}^N|B^N}||{Q}_{B^N|A^N}\tilde{P}_{\widehat{C}_i^N|B_i^N})\\ 
		&\stackrel{}{=} \mathbb{D}(Q_{\widehat{C}^N|B^N}||\tilde{P}_{\widehat{C}_i^N|B_i^N})\\ 
		&\stackrel{(b)}{=} \mathbb{D}(Q_{\widehat{U}^N|B^N}||\tilde{P}_{\widehat{U}_i^N|B_i^N})\\
		&\stackrel{(c)}{=}\!\sum_{j=1}^{N} \mathbb E_{Q_{U_{2}^{j-1}B^N}}\Big[ \mathbb{D}(Q_{U_{2,j}|U_{2}^{j-1}B^N}||\tilde{P}_{U_{2_i,j}|U_{2_i}^{j-1}B_i^N}) \Big]\\
		&\stackrel{(d)}{=}\!\sum_{j \in {\cal V}_{C}^{c}}\!\mathbb E_{Q_{U_{2}^{j-1}B^N}}\Big[ \mathbb{D}(Q_{U_{2,j}|U_{2}^{j-1}B^N}||\tilde{P}_{U_{2_i,j}|U_{2_i}^{j-1}B_i^N}\!) \Big] +\sum_{\!\!j \in {\cal H}_{C|B}\cup {\cal V}_{C|X}\!\!}\!\!\!\!\!\!\!\!\mathbb E_{Q_{U_{2}^{j-1}\!B^N}}\!\Big[ \mathbb{D}(Q_{\!U_{2,j}|U_{2}^{j-1}\!B^N}\!||\tilde{P}_{\!U_{2_i,j}|U_{2_i}^{j-1}\!B_i^N}\!) \Big]\\
		&\stackrel{(e)}{=}\!\!\sum_{j \in {\cal V}_{C}^{c}}\!\!\!\Big( H(U_{2,j}|U_{2}^{j-1}) - H(U_{2,j}|U_{2}^{j-1}B^N) \Big)+\sum_{j \in {\cal H}_{C|B}\cup {\cal V}_{C|X}}\!\!\!\Big( 1 - H(U_{2,j}|U_{2}^{j-1}B^N) \Big)\\ 
		&\stackrel{(f)}{\leq} |{\cal V}_{C}^{c}|\delta_N +|{\cal H}_{C|B}\cup {\cal V}_{C|X}|\delta_N \leq N\delta_N,  
		\end{split}
		\end{equation*}
		\noindent where
		\begin{itemize} 
			\item[($a$)] results from the Markov chain $C-A-B-\widehat{C}$ and the fact that $\tilde{P}_{B_i^N|A_i^N}=Q_{B^N|A^N}$;
			\item[($b$)] holds by the one-to-one relation between ${U}_{2}^N$ and $C^N$;
			\item[($c$)] follows from the chain rule of KL divergence \cite{EoIT:2006};
			\item[($d$)]\hspace*{-0.5em}\,-\,($e$) results from the definitions of the conditional distributions in~\eqref{eq:MsgDec};
			\item[($f$)] follows from the sets defined in~\eqref{eq:C_Sets}. 
		\end{itemize}
	
	\item Next, we show that $\mathbb{D}(Q_{Y^N|B^N\widehat{C}^N}||\tilde{P}_{Y_i^N|B_i^N\widehat{C}_i^N}) \leq N\delta_N$ with the following derivation:
		\begin{equation*} 
		\begin{split} 
		\mathbb{D}(Q_{Y^N|B^N\widehat{C}^N}||\tilde{P}_{Y_i^N|B_i^N\widehat{C}_i^N}) 
		&\stackrel{(a)}{=}\sum_{j=1}^{N} \mathbb E_{Q_{T^{j-1}B_i^N\widehat{C}_i^N}}\Big[ \mathbb{D}(Q_{T_j|T^{j-1}B^N\widehat{C}^N}||\tilde{P}_{T_j|T^{j-1}B_i^N\widehat{C}_i^N}) \Big] \\
		&\stackrel{(b)}{=}\sum_{j \in {\cal V}_{Y|BC}}\mathbb E_{Q_{T^{j-1}B_i^N\widehat{C}_i^N}}\Big[ \mathbb{D}(Q_{T_j|T^{j-1}B^N\widehat{C}^N}||\tilde{P}_{T_j|T^{j-1}B_i^N\widehat{C}_i^N}) \Big] \\
		&\stackrel{(c)}{=}\sum_{j \in {\cal V}_{Y|BC}}\Big( \log|{\cal Y}| - H(T_j|T^{j-1}B^NC^N) \Big)\\ 
		&\stackrel{(d)}{\leq} |{\cal V}_{Y|BC}|\delta_N \leq N\delta_N, 
		\end{split}
		\end{equation*}
		\noindent where
		\begin{itemize} 
			\item[($a$)] follows from the chain rule of KL divergence \cite{EoIT:2006};
			\item[($b$)]\hspace*{-0.5em}\;-\;($c$) results from the definitions of the conditional distribution in~\eqref{eq:seqGen};
			\item[($d$)] follows from the set defined in~\eqref{eq:Y_Set}. 
		\end{itemize}
	\end{itemize}
	\end{itemize}
\end{proof}

Now, Lemmas~\ref{lemma3} and \ref{lemma4} provide the independence between two consecutive blocks and the independence between all blocks, respectively, based on the results of Lemma~\ref{lemma2}. 

\begin{lemma}\label{lemma3}
	For block $i \in \llbracket 2,k \rrbracket ,$ we have
	\begin{equation*}
	\mathbb{D}(\tilde{P}_{X_{i-1:i}^NY_{i-1:i}^N\bar{J}_{1}}||\tilde{P}_{X_{i-1}^NY_{i-1}^N\bar{J}_{1}}\tilde{P}_{X_{i}^NY_{i}^N}) \leq \delta_{N}^{(3)}
	\end{equation*}  
	where $\delta_{N}^{(3)}\triangleq {\cal O}(\sqrt[4]{N^{15}\delta_N}).$ 
\end{lemma}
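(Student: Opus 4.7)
The approach is to reduce the stated KL divergence, which coincides with the mutual information $I_{\tilde{P}}\bigl((X_{i-1}^N,Y_{i-1}^N,\bar{J}_1);(X_i^N,Y_i^N)\bigr)$, to the single-block deviation already controlled by Lemma~\ref{lemma2}. The key observation is that almost every source of randomness feeding block $i$ (the fresh common randomness $J_i$, the fresh local randomness $M_{1_i}$, and the i.i.d.~action sequence $X_i^N$) is independent of the corresponding sources of block $i-1$ and of $\bar{J}_1$, so coupling can only enter through two channels: the shared uniform $\bar{J}_1$ itself, and the chained index subsets $\acute{\mathcal{F}}_{3_1},\acute{\mathcal{F}}_{3_2}$ introduced in Algorithm~1.

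The plan proceeds in three steps. First, I would upgrade Lemma~\ref{lemma2} to include $\bar{J}_1$ in the marginal, i.e.,~bound $\mathbb{D}(\tilde{P}_{X_j^NY_j^N\bar{J}_1}\,\|\,Q_{X^NY^N}P_{\bar{J}_1}^{\mathrm{unif}})$ for each $j\in\{i-1,i\}$. Because the bits of $\bar{J}_1$ are injected into the very-high-entropy sets $\mathcal{F}_2\cup\hat{\mathcal{F}}_4$ and $\mathcal{F}_7$, augmenting the chain-rule expansion of Lemma~\ref{lemma1} with these indices contributes only an additional $O(N\delta_N)$ term, and the remainder of the telescoping argument of Lemma~\ref{lemma2} ports without change, giving the same $\delta_N^{(1)}$ scaling against the enlarged reference. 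Second, I would introduce the fully-product reference, in which two independent copies of $Q_{X^NY^N}$ are accompanied by an independent uniform $\bar{J}_1$, and insert it as a pivot via a triangle-type inequality in the spirit of~\cite[Lemma~16]{chou2016empirical}. This reduces the target KL to a sum of two marginal divergences of the form bounded in the first step, each converted to total variation via Pinsker's inequality and back to KL via the standard log-reciprocal-minimum-probability bound. Two successive Pinsker/reverse-Pinsker conversions together with the linear-in-$N$ $\log\mu^{-N}$ prefactors account for the claimed fourth-root scaling $\delta_N^{(3)}=\mathcal{O}(\sqrt[4]{N^{15}\delta_N})$.

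The main obstacle is handling the chaining construction, under which $\acute{\mathcal{F}}_{3_1}\subset\mathcal{F}_1$ is filled with $\widetilde{U}_{2_{i-1}}^N[\mathcal{F}_3\setminus\mathcal{F}_{3_2}]\oplus\mathcal{F}_{3_1}^{(i)}$ and $\acute{\mathcal{F}}_{3_2}\subset\mathcal{F}_2$ with $\widetilde{U}_{2_{i-1}}^N[\mathcal{F}_3\setminus\mathcal{F}_{3_1}]\oplus\mathcal{F}_{3_2}$. To ensure these XOR couplings do not produce macroscopic mutual information between blocks, I would invoke the Crypto Lemma~\cite[Lemma~2]{Forney2004role} separately on each slot: the fresh $\mathcal{F}_{3_1}^{(i)}\!\subset\! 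J_i$, being independent of both block $i-1$ and $\bar{J}_1$, renders $\acute{\mathcal{F}}_{3_1}$ uniform and independent of $(X_{i-1}^N,Y_{i-1}^N,\bar{J}_1)$; for the second slot, a bijective change of variables from $\mathcal{F}_{3_2}$ to $\acute{\mathcal{F}}_{3_2}$ (one-to-one for any realization of block $i-1$) transfers the uniformity of $\mathcal{F}_{3_2}\!\subset\!\bar{J}_1$ onto $\acute{\mathcal{F}}_{3_2}$, again decoupling the two blocks after marginalizing out the intermediate bits. Carefully quantifying the small residual KL from this reparameterization, and verifying that the action statistics $(X^N,Y^N)$ inherit only the approximate-i.i.d.~randomness rather than the chaining coupling, is the technical crux of the argument.
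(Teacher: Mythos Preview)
Your proposal is plausible but takes a noticeably longer route than the paper. Both of you begin by writing the target divergence as the mutual information $I_{\tilde P}\big((X_{i-1}^N,Y_{i-1}^N,\bar J_1);(X_i^N,Y_i^N)\big)$, but from there the arguments diverge. The paper does \emph{not} introduce a two-block product reference or an upgraded version of Lemma~\ref{lemma2}. Instead it applies the chain rule
\[
I\big((X_{i-1}^N,\widetilde Y_{i-1}^N,\bar J_1);(X_i^N,\widetilde Y_i^N)\big)
= I\big(\bar J_1;(X_i^N,\widetilde Y_i^N)\big) + I\big((X_{i-1}^N,\widetilde Y_{i-1}^N);(X_i^N,\widetilde Y_i^N)\,\big|\,\bar J_1\big),
\]
and kills the conditional term outright via the Markov chain $X_{i-1}^N\widetilde Y_{i-1}^N-\bar J_1-X_i^N\widetilde Y_i^N$ (your Crypto-Lemma reasoning on the chained sets $\acute{\mathcal F}_{3_1},\acute{\mathcal F}_{3_2}$ is exactly what justifies this Markov chain, so that part of your plan is on point). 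What remains is then a purely single-block quantity. Rather than upgrading Lemma~\ref{lemma2}, the paper identifies $\bar J_1$ with the bits $\widetilde U_{2_i}^N[\mathcal V_{C|XY}]$ and bounds $I(\bar J_1;X_i^N\widetilde Y_i^N)=H(\widetilde U_{2_i}^N[\mathcal V_{C|XY}])-H(\widetilde U_{2_i}^N[\mathcal V_{C|XY}]\,|\,X_i^N\widetilde Y_i^N)$ directly: the first entropy is at most $|\mathcal V_{C|XY}|$, and the second is compared to $H(U_2^N[\mathcal V_{C|XY}]\,|\,X^NY^N)\ge |\mathcal V_{C|XY}|(1-\delta_N)$ using Lemma~\ref{lemma2} together with \cite[Lemmas~14,~17]{chou2016empirical}. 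The two Pinsker/reverse-Pinsker conversions you anticipate appear in this entropy-comparison step, which is where the $\sqrt[4]{N^{15}\delta_N}$ scaling originates.

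Your product-reference strategy would oblige you to control the full two-block joint $\tilde P_{X_{i-1:i}^NY_{i-1:i}^N\bar J_1}$ against $Q^{\otimes 2}P_{\bar J_1}$, which drags the chaining coupling into the divergence calculation rather than disposing of it cleanly beforehand; the paper's chain-rule-then-Markov step sidesteps this entirely and reduces everything to a one-block entropy estimate. Your approach can likely be pushed through, but the paper's is shorter and avoids the ``technical crux'' you flag.
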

The proof of Lemma~\ref{lemma3} can be found in Appendix~\ref{Appnx:ProofLemma3}.

\begin{lemma}\label{lemma4}
	We have
	\begin{equation*} 
	\mathbb{D}\Big(\tilde{P}_{X_{1:k}^NY_{1:k}^N}||\prod_{i=1}^{k}\tilde{P}_{X_{i}^NY_{i}^N}\Big) \leq (k-1)\delta_{N}^{(3)}
	\end{equation*}
	where $\delta_{N}^{(3)}$ is defined in Lemma 3. 
\end{lemma}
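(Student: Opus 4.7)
The plan is to establish Lemma~\ref{lemma4} by induction on $k$, using the chain rule of KL divergence to peel off one block at a time and Lemma~\ref{lemma3} to control the resulting pairwise term. For notational brevity, write $W_i\triangleq(X_i^N,Y_i^N)$.

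The base case $k=1$ is trivial since both sides vanish, and the case $k=2$ follows directly from Lemma~\ref{lemma3} after marginalizing $\bar{J}_1$ (marginalization can only decrease KL divergence). For the induction step, assume the bound holds for $k-1$. The chain rule of KL divergence yields
\begin{equation*}
\mathbb{D}\Big(\tilde{P}_{W_{1:k}}\Big\|\prod_{i=1}^{k}\tilde{P}_{W_{i}}\Big) = \mathbb{D}\Big(\tilde{P}_{W_{1:k-1}}\Big\|\prod_{i=1}^{k-1}\tilde{P}_{W_{i}}\Big) + \mathbb{D}\big(\tilde{P}_{W_{1:k}}\big\|\tilde{P}_{W_{1:k-1}}\tilde{P}_{W_{k}}\big),
\end{equation*}
where the first summand is at most $(k-2)\delta_N^{(3)}$ by the induction hypothesis. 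It then suffices to show that the second summand is at most $\delta_N^{(3)}$.

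To bound this second summand, I first invoke monotonicity of KL divergence under marginalization to enlarge both arguments by the reused common randomness $\bar{J}_1$:
\begin{equation*}
\mathbb{D}\big(\tilde{P}_{W_{1:k}}\big\|\tilde{P}_{W_{1:k-1}}\tilde{P}_{W_{k}}\big) \leq \mathbb{D}\big(\tilde{P}_{W_{1:k}\bar{J}_1}\big\|\tilde{P}_{W_{1:k-1}\bar{J}_1}\tilde{P}_{W_{k}}\big).
\end{equation*}
Next, I exploit the Markov chain $W_k-W_{k-1}\bar{J}_1-W_{1:k-2}$ that holds under $\tilde{P}$ by the block-chaining construction of Algorithm~1: block $k$ interacts with the past only through $\bar{J}_1$ and the bits chained from block $k-1$, and these chained bits are XORed with the fresh uniform common randomness $\mathcal{F}_{3_1}^{(k)}$ so that, conditionally on $(W_{k-1},\bar{J}_1)$, they are independent of $W_{1:k-2}$ by the Crypto Lemma. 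Collapsing the conditional distribution and applying Lemma~\ref{lemma3} then gives
\begin{equation*}
\mathbb{D}\big(\tilde{P}_{W_{1:k}\bar{J}_1}\big\|\tilde{P}_{W_{1:k-1}\bar{J}_1}\tilde{P}_{W_{k}}\big) = \mathbb{D}\big(\tilde{P}_{W_{k-1:k}\bar{J}_1}\big\|\tilde{P}_{W_{k-1}\bar{J}_1}\tilde{P}_{W_{k}}\big) \leq \delta_N^{(3)},
\end{equation*}
which combined with the induction hypothesis yields the desired bound $(k-1)\delta_N^{(3)}$.

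The main technical obstacle is justifying the Markov chain $W_k-W_{k-1}\bar{J}_1-W_{1:k-2}$ with sufficient rigor. While the chained bits $\widetilde{U}_{2_{k-1}}^N[\mathcal{F}_3]$ used to seed block $k$ are in principle a function of the full internal encoding state of block $k-1$ rather than of $W_{k-1}$ alone, the XOR with fresh common randomness at the start of block $k$ (via the Crypto Lemma, as in the analogous arguments of \cite{chou2016empirical,PCempirical2016}) renders the resulting bits conditionally independent of everything in blocks $1,\dots,k-2$ once $(W_{k-1},\bar{J}_1)$ is fixed; this is precisely what is needed to collapse the conditional distribution in the middle equality above.
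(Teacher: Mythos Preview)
Your proof is correct and follows essentially the same approach as the paper. The paper writes the decomposition directly as a sum of mutual informations $\sum_{i=2}^{k} I(W_i;W_{1:i-1})$ (via \cite[Lemma~15]{chou2016empirical}) rather than by induction, but the substance is identical: enlarge each term by $\bar{J}_1$, invoke the same Markov chain $W_{1:i-2}-(W_{i-1},\bar{J}_1)-W_i$ to kill the conditional piece, and bound the remaining pairwise term by Lemma~\ref{lemma3}.
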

The proof of Lemma~\ref{lemma4} can be found in Appendix~\ref{Appnx:ProofLemma4}.

Finally, by the results of Lemma~\ref{lemma4} we can show in Lemma~\ref{lemma5} that the target distribution $Q_{X^NY^N}$ is approximated asymptotically over all blocks jointly. 
\begin{lemma}\label{lemma5}
	We have
	$$ \mathbb{D}\Big(\tilde{P}_{X_{1:k}^NY_{1:k}^N}||Q_{X^{1:kN}Y^{1:kN}}\Big) \leq \delta_{N}^{(4)}.$$
	where $\delta_{N}^{(4)}\triangleq {\cal O}(k^{3/2}N^{23/8}\delta_{N}^{1/8})$ 
\end{lemma}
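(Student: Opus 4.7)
The plan is to combine Lemma~\ref{lemma4} and Lemma~\ref{lemma2} via a triangle-inequality argument carried out in total variation rather than in KL divergence directly. Since the prescribed joint distribution of $kN$ actions factors as $Q_{X^{1:kN}Y^{1:kN}}=\prod_{i=1}^{k}Q_{X^NY^N}$ (the target actions are i.i.d.\ across the $kN$ symbols), the per-block bound from Lemma~\ref{lemma2} can be assembled into a product-measure bound for the global target, which is the distribution we actually want to compare against.

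First, I would invoke Pinsker's inequality on Lemma~\ref{lemma2} to obtain the per-block estimate $\|\tilde{P}_{X_i^NY_i^N}-Q_{X^NY^N}\|_{{\scriptscriptstyle TV}}=\mathcal{O}(\sqrt{\delta_N^{(1)}})$, and then use sub-additivity of total variation under product measures (a standard consequence of the coupling characterization) to deduce $\|\prod_{i=1}^{k}\tilde{P}_{X_i^NY_i^N}-\prod_{i=1}^{k}Q_{X^NY^N}\|_{{\scriptscriptstyle TV}}\leq k\cdot\mathcal{O}(\sqrt{\delta_N^{(1)}})$. A second application of Pinsker to Lemma~\ref{lemma4} yields $\|\tilde{P}_{X_{1:k}^NY_{1:k}^N}-\prod_{i=1}^{k}\tilde{P}_{X_i^NY_i^N}\|_{{\scriptscriptstyle TV}}=\mathcal{O}(\sqrt{(k-1)\delta_N^{(3)}})$. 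A triangle inequality then produces $\|\tilde{P}_{X_{1:k}^NY_{1:k}^N}-Q_{X^{1:kN}Y^{1:kN}}\|_{{\scriptscriptstyle TV}}=\mathcal{O}(\sqrt{k\,\delta_N^{(3)}}+k\sqrt{\delta_N^{(1)}})$, and substituting $\delta_N^{(1)}=\mathcal{O}(N^{3/2}\delta_N^{1/2})$ and $\delta_N^{(3)}=\mathcal{O}(N^{15/4}\delta_N^{1/4})$ shows that, in the regime $\delta_N=2^{-N^{\beta}}$ relevant here, the first contribution $\mathcal{O}(\sqrt{k}\,N^{15/8}\delta_N^{1/8})$ dominates asymptotically since $\delta_N^{1/8}\gg\delta_N^{1/4}$.

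To convert the resulting total-variation bound back into a KL bound, I would invoke a reverse-Pinsker-type inequality of the form used in \cite[Lemma~16]{chou2016empirical}, which bounds $\mathbb{D}(P\|Q)$ by $-\|P-Q\|_{{\scriptscriptstyle TV}}\log(\mu_Q)$ (up to additive lower-order terms) whenever the supports agree; here $Q=Q_{X^{1:kN}Y^{1:kN}}$ is a $kN$-fold product, so its minimum positive mass satisfies $\mu_Q=\mu_{XY}^{kN}$ and $-\log\mu_Q$ is of order $kN$. Multiplying the preceding TV bound by this $\mathcal{O}(kN)$ prefactor yields $\mathbb{D}(\tilde{P}_{X_{1:k}^NY_{1:k}^N}\|Q_{X^{1:kN}Y^{1:kN}})=\mathcal{O}(k^{3/2}N^{23/8}\delta_N^{1/8})$, which is exactly $\delta_N^{(4)}$.

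The main obstacle is the final reverse-Pinsker step: going from total variation back to KL is lossy and is what injects the extra $kN$ factor responsible for inflating the $N$-exponent from $15/8$ to $23/8$ and the $k$-exponent from $1/2$ to $3/2$. Care must also be taken in verifying that the $\sqrt{k\,\delta_N^{(3)}}$ branch really dominates the $k\sqrt{\delta_N^{(1)}}$ branch (so that the $\delta_N^{1/8}$ exponent is inherited from $\delta_N^{(3)}$ rather than the sharper $\delta_N^{1/4}$ from $\delta_N^{(1)}$); the remaining ingredients (Pinsker, product sub-additivity of TV, triangle inequality) are entirely standard and require no new ideas beyond careful bookkeeping of constants.
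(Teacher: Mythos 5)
Your proposal is correct and follows essentially the same route as the proof the paper imports by reference from \cite{chou2016empirical}: Pinsker applied to Lemmas~\ref{lemma2} and~\ref{lemma4}, telescoping/sub-additivity of total variation across the $k$ blocks, a triangle inequality, and then a TV-to-KL conversion of the type in \cite[Lemmas 14--17]{chou2016empirical} with the $\mathcal{O}(kN\log(1/\mu_{XY}))$ prefactor (your acknowledged lower-order additive term and the absolute-continuity requirement are harmless here, since per-block finiteness from Lemma~\ref{lemma2} guarantees the support inclusion). Your bookkeeping reproduces exactly the claimed $\delta_N^{(4)}=\mathcal{O}(k^{3/2}N^{23/8}\delta_N^{1/8})$, including the dominance of the $\sqrt{k\,\delta_N^{(3)}}$ branch, so no gap remains.
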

\begin{proof} We reuse the proof of \cite[Lemma 5]{chou2016empirical} with substitutions 
	$q_{Y^{1:N}} \leftarrow Q_{X^NY^N},$  $\tilde{p}_{Y_i^{1:N}} \leftarrow \tilde{P}_{Y_i^NX_i^N}.$ 
\end{proof}

\begin{theorem}\label{Thm:PCRR} The polar coding scheme described in Algorithms 1, 2 achieves the region stated in Theorem~\ref{Thm:JointCRR}. It satisfies \eqref{eq:PolarStrngCoorCondtion} for a binary input DMC channel and a target distribution $Q_{XY}$ defined over ${\cal X}\times {\cal Y}$, with an auxiliary random variable $C$ defined over the binary alphabet. 
\end{theorem}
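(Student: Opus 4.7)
The plan is to prove Theorem \ref{Thm:PCRR} in two stages: first establishing the total variation condition \eqref{eq:PolarStrngCoorCondtion} by combining Lemma \ref{lemma5} with Pinsker's inequality, and then verifying that the common and local randomness rates consumed by Algorithms~1--2 recover the region of Theorem \ref{Thm:JointCRR} as $N \to \infty$ and $k \to \infty$.

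For the first stage, Lemma \ref{lemma5} bounds the KL divergence by $\delta_N^{(4)} = \mathcal{O}(k^{3/2} N^{23/8} \delta_N^{1/8})$, where $\delta_N = 2^{-N^\beta}$ with $\beta < 1/2$ decays faster than any inverse polynomial in $N$. Hence $\delta_N^{(4)} \to 0$ whenever $k$ grows at most polynomially in $N$. Applying Pinsker's inequality then yields
\begin{equation*}
\norm{\tilde{P}_{X_{1:k}^N Y_{1:k}^N} - Q^{kN}_{XY}}_{{\scriptscriptstyle TV}} \leq \sqrt{\tfrac{\ln 2}{2}\, \delta_N^{(4)}} \longrightarrow 0,
\end{equation*}
which delivers \eqref{eq:PolarStrngCoorCondtion} for $N$ sufficiently large.

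For the second stage, I would translate each rate constraint of Theorem \ref{Thm:JointCRR} into an asymptotic set-size statement using the polarization limits for the sets in \eqref{eq:C_Sets} and \eqref{eq:A_Sets}, together with $|{\cal V}_{Y|BC}|/N \to H(Y|BC)$ from \eqref{eq:Y_Set}. The rate at which the $C$-codebook indices must be reliably decoded is $R_c = |{\cal V}_C \setminus {\cal H}_{C|B}|/N \to I(B;C)$, matching \eqref{equ:JSRR5}. The fresh common randomness per block contributes $|{\cal F}_1 \cup \check{\cal F}_4|/N$ to $R_o$, while the reused portions $|{\cal F}_2 \cup \hat{\cal F}_4|$ and $|{\cal F}_7|$ along with the extra block $k+1$ used for chaining each cost only $\mathcal{O}(1/k)$ per action and vanish when $k \to \infty$ after $N \to \infty$. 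Local randomness at Node $\mathsf X$ is supplied at the indices ${\cal F}_6$, with $|{\cal F}_6|/N \to H(A|CX) - H(A|CXY)$, and Node $\mathsf Y$ draws $|{\cal V}_{Y|BC}|/N \to H(Y|BC)$ random symbols per action via the channel-simulation step \eqref{eq:seqGen}, matching \eqref{equ:JSRR7}. The remaining resolvability and independence constraints \eqref{equ:JSRR1}--\eqref{equ:JSRR4} then reduce to algebraic identities under the Markov chain $X-(A,C)-(B,\widehat{C})-Y$ implicit in the construction.

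The main technical obstacle is reconciling the strict inequalities that parametrize the random-coding region of Theorem \ref{Thm:JointCRR} with the asymptotic equalities produced by polarization. My approach is to absorb both the $o(1)$ polarization slack and the $\mathcal{O}(1/k)$ chaining overhead into arbitrarily small margins, choosing $N$ and then $k$ large enough so that any rate tuple satisfying the strict inequalities of Theorem \ref{Thm:JointCRR} lies strictly above the asymptotic operating point of Algorithms~1--2. A secondary subtlety is handling the extra transmission block $k+1$ used to seed the recovery of ${\cal F}_3$: its communication cost must be shown to amortize to zero as $k \to \infty$ so that the effective channel rate $R_c$ is not inflated beyond $I(B;C)$, and the joint-typicality-like alignment ${\cal H}_{C|A} \subseteq {\cal H}_{C|B}$ guaranteed by the physical degradation $P_{A|C} \succ P_{B|C}$ must be invoked to ensure that the superposition-style nesting of the codebooks ${\cal A}$ and ${\cal C}$ persists asymptotically.
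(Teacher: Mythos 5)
Your overall strategy is the one the paper actually uses: account for the rates of Algorithms~1--2 through the normalized sizes of the polarization sets in \eqref{eq:C_Sets}, \eqref{eq:A_Sets}, \eqref{eq:Y_Set}, take $N\rightarrow\infty$ and then $k\rightarrow\infty$ so that the reused randomness ($\bar J_1$, $\bar J_2$) and the extra chaining block $k+1$ amortize away, and obtain \eqref{eq:PolarStrngCoorCondtion} from Lemma~\ref{lemma5} via Pinsker's inequality. Your identification of $R_o$ with $|{\cal F}_1\cup\check{\cal F}_4|/N\rightarrow I(Y;C|X)$, of $\rho_1$ with $|{\cal F}_6|/N\rightarrow I(A;Y|CX)$, and of $\rho_2$ with $|{\cal V}_{Y|BC}|/N\rightarrow H(Y|BC)$ matches \eqref{eq:Ro}, \eqref{eq:rho1}, \eqref{eq:rho2}, and your remark about absorbing the polarization and chaining slack into small margins is the right way to reconcile the strict inequalities of Theorem~\ref{Thm:JointCRR} with the asymptotic equalities.

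There is, however, a concrete error in your rate accounting for the communication rate. You set $R_c=|{\cal V}_C\setminus{\cal H}_{C|B}|/N\rightarrow I(B;C)$ and call this ``matching \eqref{equ:JSRR5}.'' That set (${\cal L}_2$) is the collection of indices the weak user can decode reliably, i.e., it gives the \emph{upper limit} $R_c<I(B;C)$; it is not the rate at which the scheme actually communicates. In the construction, the message $I$ occupies the positions ${\cal F}_3\cup{\cal F}_5={\cal V}_C\setminus{\cal V}_{C|X}$, so the operating communication rate is $k|{\cal F}_3\cup{\cal F}_5|/(kN)\rightarrow I(X;C)$ as in \eqref{eq:Rc}; the decodability constraint \eqref{equ:JSRR5} is then satisfied not by operating at $I(B;C)$ (which would violate the strict inequality and inflate the rate) but through the chaining step, which requires $|{\cal F}_3|\leq|{\cal F}_1\cup{\cal F}_2|$ and hence implicitly $I(X;C)<I(B;C)$ for the chosen auxiliary variables. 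This distinction is precisely what later lets the joint polar scheme claim the smaller communication cost compared to separation. Relatedly, your treatment of \eqref{equ:JSRR1}--\eqref{equ:JSRR4} as ``algebraic identities'' skips the $R_a$ accounting altogether: the paper evaluates $R_a$ as $(|{\cal V}_{A|CXY}|+k|{\cal F}_8|)/(kN)\rightarrow I(A;X|C)$ (with $\bar J_2=|{\cal F}_7|$ amortizing in $k$), and it is these explicit limits, together with \eqref{eq:Ro} and \eqref{eq:Rc}, that verify the resolvability and independence constraints. With the $R_c$ identification corrected and the $R_a$ computation supplied, your argument coincides with the paper's proof.
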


\begin{proof}
The common randomness rate $R_o$ is given as
	\begin{align} \label{eq:Ro}
	\frac{|\bar{J}_1|+|J_{1:k}|}{kN}
	&= \frac{|{\cal V}_{C|XY}|+k|{\cal V}_{C|X}\setminus {\cal V}_{C|XY}|}{Nk} \notag\\
	&= \frac{|{\cal V}_{C|XY}|}{kN} +\frac{|{\cal V}_{C|X}\setminus {\cal V}_{C|XY}|}{N} \notag\\
	&\xrightarrow{N\rightarrow\infty} \frac{H(C|XY)}{k}+ I(Y;C|X)  \notag\\
	&\xrightarrow{k\rightarrow\infty} I(Y;C|X).
	\end{align}
	
The communication rate $R_c$ is given as
	\begin{align}\label{eq:Rc}
	\frac{k|{\cal F}_5 \cup {\cal F}_3|}{kN}& 
	= \frac{k|{\cal V}_{C}\setminus {\cal V}_{C|X}|}{Nk} 
	= \frac{|{\cal V}_{C}\setminus {\cal V}_{C|X}|}{N} \notag\\ 
	&\xrightarrow{N\rightarrow\infty} I(X;C),
	\end{align}
whereas  $R_a$ can be written as
	\begin{align} \label{eq:Ra}
	\frac{|{\cal V}_{A|CXY}|+k|{\cal F}_8|}{kN}& \notag 
	= \frac{|{\cal V}_{A|CXY}|+k|{\cal V}_{A|C}\setminus {\cal V}_{A|CX}|}{kN} \notag\\
	&= \frac{|{\cal V}_{A|CXY}|}{kN} +\frac{|{\cal V}_{A|C}\setminus {\cal V}_{A|CX}|}{N} \notag\\
	&\xrightarrow{N\rightarrow\infty} I(A;X|C) + \frac{H(A|CXY)}{k}\notag\\
	&\xrightarrow{k\rightarrow\infty} I(A;X|C). 
	\end{align}
	
	The rates of local randomness $\rho_1$ and $\rho_2,$ respectively, are given as
	\begin{align}\label{eq:rho1}
	&\rho_1=\frac{k|{\cal F}_6|}{kN} =\frac{|{\cal V}_{A|CX}\setminus {\cal V}_{A|CXY}|}{N} 
	\xrightarrow{N\rightarrow\infty} I(A;Y|CX),  \\
	&\rho_2=\frac{k|{V}_{Y|BC}|}{kN} \xrightarrow{N\rightarrow\infty} H(Y|BC).\label{eq:rho2}
	\end{align}
	
	Finally we see that conditions \eqref{equ:JSRR1}-\eqref{equ:JSRR7} are satisfied by \eqref{eq:Ro}-\eqref{eq:rho2}. Hence, given $R_a$, $R_o$, $R_c$ satisfying Theorem~\ref{Thm:JointCRR}, based on Lemma~\ref{lemma5} and Pinsker's inequality \cite{pinsker1964information} we have \\
	
	\begin{align} \label{eq:resolvProof}
	\mathbb E\big[||\tilde{P}_{X_{1:k}^NY_{1:k}^N}-Q_{X^{1:kN}Y^{1:kN}}||_{{\scriptscriptstyle TV}}\big]  &\leq  \mathbb E\Big[\sqrt{2\mathbb{D}(\tilde{P}_{X_{1:k}^NY_{1:k}^N}||Q_{X^{1:kN}Y^{1:kN}})} \;\Big] \notag\\
	& \leq \sqrt{2\mathbb E \big[\mathbb{D}(\tilde{P}_{X_{1:k}^NY_{1:k}^N}||Q_{X^{1:kN}Y^{1:kN}})\big]}\mathop{\longrightarrow}^{N\rightarrow \infty} 0.
	\end{align} 	
	As a result, from \eqref{eq:resolvProof} there exists an $N\in\mathbb N$ for which the polar code-induced pmf between the pair of actions satisfies the strong coordination condition in \eqref{eq:PolarStrngCoorCondtion}.
\end{proof}

\section{An Example} \label{sec:example}
In the following we compare the performance of the joint scheme in
Section~\ref{sec:JointScheme} and the separation-based scheme in
Section~\ref{sec:SepCoorRE}  using a simple
example. Specifically, we let $X$ to be a Bernoulli-$\frac{1}{2}$ source,
the 
communication channel $P_{B|A}$ to be a BSC with crossover probability $p_o$
(BSC($p_o$)), and the conditional distribution $P_{Y|X}$ to be a BSC($p$). 

\subsection{Basic separation scheme with randomness extraction}	
To derive the rate constraints for the basic separation scheme, we consider  $X-U-Y$ with $U\sim\,$Bernoulli-$\frac{1}{2}$ (which is known to be optimal),  $ P_{U|X}=$BSC$(p_1)$, and  $
P_{Y|U}=$BSC$(p_2)$,  $p_2 \in [0,p]$, $p_1 = \dfrac{p-p_2}{1-2p_2}$.
Using this to obtain the mutual information terms in
Theorem~\ref{Thm:SepSchCRR}, we get
\begin{subequations}
	\begin{align}
	&I(X;U)=1-h_2(p_1),\; I(A;B)=1-h_2(p_o),\label{eq:sep_conda}\\
	&I(XY;U)=1+ h_2(p)-h_2(p_1)-h_2(p_2),\\
	&\text{and } H(Y|U)= h_2(p_2).\label{eq:sep_condc}
	\end{align}
\end{subequations}

After a round of Fourier-Motzkin elimination by using \eqref{eq:sep_conda}-\eqref{eq:sep_condc} in Theorem~\ref{Thm:SepSchCRR}, we obtain the following constraints for the achievable region using the separation-based scheme with randomness extraction:
\begin{subequations}\label{eq:SepSchemObjective}
	\begin{align}
	R_o+\rho_1+\rho_2&\geq h_2(p)-\min\big(h_2(p_2), h_2(p_o)\big),\label{eq:SepSchemObjective1}\\		
	h_2(p_1)&\ge h_2(p_o) \label{eq:SepSchemObjective2}\\
	R_c&\geq 1-h_2(p_1).\label{eq:SepSchemObjective3}
	\end{align}
\end{subequations} 
Note that \eqref{eq:SepSchemObjective1} presents the achievable sum rate
constraint for the required randomness in the system.

\subsection{Joint scheme}
The rate constraints for the joint scheme are constructed in two
stages. First, we derive the scheme for the codebook
cardinalities ${|{\cal A}|=2}$ and ${|{\cal C}|=2}$, an extension to larger
$|{\cal C}|$ is straightforward but more tedious (see Figs.~\ref{fig:MinSumRateComparsion} and
\ref{fig:CommunicationRateComparision})\footnote{Note that these cardinalities are not optimal. They are, however, analytically feasible and provide a good intuition about the performance of the scheme.}. The joint scheme correlates the codebooks while ensuring that the
decodability constraint \eqref{equ:JSRR5} is satisfied. To find the best tradeoff between these two features, we find the joint distribution $P_{AC}$ that maximizes $I(B;C)$. For ${|\mathcal{C}|=2}$ this is simply given by
${P_{A|C}=\delta_{ac}}$, where $\delta_{ac}$ denotes the Kronecker delta. 
Then, the distribution $P_{X}P_{CA|X}P_{B|A}P_{Y|BC}$
that produces the boundary of the strong coordination region for the joint
scheme is formed by cascading two BSCs and another symmetric  channel,
yielding the Markov chain ${X-(C,A)-(C,B)-Y}$, with the channel transition matrices
\begin{align}
P_{CA|X}&=\left[ \begin{matrix}
1-p_1 &0  & 0 & p_1 \\
p_1  &0    & 0 & 1-p_1 
\end{matrix}\right],\\
P_{CB|CA}&=\left[\begin{matrix}
1-p_o  &p_o  & 0   & 0\\
0   &0    & p_o & 1-p_o 
\end{matrix}\right], \\
P_{Y|CB}&=\left[\begin{matrix}
1-\alpha  &  1-\beta & \beta & \alpha \\
\alpha  & \beta & 1-\beta & 1-\alpha
\end{matrix}\right]^T
\end{align} 
for some $\alpha,\beta \in [0,1].$

Then, the mutual information terms  in
Theorem~\ref{Thm:JointCRR}   can be 
expressed with $p_2\triangleq (1-p_o)\alpha + p_o\beta$ as
\begin{align*}
I(X;AC) &=I(X;C)= 1-h_2(p_1),&\\
I(XY;AC) &=I(XY;C) =1+ h_2(p)-h_2(p_1)-h_2(p_2),&\\
I(B;C) &= 1-h_2(p_o),\text{ and }\\
H(Y|BC)&= p_oh_2(\beta)+(1-\!p_o)h_2(\alpha).
\end{align*}
To find the minimum achievable sum rate we first perform Fourier-Motzkin
elimination on the rate constraints in Theorem~\ref{Thm:JointCRR} and then
minimize the information terms with respect to the parameters $p_2$,
$\alpha$, and $\beta$ as follows:
\begin{align} \label{eq:JointSchemObjective}
&R_o+\rho_1+\rho_2=\min_{p_2,\alpha,\beta} \big(h_2(p)-h_2(p_2)+(1-p_o)h_2(\alpha)+p_oh_2(\beta)\big) \\
&\qquad\qquad\quad\,\,\,\, \text{subject to  }\,\begin{array}{rcl} h_2(p_1)&>&h_2(p_o), \notag\\
R_c&\geq& 1-h_2(p_1), \notag\\
p&=&p_1-2p_1p_2+p_2.\end{array}
\end{align}

\subsection{Numerical  results}
Fig.~\ref{fig:MinSumRateComparsion} presents a comparison between the
minimum randomness sum rate $R_o+\rho_1+\rho_2$ required to achieve coordination using both the joint and the separate scheme with randomness extraction. The communication channel is given by BSC($p_o$), and 
the target distribution is set as
$Q_{Y|X}=\mathrm{BSC}(0.4)$. The rates for the joint scheme are obtained
by solving the optimization problem in \eqref{eq:JointSchemObjective}.
Similar results are obtained for the joint scheme with  $|{\cal C}|> 2$. For
the separate scheme we choose $p_2$ such that {$h_2(p_1)=h_2(p_0)$} to maximize
the amount of extracted randomness. We
also include  the performance of the separate scheme
without randomness extraction. 

\begin{figure}[th!]
	\centering
	\begin{subfigure}{.5\textwidth}
		\centering
		\includegraphics[scale=0.41]{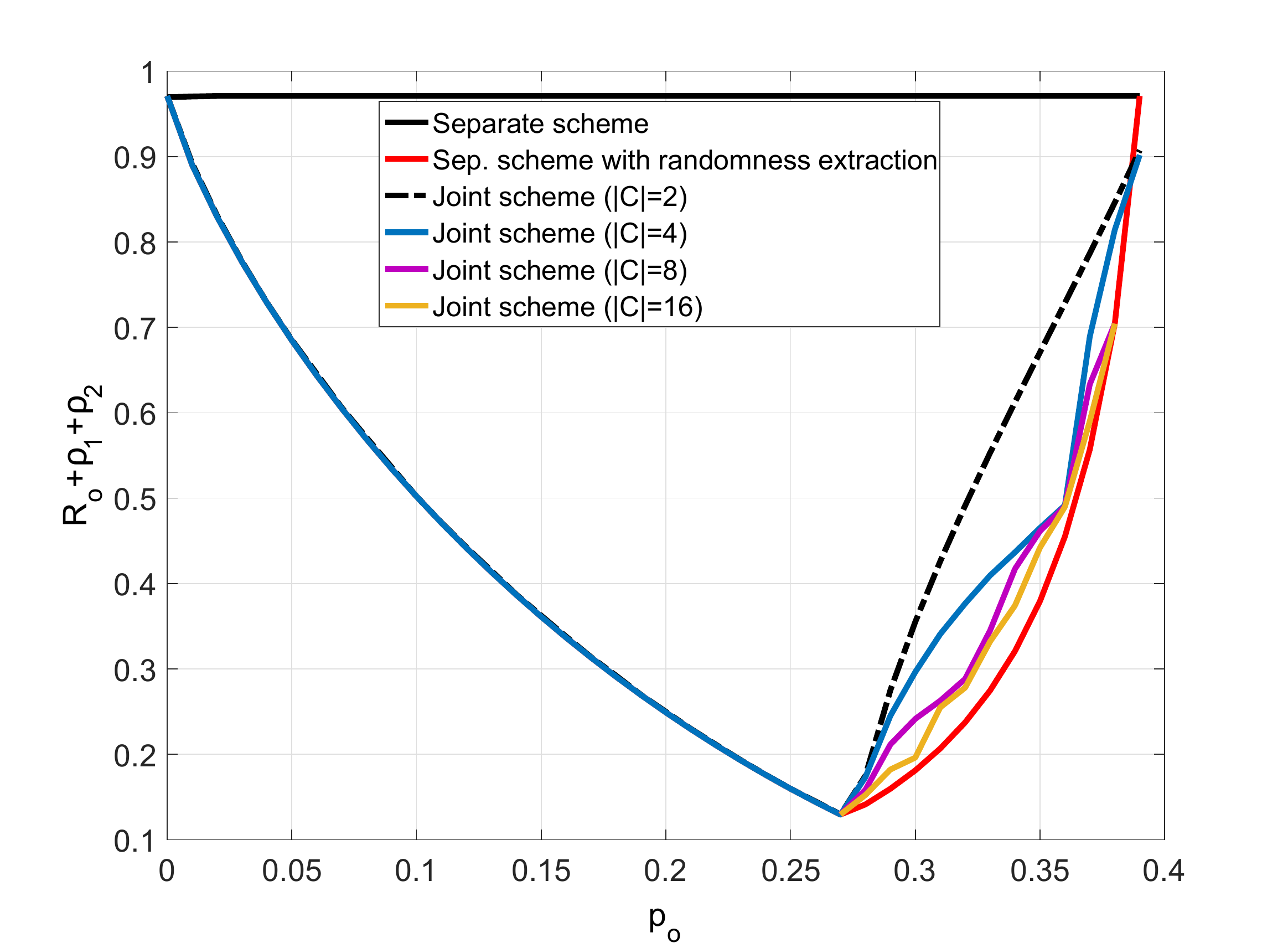}
		\caption{Randomness sum rate vs.~BSC crossover probability $p_0$.}
		\label{fig:MinSumRateComparsion}
	\end{subfigure}%
	\begin{subfigure}{.5\textwidth}
		\centering
		\includegraphics[scale=0.422]{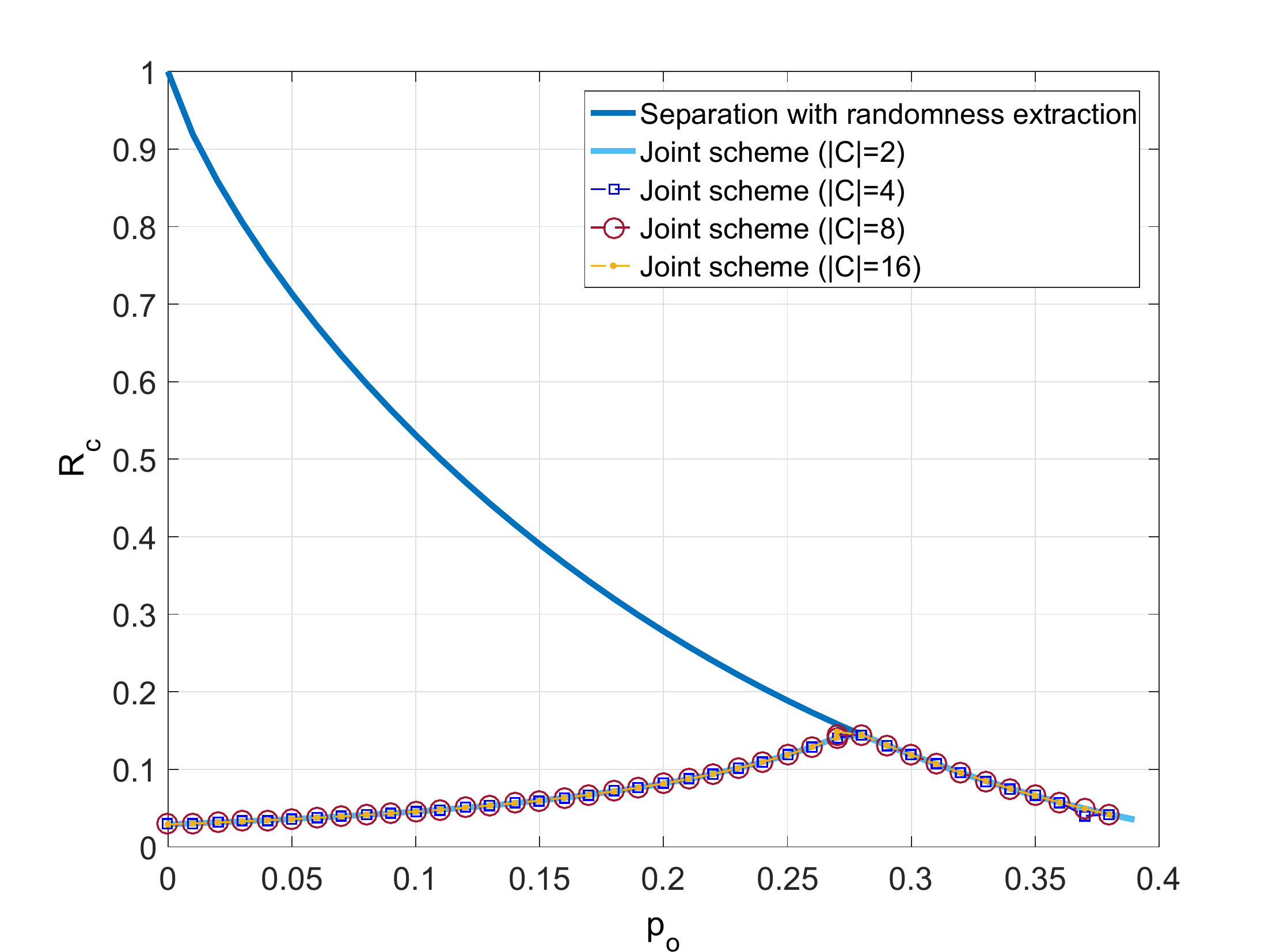}
		\caption{Communication rate vs.~BSC crossover probability $p_0$.}
		\label{fig:CommunicationRateComparision}
	\end{subfigure}%
	\caption{Numerical results for target distribution $Q_{Y|X}=\mathrm{BSC}(0.4)$}
\end{figure}

\vspace{1ex}
As can be seen from Fig.~\ref{fig:MinSumRateComparsion}, both the joint scheme and the separate
scheme with randomness extraction provide the same sum rate
$R_o+\rho_1+\rho_2$ for $p_o\leq p'_o$ where
$p'_o\triangleq\frac{1-\sqrt{1-2p}}{2}$. We also observe that for noisier
channels with $p_o >p'_o$ the joint scheme approaches the performance of the separate scheme
when the cardinality of $C$ is increased. The increase of
$R_o+\rho_1+\rho_2$ for $p_o> p'_o$ is due to the fact that
in this regime the channel provides more than sufficient randomness for simulating the
action sequence $Y^n$ via the test channel $P_{Y|BC}$ (see
Fig.~\ref{fig:StrongCoordination}). As a result, the parameters
$\alpha$ and $\beta$ associated with  $P_{Y|BC}$ must be adjusted to
ensure that \eqref{eq:StrngCoorCondtion} is still satisfied. As $p_o$ increases further, the required total randomness of the joint scheme approaches the one for the basic separate scheme again.

Fig.~\ref{fig:CommunicationRateComparision} provides a comparison of the
communication rate for both schemes. Note that the joint scheme provides
significantly smaller rates than the separation scheme with randomness
extraction  for $p_o\leq p'_o$, independently
of the cardinality of $|\mathcal{C}|$.  Thus, in this regime joint
coordination-channel coding provides an advantage in terms of communication
cost and outperforms a separation-based scheme for the same amount of
randomness injected into the system. 

\vspace*{0.2ex}
\section{summary and concluding remarks}
In this paper, we have investigated a fundamental question regarding
  communication-based coordination: Is separate coordination and channel
  coding optimal in the context of point-to-point strong coordination? In particular, we considered a two-node strong coordination setup with a DMC as the communication link. 
To that extent, we presented achievability results for this setting and constructed a general joint
coordination-channel coding scheme based on random codes. We also
  provided a capacity result  for two special cases of the noisy strong
  coordination setup: the first case considers the actions at Node $\mathsf
  Y$ to be determined by Node $\mathsf X$ and the second case addresses the
  case where the discrete memoryless communication channel is a
  deterministic channel. The proof technique underlying our joint coding
scheme is based on channel resolvability, a technique which is widely used
in analyzing strong coordination problems. In addition, we presented a
separation-based benchmark scheme and utilized randomness extraction to improve its performance. In this scheme, randomness is extracted from the channel at the decoder when the channel is an additive noise DMC. In addition, we have also proposed a constructive coding scheme based on polar codes for the noisy two-node network that can achieve all the rates that the joint scheme can, where achievability is guaranteed asymptotically.

Finally, by leveraging our random coding results, we presented an
example for coordinating a doubly binary symmetric source over a binary
symmetric communication channel in which the proposed joint scheme
outperforms a separation-based scheme in terms of achievable
  communication rate. As a result, we conclude that a separation-based scheme, even if it exploits randomness extraction from the communication channel, is sub-optimal for this problem. Although this work yields some insight in coordination over noisy communication links, a general converse proof to establish the optimality of the presented joint coding scheme is still open.  	

\vspace*{0.5ex}
\appendices
\section{Proof of Lemma~\ref{Lma:Secrecy}}\label{Appnx:ProofSecrecyLemma}
The proof of Lemma~3 leverages the results from Section~\ref{sec:DecoCons}. The bound on $||\check{P}_{X^nJ}-Q^n_{X}P_J||_{\scriptscriptstyle TV}$ is obtained in a similar manner as in the proof of Lemma~\ref{Lma:Reslv}. Note that here we also drop the subscripts from the pmfs for simplicity, e.g., $P_{X|AC}^n(x^n|A_{ijk}^n, C_{ij}^n)$ will be denoted by $P(x^n|A_{ijk}^n, C_{ij}^n),$ and $Q^n_{X}(x^n)$ will be denoted by $Q(x^n)$ in the following.
\begin{proof}[Proof of Lemma~\ref{Lma:Secrecy}]
	\begin{equation*}
	\begin{split}
	\mathbb E_\mathsf{C} [\mathbb{D}(\check{P}_{X^nJ}||&Q^n_{X}P_J)]\\
	&= \mathbb E_\mathsf{C}\Bigg[\sum_{x^n,j} \Big( \sum_{i,k} \dfrac{P(x^n|A^n_{ijk},C^n_{ij})}{2^{nR}}\Big) \log \Big( \sum_{i',k'} \dfrac{P(x^n|A^n_{i'j'k'},C^n_{i'j'})}{2^{nR}Q(x^n)P(j)}\Big)\Bigg]\\
	&=\sum_{x^n} \sum_{i,j,k}  \mathbb E_\mathsf{C} \Bigg[ \Big( \dfrac{ P(x^n|A^n_{ijk},C^n_{ij})}{2^{nR}}\Big) \log \Big( \sum_{i',k'} \dfrac{P(x^n|A^n_{i'j'k'},C^n_{i'j'})}{2^{n(R_a+R_c)}Q(x^n)}\Big)\Bigg] \\
	&\stackrel{(a)}{=} \sum_{x^n} \sum_{i,j,k}  \mathbb E_{A^n_{ijk},C^n_{ij}}\Bigg[ \Big(  \dfrac{P(x^n|A^n_{ijk},C^n_{ij})}{2^{nR}}\Big) \mathbb E_{rest} \Big[ \log \Big( \sum_{i',k'} \dfrac{P(x^n|A^n_{i'j'k'},C^n_{i'j'})}{2^{n(R_a+R_c)}Q(x^n)}\Big) \Big| A^n_{ijk}C^n_{ij}\Big]\Bigg] \\
	&\stackrel{(b)}{\leq} \sum_{x^n} \sum_{i,j,k} \mathbb E_{A^n_{ijk}C^n_{ij}}\Bigg[ \Big( \dfrac{ P(x^n|A^n_{ijk},C^n_{ij})}{2^{nR}}\Big)\log \Big( \mathbb E_{rest} \Big[\sum_{i',k'} \dfrac{ P(x^n|A^n_{i'j'k'},C^n_{i'j'})}{2^{n(R_a+R_c)}Q(x^n)} \Big| A^n_{ijk}C^n_{ij} \Big] \Big)\Bigg] 
		\end{split}
	\end{equation*}
	\begin{equation*}
	\begin{split}
	&\stackrel{(c)}{=} \sum_{x^n} \sum_{a^n_{ijk},c^n_{ij}} \sum_{i,j,k} \dfrac{P(x^n,a^n_{ijk},c^n_{ij})}{2^{nR}} \log \Bigg( \sum_{\substack{i',k':\\(i',j',k')=(i,j,k)}} \mathbb E_{A^n_{ijk}C^n_{ij}} \Big[ \dfrac{ P(x^n|A^n_{i'j'k'},C^n_{i'j'})}{2^{n(R_a+R_c)}Q(x^n)}\Big|A^n_{ijk}C^n_{ij} \Big]\\
	&\hspace*{37ex}+\sum_{\substack{i',k':\\(i',j')=(i,j),(k'\neq k)}} \mathbb E_{A^n_{ijk}C^n_{ij}} \Big[ \dfrac{P(x^n|A^n_{i'j'k'},C^n_{i'j'})}{2^{n(R_a+R_c)}Q(x^n)}\Big|A^n_{ijk}C^n_{ij} \Big]\\ 
	&\hspace*{37ex}+\sum_{\substack{i',j',k':\\(i',j',k')\neq(i,j,k)}} \mathbb E_{A^n_{ijk}C^n_{ij}} \Big[ \dfrac{ P(x^n|A^n_{i'j'k'},C^n_{i'j'})}{2^{n(R_a+R_c)}Q(x^n)}\Big|A^n_{ijk}C^n_{ij}\Big] \Bigg) \\
	&\stackrel{(d)}{=} \sum_{x^n} \sum_{a^n_{ijk},c^n_{ij}} \sum_{i,j,k} \dfrac{P(x^n,a^n_{ijk},c^n_{ij})}{2^{nR}}  \log \Bigg(\dfrac{P(x^n|a^n_{ijk},c^n_{ij})}{2^{n(R_a+R_c)}Q(x^n)} + \sum_{\substack{i',k':\\(i',j')=(i,j),(k'\neq k)}} \dfrac{P(x^n|c^n_{ij})}{2^{n(R_a+R_c)}Q(x^n)} \\
	&\hspace*{37ex}+ \sum_{\substack{i',k':\\(i',j',k')\neq(i,j,k)}} \dfrac{P(x^n)}{2^{n(R_a+R_c)}Q(x^n)} \Bigg) \\
	&\stackrel{(e)}{\leq}\sum_{x^n} \sum_{a^n_{ijk},c^n_{ij}}  P(x^n a^n_{ijk},c^n_{ij}) \log \Bigg (\dfrac{P(x^n|a^n_{ijk},c^n_{ij})}{2^{n(R_a+R_c)}Q(x^n)} +(2^{nR_a}) \dfrac{P(x^n|c^n_{ij})}{2^{n(R_a+R_c)}Q(x^n)} +1 \Bigg)\\
	&\stackrel{(f)}{\leq} \Bigg[ \sum_{ (x^n,a^n,c^n)\in {\cal T}_\epsilon^n(P_{XAC})}\! P(x^n,a^n,c^n) \log \Bigg (\dfrac{2^{-nH(X|AC)(1-\epsilon)}} {2^{n(R_a+R_o)}2^{-nH(X)(1+\epsilon)}} + \dfrac{2^{-nH(X|C)(1-\epsilon)}}{2^{nR_c}2^{-nH(X)(1+\epsilon)}} + 1 \Bigg)\Bigg]\\ 
	&\hspace*{52ex}+ \mathbb P\big((x^n,a^n,c^n) \notin {\cal T}_\epsilon^n( P_{XAC}) \big)\log(2\mu_{X}^{-n}+1)\\
	&\stackrel{(g)}{\leq}\Bigg[ \sum_{(x^n,a^n,c^n)\in {\cal T}_\epsilon^n( P_{XAC})} P(x^n,a^n,c^n) \log \Bigg(\dfrac{2^{n(I(X;AC)+\delta(\epsilon))}}{2^{n(R_c+R_a)}}+ \dfrac{2^{n(I(X;C)+\delta(\epsilon))}}{2^{n(R_c)}}+1 \Bigg) \Bigg]\\ 
	&\hspace*{52ex} + \big(2{\cal |X||A||C|}e^{-n\epsilon^2\mu_{XAC}}\big)\log(2\mu_{X}^{-n}+1)\\ 
	&\stackrel{(h)}{\leq} \epsilon',
	\end{split}
	\end{equation*}
	\noindent where $\epsilon'>0,$ $\epsilon' \rightarrow 0$ as $n\rightarrow \infty$.\\
	($a$) follows from the law of iterated expectations. Note that we have used $(a^{n}_{ijk},c^{n}_{ij})$ to denote the codewords corresponding to the indices $(i,j,k)$ and $(a^{n}_{i'j'k'},c^{n}_{i'j'})$ to denote the codewords corresponding to the indices $(i',j',k')$, respectively.\\
	($b$) follows from Jensen's inequality \cite{EoIT:2006};\\
	($c$) follows from dividing the inner summation over the indices $(i',k')$ into three subsets based on the indices $(i,j,k)$ from the outer summation;\\
	($d$) results from taking the conditional expectation within the subsets in (c);\\  
	($e$) follows from
	\begin{itemize}
		\item $(i',j',k')=(i,j,k)$: there is only one pair of codewords represented by the indices $A^{n}_{ijk},C^{n}_{ij}$ corresponding to $x^n$;
		\item when $(k'\neq k)$ and $(i',j')=(i,j)$ there are $(2^{nR_a}-1)$ indices in the sum; 
		\item $(i',j',k')\neq(i,j,k)$: the number of the indices is at most $2^{n(R_a+R_c)}.$ Moreover, $P(x^n)$ is less than $\epsilon$ close to $Q(x^n)$ as a consequence of Lemma~\ref{Lma:Reslv} and \cite[Lemma 16]{cuff2009CiN4CB}.	
	\end{itemize} 
	($f$) results from splitting the outer summation: The first summation contains typical sequences and is bounded by using the probabilities of the typical set. The second summation contains the tuple of sequences when the action sequence $x^n$ and codewords $c^n,a^n$, represented here by the indices $(i,j,k)$, are not $\epsilon$-jointly typical (i.e.,~$(x^n,a^n,c^n)\notin {\cal T}_\epsilon^n( P_{XAC})$). This sum is upper bounded following \cite{BK14} with $\mu_{X} =\min^*_{x} (P_{X})$.\\ 
	($g$) follows from the Chernoff bound on the probability that a sequence is not strongly typical \cite{kramer2008TinMUIT}.\\
	($h$) consequently, the contribution of typical sequences can be asymptotically made small if 
	$$	R_a+R_c \geq I(X;AC),\quad R_c\geq I(X;C).$$
	
	The second term converges to zero exponentially fast with $n$ \cite{kramer2008TinMUIT}, and following Pinsker's inequality \cite{pinsker1964information} we have  
	\begin{align*} 
	\mathbb E_\mathsf{C} \big[||\check{P}_{X^nJ}-Q^n_{X}P_J||_{\scriptscriptstyle TV}\big]  
	&\leq \mathbb E_\mathsf{C} \Big[\sqrt{2\mathbb{D}(\check{P}_{X^nJ}||Q^n_{X}P_J)}\;\Big] \notag\\
	&\leq \sqrt{2\mathbb E_\mathsf{C} \big[\mathbb{D}(\check{P}_{X^nJ}||Q^n_{X}P_J)\big]} \leq \sqrt{2\epsilon'}.
	\end{align*} 
\end{proof}

\section{Proof of Lemma~\ref{lemma3}}\label{Appnx:ProofLemma3}
We reuse the proof of \cite[Lemma 3]{chou2016empirical} with substitutions $q_{U^{1:N}} \leftarrow Q_{C^N},$ $q_{Y^{1:N}} \leftarrow Q_{X^N\!Y^N},$ $\tilde{p}_{U_i^{1:N}} \leftarrow \tilde{P}_{C_i^N},$ ${\tilde{p}_{Y_i^{1:N}}\!\leftarrow\!\tilde{P}_{Y_i^N\!X_i^N}},$ and $\bar{R}_1 \leftarrow \bar{J}_{1}$. This results in the Markov chain $X_{i-1}^N\!\widetilde{Y}_{i-1}^N-\bar{J}_{1}-X_i^N\!\widetilde{Y}_i^N$ replacing the chain in~\cite[Lemma~3]{chou2016empirical}.
\begin{proof}[Proof of Lemma~\ref{lemma3}]
\begin{equation}
\begin{split}\label{lemma3Part1}
	&H(U_{2}^N[{\cal V}_{C|XY}]|X^NY^N)- H(\widetilde{U}^N_{2_i}[{\cal V}_{C|XY}]|{X}_i^N\widetilde{Y}_i^N)\\
	&= H(U_{2}^N[{\cal V}_{C|XY}]X^NY^N)- H(\widetilde{U}^N_{2_i}[{\cal V}_{C|XY}]{X}_i^N\widetilde{Y}_i^N)- H(X^NY^N)+H({X}_i^N\widetilde{Y}_i^N)\\
	&\stackrel{(a)}{\leq}  \mathbb{D}(\tilde{P}_{U_{2_i}^N[{\cal V}_{C|XY}]X_i^NY_i^N} ||Q_{U_{2}^N[{\cal V}_{C|XY}]X^NY^N}) + N^3\log(|{\cal X}||{\cal Y}||{\cal C}|)\sqrt{2\ln 2}\sqrt{\mathbb{D}(\tilde{P}_{U_{2_i}^N[{\cal V}_{C|XY}]X_i^NY_i^N} ||Q_{U_{2}^N[{\cal V}_{C|XY}]X^NY^N})}\\
	&\hspace*{18ex} + \mathbb{D}(Q_{X^NY^N}||\tilde{P}_{X_i^NY_i^N}) + N^2\log(|{\cal X}||{\cal Y}|)\sqrt{2\ln 2}\sqrt{\mathbb{D}(\tilde{P}_{X_i^NY_i^N})||Q_{X^NY^N}}\\
	&\stackrel{(b)}{\leq}  \mathbb{D}(\tilde{P}_{U_{2_i}^NX_i^NY_i^N} ||Q_{U_{2}^NX^NY^N}) + N^3\log(|{\cal X}||{\cal Y}||{\cal C}|)\sqrt{2\ln 2}\sqrt{\mathbb{D}(\tilde{P}_{U_{2_i}^NX_i^NY_i^N} ||Q_{U_{2}^NX^NY^N})}\\
	&\hspace*{18ex} + \mathbb{D}(Q_{X^NY^N}||\tilde{P}_{X_i^NY_i^N}) + N^2\log(|{\cal X}||{\cal Y}|)\sqrt{2\ln 2}\sqrt{\mathbb{D}(\tilde{P}_{X_i^NY_i^N})||Q_{X^NY^N}}\\
	&\stackrel{(c)}{\leq}  \delta_{N}^{(1)}+ N^3\log(|{\cal X}||{\cal Y}||{\cal C}|)\sqrt{2\ln 2}\sqrt{\delta_{N}^{(1)}} -N\log(\mu_{XY})\sqrt{2\ln 2} \sqrt{\mathbb{D}(\tilde{P}_{X_i^NY_i^N}||Q_{X^NY^N})} \\
	&\hspace*{18ex}  + N^2\log(|{\cal X}||{\cal Y}|)\sqrt{2\ln 2}\sqrt{\mathbb{D}(\tilde{P}_{X_i^NY_i^N}||Q_{X^NY^N})}\\
	&\leq  \delta_{N}^{(1)}+ N^3\log(|{\cal X}||{\cal Y}||{\cal C}|)\sqrt{2\ln 2}\sqrt{\delta_{N}^{(1)}} -N\log(\mu_{XY})\sqrt{2\ln 2} \sqrt{\delta_{N}^{(1)}} + N^2\log(|{\cal X}||{\cal Y}|)\sqrt{2\ln 2}\sqrt{\delta_{N}^{(1)}}\\
	&\leq \hat{\delta}_{N}^{(3)}, 
\end{split}
\end{equation}

\noindent where \\
($a$) follows from \cite[Lemma 17]{chou2016empirical};\\
($b$) follows from the chain rule of KL divergence \cite{EoIT:2006};\\
($c$) follows from Lemma~\ref{lemma2} and \cite[Lemma 14]{chou2016empirical}.\\
\noindent Hence, for block $i \in \llbracket 2,k \rrbracket ,$ we have
\begin{equation*}
\begin{split}
\mathbb{D}(\tilde{P}_{X_{i-1:i}^NY_{i-1:i}^N\bar{J}_1} || \tilde{P}_{X_{i-1}^NY_{i-1}^N\bar{J}_1}\tilde{P}_{X_{i}^NY_{i}^N}) 
&=I({X}_{i-1}^N\widetilde{Y}_{i-1}^N\bar{J}_1; {X}_{i}^N\widetilde{Y}_{i}^N)\\
&=I({X}_{i}^N\widetilde{Y}_{i}^N;\bar{J}_1)+I({X}_{i-1}^N\widetilde{Y}_{i-1}^N; {X}_{i}^N\widetilde{Y}_{i}^N|\bar{J}_1)\\
&=I({X}_{i}^N\widetilde{Y}_{i}^N;\bar{J}_1)\\
&=I({X}_{i}^N\widetilde{Y}_{i}^N;U_{2_i}^N[{\cal V}_{C|XY}])\\
&= H(U_{2_i}^N[{\cal V}_{C|XY}])- H(\widetilde{U}^N_{2_i}[{\cal V}_{C|XY}]|{X}_{i}^N\widetilde{Y}_{i}^N)\\
&\stackrel{(a)}{\leq}  |{\cal V}_{C|XY}|\log (|{\cal C}|) -H(U_{2}^N[{\cal V}_{C|XY}]|X^NY^N)+ \hat{\delta}_{N}^{(3)}\\
&\stackrel{(b)}{\leq}  |{\cal V}_{C|XY}| - \sum_{j \in {\cal V}_{C|XY}} H(U_{2,j}|X^NY^NU^{j-1})+ \hat{\delta}_{N}^{(3)}\\
\end{split}
\end{equation*}
\begin{equation*}
\begin{split} 
&\leq |{\cal V}_{C|XY}| - |{\cal V}_{C|XY}|(1-\delta_{N}) + \hat{\delta}_{N}^{(3)}\\
&= |{\cal V}_{C|XY}| \delta_{N} + \hat{\delta}_{N}^{(3)}\\
&\leq N\delta_{N} + \hat{\delta}_{N}^{(3)}\\
&\leq {\delta}_{N}^{(3)},
\end{split}
\end{equation*}
\noindent where ($a$) follows from \eqref{lemma3Part1}, and ($b$) follows from the definition of the high entropy sets \eqref{eq:C_Sets}.
\end{proof}

\section{Proof of Lemma~\ref{lemma4}}\label{Appnx:ProofLemma4}
We reuse the proof of \cite[Lemma~4]{chou2016empirical} with substitutions $\tilde{p}_{Y_i^{1:N}} \leftarrow \tilde{P}_{X_i^NY_i^N},$ and $\bar{R}_1 \leftarrow \bar{J}_{1}$. This will result in the Markov chain $X_{1:i-2}^N\widetilde{Y}_{1:i-2}^N-\bar{J}_{1}X_{i-1}^N\widetilde{Y}_{i-1}^N-X_i^N\widetilde{Y}_i^N$, replacing the chain in \cite[Lemma~4]{chou2016empirical}. 
\begin{proof}[Proof of Lemma~\ref{lemma4}]
	\begin{equation*}
	\begin{split}
	\mathbb{D}\Big(\tilde{P}_{X_{1:k}^NY_{1:k}^N}||\prod_{i=1}^{k}\tilde{P}_{X_{i}^NY_{i}^N}\Big)
	&\stackrel{(a)}{=} \sum_{i=2}^{k} I({X}_{i}^N\widetilde{Y}_{i}^N;{X}_{1:i-1}^N\widetilde{Y}_{1:i-1}^N)\\
	&{\leq} \sum_{i=2}^{k} I({X}_{i}^N\widetilde{Y}_{i}^N;{X}_{1:i-1}^N\widetilde{Y}_{1:i-1}^N\bar{J}_1) \\
	&{=}\sum_{i=2}^{k} I({X}_{i}^N\widetilde{Y}_{i}^N;{X}_{i-1}^N\widetilde{Y}_{i-1}^N\bar{J}_1) +  I({X}_{i}^N\widetilde{Y}_{i}^N;{X}_{2:i-1}^N\widetilde{Y}_{2:i-1}^N|{X}_{i-1}^N\widetilde{Y}_{i-1}^N\bar{J}_1)  \\
	&\stackrel{(b)}{=} \sum_{i=2}^{k} \mathbb{D}(\tilde{P}_{X_{i-1:i}^NY_{i-1:i}^N\bar{J}_1} || \tilde{P}_{X_{i-1}^NY_{i-1}^N\bar{J}_1}\tilde{P}_{X_{i}^NY_{i}^N})\\
    &\stackrel{(c)}{\leq}  \sum_{i=2}^{k}\delta_{N}^{(3)}\\ 
	&= (k-1)\delta_{N}^{(3)},
	\end{split}
	\end{equation*}
	\noindent where \\
	($a$) follows from \cite[Lemma 15]{chou2016empirical}.\\
	($b$) holds by the Markov chain $X_{1:i-2}^N\widetilde{Y}_{1:i-2}^N-\bar{J}_{1}X_{i-1}^N\widetilde{Y}_{i-1}^N-X_i^N\widetilde{Y}_i^N$. \\
	($c$) follows from Lemma~\ref{lemma3}.
\end{proof}

\bibliographystyle{IEEEtran}
\bibliography{IEEEabrv,references}
\end{document}